\newtheorem{thm}{Theorem}[section]
\newtheorem{lem}[thm]{Lemma}
\newtheorem{cor}[thm]{Corollary}
\newtheorem{pro}[thm]{Proposition}
\newtheorem{ex}[thm]{Example}
\newtheorem{rmk}[thm]{Remark}
\newtheorem{defi}[thm]{Definition}
\newcommand {\emptycomment}[1]{}
\newcommand{\be }{\begin{equation}}
\newcommand{\ee }{\end{equation}}
\newcommand{\g}{\mathfrak g}
\newcommand{\h}{\mathfrak h}
\newcommand{\huaB}{\mathcal{B}}
\newcommand{\huaL}{\mathcal{L}}
\newcommand{\huaR}{\mathcal{R}}
\newcommand{\huaP}{\mathcal{P}}
\newcommand{\huaI}{\mathcal{I}}
\newcommand{\huaO}{{\mathcal{O}}}
\newcommand{\frkd}{\mathfrak d}
\newcommand{\Id}{{\rm{Id}}}
\newcommand{\br}[1]{   [ \cdot,    \cdot  ]   }
\newcommand{\End}{\mathrm{End}}
\newcommand{\gl}{\mathfrak {gl}}
\newcommand{\Img}{\mathrm{Im}}
\begin{document}

\title{Phase space of a Poisson algebra and the induced Pre-Poisson bialgebra}

\author{You Wang}
\address{Department of Mathematics, Jilin University, Changchun 130012, Jilin, China}
\email{wangyou20@mails.jlu.edu.cn}

\author{Yunhe Sheng}
\address{Department of Mathematics, Jilin University, Changchun 130012, Jilin, China}
\email{shengyh@jlu.edu.cn}


\begin{abstract}
    In this paper, we first introduce the notion of a phase space of a Poisson algebra, and show that a Poisson algebra has a phase space if and only if it is sub-adjacent to a pre-Poisson algebra. Moreover, we introduce the notion of Manin triples of pre-Poisson algebras and show that there is a one-to-one correspondence between Manin triples of pre-Poisson algebras and phase spaces of Poisson algebras. Then we introduce the notion of pre-Poisson bialgebras, which is equivalent to Manin triples of pre-Poisson algebras. We study   coboundary pre-Poisson bialgebras, which leads to an analogue of the classical Yang-Baxter equation. Furthermore, we introduce the notions of quasi-triangular and factorizable pre-Poisson bialgebras as special cases. A quasi-triangular pre-Poisson bialgebra gives rise to a relative Rota-Baxter operator of weight $1$. The double of a pre-Poisson bialgebra enjoys a natural factorizable pre-Poisson bialgebra structure. Finally, we introduce the notion of quadratic Rota-Baxter pre-Poisson algebras and show that there is a one-to-one correspondence between quadratic Rota-Baxter pre-Poisson algebras and factorizable pre-Poisson bialgebras. Based on this construction, we give a phase space for a Rota-Baxter symplectic Poisson algebra.
\end{abstract}

\renewcommand{\thefootnote}{}
\footnotetext{2020 Mathematics Subject Classification.
17B38, 
 17B62, 
 17B63
}

\keywords{Poisson algebra, phase space, pre-Poisson algebra, pre-Poisson bialgebra, Rota-Baxter operator}

\maketitle

\tableofcontents

\allowdisplaybreaks


\section{Introduction}
The purpose of this paper is to study phase spaces of Poisson algebras in terms of pre-Poisson algebras and pre-Poisson bialgebras. Moreover, we also develop the quasi-triangular and factorizable theories for pre-Poisson bialgebras.

\subsection{Poisson algebras and pre-Poisson algebras}

Poisson algebras were originally coming from the study of the Hamiltonian mechanics and then appeared in many fields in mathematics and mathematical physics. In particular, Poisson algebras are the algebraic structures corresponding to Poisson manifolds.
The notion of a pre-Poisson algebra was first introduced by Aguiar (\cite{Aguiar}), which combines a Zinbiel algebra and a pre-Lie algebra on the same vector space satisfying some compatibility conditions. The notion of Zinbiel algebras (also called dual Leibniz algebras) was introduced by Loday in \cite{Lod1}, and further studied in \cite{Liv,Lod2}. Pre-Lie
algebras are a class of nonassociative algebras coming from the study of convex homogeneous cones, affine manifolds and affine
structures on Lie groups, and  cohomologies of associative algebras.  They also appeared in many fields in
mathematics and mathematical physics, such as complex and symplectic structures on Lie groups and Lie algebras, integrable
systems, Poisson brackets and infinite dimensional Lie algebras, vertex algebras, quantum field theory and operads. See the survey
\cite{Bai-Review,Bu} for more details.  A pre-Poisson algebra naturally gives rise to its sub-adjacent Poisson algebra through the anti-commutator of Zinbiel algebras and the commutator of pre-Lie algebras. On the other hand, A Rota-Baxter operator of weight zero on a Poisson algebra can give rise to a pre-Poisson algebra (\cite{Aguiar}).
In \cite{LMS}, the authors introduced the notion of Hom-pre-Poisson algebras and discuss the relation with Hom-Poisson algebras.

\subsection{Phase spaces of Lie algebras, Manin triples of pre-Lie algebras and pre-Lie bialgebras}

In classical mechanics, the phase space is a symplectic manifold of positions and momenta. The phase space over a vector space represents an extension of this concept. The phase space of a vector space $V$ is the direct sum $T^*(V)=V\oplus V^*$ together with the natural nondegenerate skew-symmetric bilinear form
\begin{equation}\label{natural-skew-sym-form}
\omega(x+\xi,y+\eta)=\langle \xi,y \rangle-\langle\eta,x\rangle, \quad \forall x,y\in V,\xi,\eta\in V^*,
\end{equation}
where $\langle \cdot,\cdot \rangle$ is the ordinary pairing between $V$ and $V^*$. The notion of phase spaces of Lie algebras was introduced by   Kupershmidt (\cite{Ku-phase-space}). A non-abelian phase space of a Lie algebra $\g$ is a Lie algebra $T^*(\g)=\g \oplus \g^*$ together with the symplectic form $\omega$ given by \eqref{natural-skew-sym-form} such that $\g$ and $\g^*$ are Lie subalgebras. In \cite{Ku-phase-space,Ku2},   Kupershmidt studied phase spaces of certain Lie algebras, such as $\gl(V)$, vector fields on $\mathbf{R}^n$, the current algebras and the Virasoro algebras.

In \cite{Bai-phase-space}, Bai proved that a Lie algebra has a phase space if and only if it is sub-adjacent to a pre-Lie algebra. In \cite{Bai}, Bai introduced the notions of Manin triples of pre-Lie algebras and pre-Lie bialgebras, and showed that both of them are equivalent to the phase spaces of Lie algebras. A bialgebra theory for some algebraic structure usually contains two same algebraic structures on a vector space and its dual space satisfying certain compatibility conditions. Usually,  Manin triples and bialgebras are equivalent. There are many important bialgebras, such as Lie bialgebras (\cite{Drinfeld,Kosmann}), Hopf algebras (\cite{Mon}), infinitesimal bialgebras (\cite{Aguiar3,Bai-asso,Zhe}) and Poisson bialgebras (\cite{NB}). Certain subclasses of bialgebras, like quasi-triangular ones, triangular ones and factorizable ones, play important roles in the whole theory. In the context of Lie bialgebras, classical $r$-matrices give rise to  quasi-triangular Lie bialgebras (\cite{STS}). Moreover, factorizable Lie bialgebras, which are special quasi-triangular Lie bialgebras,  have important applications in integrable systems (\cite{RS,S2}). In \cite{HB}, the authors found that solutions of the S-equation in a pre-Lie algebra, which is an analogue of the classical Yang-Baxter equation can induce phase spaces of Lie algebras and all these phase spaces have a symplectically isomorphic property.



\subsection{Rota-Baxter characterization of factorizable bialgebras}

The notion of Rota-Baxter algebras was initiated by G. Baxter (\cite{Ba}) in his probability study to understand Spitzers identity in fluctuation theory. Rota-Baxter operators are broadly connected with mathematical physics, including the application of Connes-Kreimer's algebraic approach in the renormalization of perturbative quantum field theory (\cite{CK}). Rota-Baxter algebras are also closely related to double algebras (\cite{G3,Goncharov}), see \cite{Guo} for more details.  In the Lie algebra context, a Rota-Baxter operator was introduced independently in the 1980s as the operator form of the classical Yang-Baxter equation  (\cite{STS}). In order to better understand the relationship between the classical Yang-Baxter equation and the related integrable systems, Kupershmidt introduced an $\huaO$-operator on a Lie algebra (\cite{Ku}) as a more general notion.

Note that Rota-Baxter operators can be used to characterize certain bialgebras.  Goncharov established a correspondence between non-skewsymmetric solutions of the classical Yang-Baxter equation and  Rota-Baxter operators of nonzero weight for certain Lie algebras (\cite{G1,G2}). See \cite{BGN} for more general study. In \cite{Lang}, it showed that there is a one-to-one correspondence between factorizable Lie bialgebras and quadratic Rota-Baxter Lie algebras of nonzero weight. In \cite{WBLS}, factorizable pre-Lie bialgebras are characterized by quadratic Rota-Baxter pre-Lie algebras. In \cite{Wang}, factorizable Zinbiel bialgebras are related to  quadratic Rota-Baxter Zinbiel algebras.

\subsection{Main results and outline of the paper}

In this paper, we study phase spaces of a Poisson algebra $P$, which is a symplectic Poisson algebra structure on $P\oplus P^*$ such that $P$ and $P^*$ are Poisson subalgebras and the symplectic structure is given by \eqref{natural-skew-sym-form}. We show that a Poisson algebra has a phase space if and only if it is sub-adjacent to a pre-Poisson algebra. We introduce the notions of Manin triples of pre-Poisson algebras and enhance the above characterization to the equivalence between phase spaces of Poisson algebras and Manin triples of pre-Poisson algebras. Guided by the general philosophy for the equivalence between Manin triples and bialgebras, we further develop the theory of pre-Poisson bialgebras, and show that Manin triples of Poisson algebras and pre-Poisson bialgebras are equivalent. Quasi-triangular and factorizable pre-Poisson bialgebras are introduced and the former give rise to   relative Rota-Baxter operators of weight $1$.
We also give the Rota-Baxter characterization of factorizable pre-Poisson bialgebras. We introduce the notion of quadratic Rota-Baxter pre-Poisson algebras of weight $\lambda$ and show that there is a one-to-one correspondence between factorizable pre-Poisson  bialgebras and quadratic Rota-Baxter pre-Poisson algebras of nonzero weight. Consequently, a Rota-Baxter symplectic Poisson algebra naturally admits a phase space which is the subadjacent Poisson algebra of the double of a pre-Poisson bialgebra.

The paper is organized as follows. In Section \ref{sec:phase-space}, we introduce the notion of a phase space of a Poisson algebra, and characterize it using pre-Poisson algebras and Manin triples of pre-Poisson algebras. In Section \ref{sec:pre-Poisson-bialgebras}, we introduce the notion of a pre-Poisson bialgebra, which is  equivalent to a Manin triple of pre-Poisson algebras. In Section \ref{sec:quasi-fac}, we introduce the notions of quasi-triangular and factorizable pre-Poisson bialgebras as special coboundary pre-Poisson bialgebras. 
In Section \ref{sec:RB-fac}, we introduce the notion of quadratic Rota-Baxter pre-Poisson algebras and show that there is a one-to-one correspondence between quadratic Rota-Baxter pre-Poisson algebras and factorizable pre-Poisson bialgebras.

Through out this paper, we work over a field $\mathbb{K}$ and all the vector spaces and algebras are over $\mathbb{K}$ and finite-dimensional.

\section{The phase spaces of Poisson algebras and Manin triples of pre-Poisson algebras}\label{sec:phase-space}

In this section, we introduce the notions of a symplectic structure and a phase space of a Poisson algebra. Then we show that a Poisson algebra has a phase space if and only if it is sub-adjacent to a pre-Poisson algebra. Moreover, we introduce the notion of Manin triples of pre-Poisson algebras and show that there is one-to-one correspondence between phase spaces of Poisson algebras and Manin triples of pre-Poisson algebras.

First we recall the notion of Poisson algebras and pre-Poisson algebras.

\begin{defi}
A {\bf Poisson algebra} is a triple $(P,\cdot_P,\{\cdot,\cdot\}_P)$, where $(P,\cdot_P)$ is a commutative associative algebra and $(P,\{\cdot,\cdot\}_P)$ is a Lie algebra, such that the Leibniz rule holds:
$$ \{x,y\cdot_{P} z \}_P=\{x,y \}_P \cdot_{P} z+ y\cdot_{P} \{x, z \}_P, \quad \forall x,y,z\in P.$$
\end{defi}

Let $(P,\cdot_P,\{\cdot,\cdot\}_P)$ be a Poisson algebra, $V$ a vector space and $\varrho,\varsigma$ two linear maps. Recall that $(V;\varrho,\varsigma)$ is called a {\bf representation of a Poisson algebra} $(P,\cdot_P,\{\cdot,\cdot\}_P)$, if $(V;\varrho)$ is a representation of the Lie algebra $(P,\{\cdot,\cdot\}_P)$ and $(V;\varsigma)$ is a representation of the commutative associative algebra $(P,\cdot_P)$ and two linear maps $\varrho,\varsigma$ satisfy the following conditions:
\begin{eqnarray*}
\varrho(x \cdot_{P} y)&=&\varsigma(y)\varrho(x)+\varsigma(x)\varrho(y);\\
\varsigma(\{x,y\}_P)&=& \varrho(x)\varsigma(y)-\varsigma(y)\varrho(x), \quad \forall x,y\in P.
\end{eqnarray*}

In fact, $(V;\varrho,\varsigma)$ is a representation of a Poisson algebra $(P,\cdot_P,\{\cdot,\cdot\}_P)$ if and only if the direct sum $P\oplus V$ of vector spaces is a Poisson algebra with the multiplication $\cdot_{\ltimes}$ and $\{\cdot,\cdot\}_{\ltimes}$ given by
\begin{eqnarray*}
(x+u)\cdot_{\ltimes}(y+v)&=&x \cdot_P y+\varsigma(x)v+\varsigma(y)u;\\
\{x+u,y+v\}_{\ltimes}&=&\{x,y\}_P+\varrho(x)v-\varrho(y)u,
\end{eqnarray*}
for all $x,y\in P,u,v\in V$. We denote it by $P\ltimes_{\varrho,\varsigma} V$.

\begin{defi}{\rm(\cite{Aguiar,Loday})}
A {\bf Zinbiel algebra} $(A,\ast_A)$ is a vector space $A$ together with a bilinear operation $\ast_A:A \otimes A \to A$ such that
\begin{equation}\label{Zin-identity}
x\ast_A (y \ast_A z)=(x \ast_A y+y\ast_A x)\ast_A z, \quad \forall x,y,z \in A,
\end{equation}
or equivalently,
$$  (x,y,z)=(y\ast_A x)\ast_A z, \quad \forall x,y,z \in A, $$
where $(x,y,z)=x\ast_A (y \ast_A z)-(x \ast_A y)\ast_A z$ is the associator.
\end{defi}

Let $(A,\ast_A)$ be a Zinbiel algebra. Then the bilinear operation $\cdot_A:A \otimes A \to A$ defined by
\begin{equation}\label{Zin-com}
x \cdot_A y=x \ast_A y+y\ast_A x, \quad \forall  x,y \in A
\end{equation}
defines a commutative associative algebra $(A,\cdot_A)$, which is called the {\bf sub-adjacent commutative associative algebra} of $(A,\ast_A)$ and denoted by $A^c$. In this situation, $L_\ast:A \to \End(A)$ defined by $L_\ast (x)(y)=x\ast_A y$ gives a representation of the commutative associative algebra $A^c$ on $A$.

\begin{defi}
A {\bf pre-Lie algebra } $(\g,\circ_{\g})$ is a vector space $\g$ equipped with a bilinear multiplication $\circ_{\g}:\g\otimes \g \longrightarrow \g $ such that for any $x,y,z\in \g$, the associator $(x,y,z)=(x\circ_{\g}y)\circ_{\g}z-x\circ_{\g}(y\circ_{\g}z)$ is symmetric in $x,y,~i.e.$  $(x,y,z)=(y,x,z)$, or equivalently,
\begin{equation}\label{preLie-identity}
 (x\circ_{\g}y)\circ_{\g}z-x\circ_{\g}(y\circ_{\g}z)=(y\circ_{\g}x)\circ_{\g}z-y\circ_{\g}(x\circ_{\g}z).
\end{equation}
\end{defi}

Let $(\g,\circ_{\g})$ be a pre-Lie algebra. Then the commutator
\begin{equation}\label{preLie-com}
  [x,y]_{\g}:=x\circ_{\g}y-y\circ_{\g}x,\quad\forall x,y\in \g,
\end{equation}
defines a Lie algebra $(\g,[\cdot,\cdot]_{\g})$, which is called the {\bf sub-adjacent Lie algebra} of $(\g,\circ_{\g})$ and denoted by $\g^{c}$. Furthermore, $L_\circ:\g\rightarrow
\gl(\g)$ defined by $L_\circ (x)(y)=x\circ_\g y$ gives a representation of the Lie
algebra $\g^c$ on $\g$.

\begin{defi}{\rm(\cite{Aguiar})}
A {\bf pre-Poisson algebra} is a triple $(P,\ast_P,\circ_P)$, where $(P,\ast_P)$ is a Zinbiel algebra and
$(P,\circ_P)$ is a pre-Lie algebra, satisfying the following compatibility conditions:
\begin{eqnarray}
\label{pre-poisson1}(x \circ_P y-y \circ_P x)\ast_P z &=& x \circ_P (y \ast_P z)-y \ast_P (x \circ_P z);\\
\label{pre-poisson2}(x \ast_P y+y \ast_P x)\circ_P z &=& x\ast_P (y \circ_P z)+y \ast_P (x \circ_P z).
\end{eqnarray}
\end{defi}

\begin{pro}{\rm(\cite{Aguiar})}
Let $(P,\ast_P,\circ_P)$ be a pre-Poisson algebra. Then $(P,\cdot_P,\{\cdot,\cdot\}_P)$ is a Poisson algebra, where the Lie bracket $\{\cdot,\cdot\}_P$ and the commutative (associative) multiplication $\cdot_P$ are given by
\begin{eqnarray}\label{sub-adj-poisson}
x \cdot_P y=x \ast_P y+y\ast_P x,\quad\{x,y\}_{P}=x\circ_{P}y-y\circ_{P}x, \quad \forall x,y\in P.
\end{eqnarray}
\end{pro}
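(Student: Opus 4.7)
The plan is to verify the three axioms of a Poisson algebra for the pair of operations $(\cdot_P, \{\cdot,\cdot\}_P)$. Commutativity and associativity of $\cdot_P$ on one hand, and antisymmetry together with the Jacobi identity for $\{\cdot,\cdot\}_P$ on the other, are already recorded in the paper through \eqref{Zin-com} and \eqref{preLie-com}: the symmetrization of any Zinbiel product yields a commutative associative product, and the commutator of any pre-Lie product yields a Lie bracket. Hence the only substantive step is to verify the Leibniz rule
$$\{x, y\cdot_P z\}_P = \{x,y\}_P\cdot_P z + y\cdot_P \{x,z\}_P, \qquad \forall\, x,y,z\in P.$$

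To do this, I would expand the left-hand side directly to
$$x\circ_P(y\ast_P z) + x\circ_P(z\ast_P y) - (y\ast_P z)\circ_P x - (z\ast_P y)\circ_P x,$$
and, using $u\cdot_P v = u\ast_P v + v\ast_P u$, expand the right-hand side into four blocks: $B_1 = (x\circ_P y - y\circ_P x)\ast_P z$, $B_2 = z\ast_P(x\circ_P y - y\circ_P x)$, $B_3 = y\ast_P(x\circ_P z - z\circ_P x)$, and $B_4 = (x\circ_P z - z\circ_P x)\ast_P y$. By \eqref{pre-poisson1}, $B_1$ rewrites as $x\circ_P(y\ast_P z) - y\ast_P(x\circ_P z)$ and $B_4$ as $x\circ_P(z\ast_P y) - z\ast_P(x\circ_P y)$; the negative pieces cancel against the positive pieces of $B_2$ and $B_3$, leaving
$$x\circ_P(y\ast_P z) + x\circ_P(z\ast_P y) - \bigl[z\ast_P(y\circ_P x) + y\ast_P(z\circ_P x)\bigr].$$
A single application of \eqref{pre-poisson2} converts the bracketed expression into $(y\ast_P z + z\ast_P y)\circ_P x = (y\ast_P z)\circ_P x + (z\ast_P y)\circ_P x$, which matches the last two terms of the left-hand side and closes the verification.

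The only obstacle is organizational: correctly pairing the eight summands so that \eqref{pre-poisson1} is applied twice (once in each $\ast_P$ slot) and \eqref{pre-poisson2} is applied once on the symmetrized block. No identity beyond the two compatibility conditions enters; indeed, one may view \eqref{pre-poisson1} and \eqref{pre-poisson2} as the two Leibniz-type identities whose sum is precisely the Poisson Leibniz rule for the sub-adjacent operations, so the pre-Poisson axioms appear to have been designed exactly to make this verification tautological once the bookkeeping is set up.
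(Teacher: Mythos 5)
Your proposal is correct and takes essentially the same approach as the paper's own (source-suppressed) proof: both note that commutativity/associativity of $\cdot_P$ and the Lie axioms for $\{\cdot,\cdot\}_P$ are already guaranteed by the Zinbiel and pre-Lie structures, and both reduce the claim to the Leibniz rule, verified by two applications of \eqref{pre-poisson1} and one of \eqref{pre-poisson2}. The only difference is direction of travel — you expand the right-hand side and simplify to the left, while the paper expands $\{x,y\cdot_P z\}_P$ and regroups into the right-hand side — which is not a substantive distinction.
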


\emptycomment{
\begin{proof}
We only need to prove that Lie bracket $\{\cdot,\cdot\}_P$ and the commutative (associative) multiplication $\cdot_P$ satisfy the Leibniz rule. By \eqref{pre-poisson1} and \eqref{pre-poisson2}, we have
\begin{eqnarray*}
&&\{x,y\cdot_P z\}_P \\
&=&  \{x,y\ast_P z+z \ast_P y\}_P\\
&=&  x \circ_P (y \ast_P z+z \ast_P y)-(y \ast_P z+z\ast_P y)\circ_P x\\
&=& (x \circ_P y-y\circ_P x)\ast_P z+z \ast_P (x \circ_P y-y\circ_P x)+y \ast_P (x\circ_P z-z\circ_P x)+(x\circ_P z-z\circ_P x)\ast_P y\\
&=& (x \circ_P y-y\circ_P x)\cdot_P z+y\cdot_P (x \circ_P z-z \circ_P x)\\
&=& \{x,y \}_P \cdot_P z+ y\cdot_P \{x, z\}_P.
\end{eqnarray*}
Thus, $(P,\cdot_P,\{\cdot,\cdot\}_P)$ is a Poisson algebra.
\end{proof}
}

In this case, $(P,\cdot_P,\{\cdot,\cdot\}_P)$ is called the {\bf sub-adjacent Poisson algebra} of pre-Poisson algebra $(P,\ast_P,\circ_P)$, and denoted by $P^c$. $(P,\ast_P,\circ_P)$ is called a {\bf compatible pre-Poisson algebra} on the Poisson algebra $(P,\cdot_P,\{\cdot,\cdot\}_P)$.

\begin{pro}{\rm (\cite{Aguiar,LB})} \label{rep-def-pre-Poisson}
$(P,\ast_P,\circ_P)$ is a pre-Poisson algebra if and only if $(P,\cdot_P,\{\cdot,\cdot\}_P)$ given by \eqref{sub-adj-poisson} is a Poisson algebra and $(P;L_{\ast},L_{\circ})$ is a representation of $(P,\cdot_P,\{\cdot,\cdot\}_P)$.
\end{pro}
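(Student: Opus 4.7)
The plan is to establish the equivalence via a direct axiom-by-axiom dictionary: each of the four defining axioms of a pre-Poisson algebra on $(P,\ast_P,\circ_P)$ will be matched, under the definitions \eqref{Zin-com}, \eqref{preLie-com} and \eqref{sub-adj-poisson}, with exactly one piece of the data ``$(P,\cdot_P,\{\cdot,\cdot\}_P)$ is Poisson and $(P;L_\ast,L_\circ)$ is a representation.''

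First, I would note that by the definition of $L_\ast$, the Zinbiel identity \eqref{Zin-identity} reads $L_\ast(x\cdot_P y)=L_\ast(x)L_\ast(y)$, so it is equivalent to the statement that $L_\ast$ is a representation of the commutative associative algebra $(P,\cdot_P)$ on $P$ (commutativity of $\cdot_P$ is immediate from \eqref{Zin-com}, and associativity of $\cdot_P$ is a standard consequence of the Zinbiel identity). Likewise, the pre-Lie identity \eqref{preLie-identity} is exactly $L_\circ(\{x,y\}_P)=[L_\circ(x),L_\circ(y)]$ (together with antisymmetry and the Jacobi identity for $\{\cdot,\cdot\}_P$, both of which are standard consequences), so it is equivalent to saying that $L_\circ$ is a representation of the Lie algebra $(P,\{\cdot,\cdot\}_P)$ on $P$.

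Next, evaluating both sides of \eqref{pre-poisson1} at a general $z\in P$, the left-hand side becomes $L_\ast(\{x,y\}_P)(z)$ while the right-hand side becomes $\bigl(L_\circ(x)L_\ast(y)-L_\ast(y)L_\circ(x)\bigr)(z)$; thus \eqref{pre-poisson1} is equivalent to the compatibility axiom
\[
L_\ast(\{x,y\}_P)=L_\circ(x)L_\ast(y)-L_\ast(y)L_\circ(x).
\]
Similarly, \eqref{pre-poisson2} translates to
\[
L_\circ(x\cdot_P y)=L_\ast(y)L_\circ(x)+L_\ast(x)L_\circ(y),
\]
which is the other compatibility axiom in the definition of a representation of a Poisson algebra.

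With this dictionary in hand, the forward direction is almost immediate: starting from a pre-Poisson algebra, the preceding proposition supplies the sub-adjacent Poisson structure (including the Leibniz rule), and the four translations above yield the representation property. For the backward direction, the Zinbiel axiom, the pre-Lie axiom, and the two compatibility identities \eqref{pre-poisson1} and \eqref{pre-poisson2} are all recovered from the translations, so $(P,\ast_P,\circ_P)$ is a pre-Poisson algebra. The main point of care, rather than a genuine obstacle, will be to check that the axiom-by-axiom matching is indeed bijective: in particular, the Leibniz rule on the Poisson side has no direct counterpart among the four pre-Poisson axioms, but it is automatic in the forward direction by the preceding proposition and simply not needed in the backward direction.
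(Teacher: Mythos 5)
Your proposal is correct. Note that the paper itself offers no proof of this proposition --- it is quoted from \cite{Aguiar,LB} --- so there is nothing to compare against; but your axiom-by-axiom dictionary is exactly the standard argument and it checks out: evaluating \eqref{pre-poisson1} and \eqref{pre-poisson2} at $z$ does give $L_\ast(\{x,y\}_P)=L_\circ(x)L_\ast(y)-L_\ast(y)L_\circ(x)$ and $L_\circ(x\cdot_P y)=L_\ast(x)L_\circ(y)+L_\ast(y)L_\circ(x)$, which are precisely the two compatibility conditions in the paper's definition of a Poisson-algebra representation with $\varsigma=L_\ast$, $\varrho=L_\circ$, while \eqref{Zin-identity} and \eqref{preLie-identity} are the statements that $L_\ast$ and $L_\circ$ are representations of $(P,\cdot_P)$ and $(P,\{\cdot,\cdot\}_P)$ respectively. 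Your closing remark about the Leibniz rule --- needed in the forward direction (where it follows from the sub-adjacent Poisson algebra proposition) but not used in the backward direction --- is the right point of care and is handled correctly.
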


Recall that a {\bf Rota-Baxter operator of weight $\lambda$} on a Poisson algebra $(P,\cdot_P,\{\cdot,\cdot\}_P)$ is a linear map $\huaB:P \to P$ satisfying
\begin{eqnarray}
\label{eq:Poisson-RB1} \huaB(x)\cdot_P \huaB(y)&=&\huaB(\huaB(x)\cdot_P y+x \cdot_P \label{eq:Poisson-RB2} \huaB(y)+\lambda x \cdot_P y);\\
    \{\huaB(x),\huaB(y)\}_P&=&\huaB(\{\huaB(x), y\}_P+\{x, \huaB(y)\}_P+\lambda \{x,y\}_P),  \quad  \forall x,y\in P.
\end{eqnarray}

Next proposition shows that a Rota-Baxter operator of weight zero on a Poisson algebra can give rise to a pre-Poisson algebra.

\begin{pro}{\rm (\cite{Aguiar})}
Let $\huaB:P \to P$ be a Rota-Baxter operator of weight zero on a Poisson algebra $(P,\cdot_P,\{\cdot,\cdot\}_P)$. Define new operations $\ast_{\huaB}$ and $\circ_{\huaB} $ on $P$ by
$$ x\ast_{\huaB} y =\huaB(x) \cdot_P  y, \quad x\circ_{\huaB} y=\{\huaB(x),y\}_P, \quad \forall x,y\in P.$$
Then $(P,\ast_{\huaB},\circ_{\huaB})$ is a pre-Poisson algebra and $\huaB$ is a homomorphism from the sub-adjacent Poisson algebra $(P,\cdot_{\huaB},\{\cdot,\cdot\}_{\huaB})$ to $(P,\cdot_P,\{\cdot,\cdot\}_P)$, where $x \cdot_{\huaB} y =x\ast_{\huaB} y+y \ast_{\huaB} x$ and $\{x,y\}_{\huaB}=x\circ_{\huaB} y-y\circ_{\huaB} x.$
\end{pro}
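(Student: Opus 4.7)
The plan is to verify the defining axioms of a pre-Poisson algebra one by one for the operations $\ast_\huaB$ and $\circ_\huaB$, and then verify the two homomorphism conditions as a corollary. The main tool throughout is the weight-zero Rota-Baxter identities
\begin{eqnarray*}
\huaB(x)\cdot_P\huaB(y)&=&\huaB(\huaB(x)\cdot_P y+x\cdot_P\huaB(y)),\\
\{\huaB(x),\huaB(y)\}_P&=&\huaB(\{\huaB(x),y\}_P+\{x,\huaB(y)\}_P),
\end{eqnarray*}
which let us trade an inner application of $\huaB$ against an outer one.

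First I would check that $(P,\ast_\huaB)$ is a Zinbiel algebra. Expanding both sides of \eqref{Zin-identity}, the left side is $\huaB(x)\cdot_P(\huaB(y)\cdot_P z)$, while the right side, after applying the first Rota-Baxter identity to $\huaB(x)\cdot_P y+\huaB(y)\cdot_P x=\huaB(x)\cdot_P y+x\cdot_P\huaB(y)$ (using commutativity of $\cdot_P$), becomes $(\huaB(x)\cdot_P\huaB(y))\cdot_P z$. Associativity and commutativity of $\cdot_P$ then close the computation. Next I would check that $(P,\circ_\huaB)$ is a pre-Lie algebra: expanding \eqref{preLie-identity} and symmetrically rewriting $\huaB(\{\huaB(x),y\}_P)$ via the second Rota-Baxter identity, the alternating combination collapses to a Jacobi-type expression in $\{\cdot,\cdot\}_P$ applied to $\huaB(x),\huaB(y),z$, which vanishes by the Jacobi identity for $(P,\{\cdot,\cdot\}_P)$.

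The main obstacle is the pair of compatibility conditions \eqref{pre-poisson1} and \eqref{pre-poisson2}, because each side mixes $\ast_\huaB$ and $\circ_\huaB$ and therefore mixes the two Rota-Baxter identities with the Leibniz rule of the ambient Poisson algebra. My strategy is to rewrite both sides as expressions in $\huaB(x),\huaB(y),z$ together with single applications of $\huaB$ of the output. For \eqref{pre-poisson1} this reduces, after one application of each Rota-Baxter identity, to the Leibniz rule $\{\huaB(x),\huaB(y)\cdot_P z\}_P=\{\huaB(x),\huaB(y)\}_P\cdot_P z+\huaB(y)\cdot_P\{\huaB(x),z\}_P$ in $P$; for \eqref{pre-poisson2} it reduces similarly to Leibniz applied to $\{\huaB(z),\huaB(x)\cdot_P y+\huaB(y)\cdot_P x\}_P$, with the cross terms matching after use of commutativity of $\cdot_P$ and antisymmetry of $\{\cdot,\cdot\}_P$. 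Alternatively, by Proposition~\ref{rep-def-pre-Poisson} it suffices to show that $(P;L_{\ast_\huaB},L_{\circ_\huaB})$ is a representation of its sub-adjacent Poisson algebra, which may streamline the bookkeeping; I would carry out the direct verification first and cross-check against this representation-theoretic viewpoint.

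Finally, for the homomorphism statement, a direct computation gives
\[
\huaB(x\cdot_\huaB y)=\huaB(\huaB(x)\cdot_P y+\huaB(y)\cdot_P x)=\huaB(x)\cdot_P\huaB(y),
\]
using commutativity of $\cdot_P$ and the first Rota-Baxter identity, and
\[
\huaB(\{x,y\}_\huaB)=\huaB(\{\huaB(x),y\}_P-\{\huaB(y),x\}_P)=\huaB(\{\huaB(x),y\}_P+\{x,\huaB(y)\}_P)=\{\huaB(x),\huaB(y)\}_P,
\]
using antisymmetry of $\{\cdot,\cdot\}_P$ and the second Rota-Baxter identity. This shows that $\huaB:(P,\cdot_\huaB,\{\cdot,\cdot\}_\huaB)\to(P,\cdot_P,\{\cdot,\cdot\}_P)$ is a Poisson algebra homomorphism, completing the proof.
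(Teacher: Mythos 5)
Your proposal is correct and follows essentially the same route as the paper: the paper likewise takes the Zinbiel and pre-Lie axioms as easy to check, reduces the two compatibility conditions \eqref{pre-poisson1} and \eqref{pre-poisson2} to the Leibniz rule of $(P,\cdot_P,\{\cdot,\cdot\}_P)$ after a single application of the appropriate Rota-Baxter identity, and verifies the homomorphism property by the same two-line computation. One small correction: the reduction target for \eqref{pre-poisson2} is $\{\huaB(x)\cdot_P\huaB(y),z\}_P$, obtained by collapsing $\huaB\big(\huaB(x)\cdot_P y+\huaB(y)\cdot_P x\big)$ via the commutative Rota-Baxter identity, not $\{\huaB(z),\huaB(x)\cdot_P y+\huaB(y)\cdot_P x\}_P$ as written, since $z$ enters only as the undecorated second argument of $\circ_{\huaB}$.
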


\emptycomment{
\begin{proof}
It is easy to check that $(P,\ast_\huaB)$ is a Zinbiel algebra and $(P,\circ_\huaB)$ is a pre-Lie algebra. We only need to prove that operations $\ast_\huaB$ and $\circ_\huaB $ satisfy the compatibility conditions \eqref{pre-poisson1} and \eqref{pre-poisson2}. By a direct calculation, for all $x,y,z\in P$, we have
\begin{eqnarray*}
&&  (x \circ_\huaB y-y \circ_\huaB x)\ast_\huaB z -x \circ_\huaB (y \ast_\huaB z)+y \ast_\huaB (x \circ_\huaB z)\\
&=& \huaB(\{\huaB(x),y\}_P )\cdot_P z-\huaB(\{\huaB(y),x\}_P )\cdot_P z-\{\huaB(x),  \huaB(y)\cdot_P z \}_P+ \huaB(y)\cdot_P \{\huaB(x),  z \}_P\\
&=&  \{\huaB(x),\huaB(y)\}_P  \cdot_P z+\huaB(y) \cdot_P \{\huaB(x),z\}_P-\{\huaB(x),\huaB(y)\cdot_P z\}_P\\
&=& 0,
\end{eqnarray*}
and
\begin{eqnarray*}
&&(x \ast_\huaB y+y \ast_\huaB x)\circ_\huaB z-x\ast_\huaB (y \circ_\huaB z)-y \ast_\huaB (x \circ_\huaB z)\\
&=& \{\huaB(\huaB(x)\cdot_P y+\huaB(y)\cdot_P x),z \}_P-\huaB(x)\cdot_P \{\huaB(y),z\}_P-\huaB(y)\cdot_P \{\huaB(x),z\}_P\\
&=& \{\huaB(x)\cdot_P \huaB(y),z\}_P-\huaB(x)\cdot_P \{\huaB(y),z\}_P-\{\huaB(x),z\}_P \cdot_P \huaB(y)\\
&=& 0.
\end{eqnarray*}
Thus, $(P,\ast_\huaB,\circ_\huaB)$ is a pre-Poisson algebra. Moreover, we have
\begin{eqnarray*}
\huaB(x\cdot_\huaB y)&=& \huaB(x\ast_\huaB y+y \ast_\huaB x)= \huaB(\huaB(x) \cdot_P  y+\huaB(y) \cdot_P  x)= \huaB(x)\cdot_P \huaB(y);\\
\huaB(\{x, y\}_\huaB)&=& \huaB(x\circ_\huaB y-y\circ_\huaB x)= \huaB(\{\huaB(x),y\}_P-\{\huaB(y),x\}_P)= \{ \huaB(x),\huaB(y) \}_P,
\end{eqnarray*}
which implies that $\huaB$ is a homomorphism from the sub-adjacent Poisson algebra $(P,\cdot_\huaB,\{\cdot,\cdot\}_\huaB)$ to $(P,\cdot_P,\{\cdot,\cdot\}_P)$.
\end{proof}
}

Let us recall the notions of representations of Zinbiel algebras and pre-Lie algebras. Then we introduce the representations of pre-Poisson algebras.

\begin{defi}{\rm(\cite{Wang})}
A representation of a Zinbiel algebra $(A,\ast_A)$ is a triple $(V;\rho,\mu)$, where $V$ is a vector space and $\rho,\mu:A \to \gl(V)$ are linear maps such that the following equations hold for all $x,y \in A$,
\begin{eqnarray*}
&&\rho(x)\rho(y)=\rho(x\ast_A y)+\rho(y\ast_A x);\\
&&\rho(x)\mu(y)=\mu(x\ast_A y)=\mu(y)\rho(x)+\mu(y)\mu(x);
\end{eqnarray*}
\end{defi}

\begin{defi}{\rm(\cite{Bai})}
A representation of a pre-Lie algebra $(\g,\circ_{\g})$ is a triple $(V;\theta,\gamma)$, where $V$ is a vector space, $\theta:\g\longrightarrow\gl(V)$ is a representation of the sub-adjacent Lie algebra $\g^{c}$ on $V$ and $\gamma:\g\longrightarrow\gl(V)$ is a linear map satisfying
\begin{equation*}
  \theta(x)\gamma(y)-\gamma(y)\theta(x)=\gamma(x\circ_{\g}y)-\gamma(y)\gamma(x),\quad\forall x,y\in \g.
\end{equation*}
\end{defi}

\begin{defi}
A {\bf representation of a pre-Poisson algebra} $(P,\ast_P,\circ_P)$ is a quintuple $(V;\rho,\mu,\theta,\gamma)$, where $(V;\rho,\mu)$ is a representation of a Zinbiel algebra $(P,\ast_P)$ and $(V;\theta,\gamma)$ is a representation of a pre-Lie algebra $(P,\circ_P)$ such that
four linear maps $\rho,\mu,\theta,\gamma:P \to \gl(V)$ satisfy the following conditions:
\begin{eqnarray}
\label{Rep1}\rho(x\circ_P y-y\circ_P x)&=&\theta(x)\rho(y)-\rho(y)\theta(x);\\
\label{Rep2}\mu(x\circ_P y)&=&\mu(y)\gamma(x)-\mu(y)\theta(x)+\theta(x)\mu(y);\\
\label{Rep3}\mu(x\circ_P y)&=&\gamma(y)\rho(x)+\gamma(y)\mu(x)-\rho(x)\gamma(y);\\
\label{Rep4}\theta(x\ast_P y+y\ast_P x)&=&\rho(x)\theta(y)+\rho(y)\theta(x);\\
\label{Rep5}\gamma(x\ast_P y)&=& \rho(x)\gamma(y)+\mu(y)\gamma(x)-\mu(y)\theta(x),
\end{eqnarray}
for all $x,y\in P$.
\end{defi}

In fact, $(V;\rho,\mu,\theta,\gamma)$ is a representation of a pre-Poisson algebra $(P,\ast_P,\circ_P)$ if and only if the direct sum $P\oplus V$ of vector spaces is a pre-Poisson algebra with the multiplications $\ast_\ltimes$ and $\circ_\ltimes$ given by
\begin{eqnarray*}
 (x+u)\ast_\ltimes(y+v) &=&x \ast_P y+\rho(x)v+\mu(y)u;\\
 (x+u)\circ_\ltimes(y+v) &=&x \circ_P y+\theta(x)v+\gamma(y)u,
\end{eqnarray*}
for all $x,y \in P,u,v\in V.$ We denote it by $P\ltimes_{\rho,\mu,\theta,\gamma} V$ or simply by $P\ltimes V$.

\begin{ex}{\rm
Let $(P,\ast_P,\circ_P)$ be a pre-Poisson algebra. Then $(P;L_\ast,R_\ast,L_\circ,R_\circ)$ is a representation of $(P,\ast_P,\circ_P)$, which is called the {\bf regular representation}.
}
\end{ex}

If there is a pre-Poisson algebra structure on the dual space $P^*$, we denote the left and right multiplications
by $\huaL_\ast,\huaL_\circ$ and $\huaR_\ast,\huaR_\circ$  respectively.

Now we investigate the dual representation in the pre-Poisson algebra context, in order to relate pre-Poisson bialgebras and Manin triples of pre-Poisson algebras later.

\begin{pro}\label{dual-rep}
Let $(V;\rho,\mu,\theta,\gamma)$ be a representation of a pre-Poisson algebra $(P,\ast_P,\circ_P)$. Then
$$(V^{\ast};-\rho^{\ast}-\mu^{\ast},\mu^{\ast},\theta^{\ast}-\gamma^{\ast},-\gamma^{\ast})$$
is a representation of $(P,\ast_P,\circ_P)$, which is called the {\bf dual representation} of $(V;\rho,\mu,\theta,\gamma)$.
\end{pro}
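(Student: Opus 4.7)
The plan is to verify the five cross-compatibility identities \eqref{Rep1}-\eqref{Rep5} for the dual quintuple
\[
(V^{*};\tilde\rho,\tilde\mu,\tilde\theta,\tilde\gamma) := (V^{*};\,-\rho^{*}-\mu^{*},\,\mu^{*},\,\theta^{*}-\gamma^{*},\,-\gamma^{*}),
\]
after first reducing to this case by invoking known dual-representation results for each of the two underlying algebras separately.

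First I would observe that the ``Zinbiel part'' $(V^{*};-\rho^{*}-\mu^{*},\mu^{*})$ is already known to be a representation of the Zinbiel algebra $(P,\ast_P)$ by the corresponding dual-representation result in \cite{Wang}, and the ``pre-Lie part'' $(V^{*};\theta^{*}-\gamma^{*},-\gamma^{*})$ is known to be a representation of the pre-Lie algebra $(P,\circ_P)$ by the corresponding result in \cite{Bai}. This disposes of the Zinbiel-only and pre-Lie-only axioms of a pre-Poisson representation and reduces the problem to the five mixed identities \eqref{Rep1}-\eqref{Rep5} applied to $(V^{*};\tilde\rho,\tilde\mu,\tilde\theta,\tilde\gamma)$.

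Then I would verify the five compatibilities one at a time by direct dualization. The general template is: expand both sides of the target identity using the definitions of $\tilde\rho,\tilde\mu,\tilde\theta,\tilde\gamma$, apply $(fg)^{*}=g^{*}f^{*}$ to get an equation among the adjoints of $\rho,\mu,\theta,\gamma$, and then recognize the result as the formal dual of an appropriate $\mathbb{K}$-linear combination of \eqref{Rep1}-\eqref{Rep5}. For instance, the dual of \eqref{Rep1} becomes
\[
-\rho^{*}(x\circ_P y-y\circ_P x)-\mu^{*}(x\circ_P y-y\circ_P x) = (\theta^{*}(x)-\gamma^{*}(x))(-\rho^{*}(y)-\mu^{*}(y)) - (-\rho^{*}(y)-\mu^{*}(y))(\theta^{*}(x)-\gamma^{*}(x)),
\]
and after regrouping the eight terms on the right one matches them against the duals of \eqref{Rep1}, \eqref{Rep2} and \eqref{Rep3}. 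The other four compatibilities are handled by the same recipe, each typically requiring a combination of two or three of the original five identities.

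The main obstacle is purely bookkeeping: because $\tilde\rho$ and $\tilde\theta$ each mix two of the original maps and carry signs, and because composition reverses under dualization, a single dual compatibility is rarely the transpose of a single original one. One has to be disciplined about sign tracking and about which original identities to combine. No new conceptual ingredient is needed; as a sanity check one can equivalently verify that $P\oplus V^{*}$ equipped with the semidirect operations determined by $(\tilde\rho,\tilde\mu,\tilde\theta,\tilde\gamma)$ is a pre-Poisson algebra, but the direct term-by-term verification is the cleanest route.
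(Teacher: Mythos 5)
Your proposal is correct and follows essentially the same route as the paper: the paper likewise invokes the known dual-representation results for the Zinbiel part and the pre-Lie part, then checks the five mixed compatibilities \eqref{Rep1}--\eqref{Rep5} by pairing against elements and dualizing, with the verification of \eqref{Rep1} indeed reducing to a combination of \eqref{Rep1}, \eqref{Rep2} and \eqref{Rep3} exactly as you indicate. No gap.
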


\begin{proof}
Since $(V^{\ast};-\rho^{\ast}-\mu^{\ast},\mu^{\ast})$ and $(V^{\ast};\theta^{\ast}-\gamma^{\ast},-\gamma^{\ast})$ are the dual representations of $(V;\rho,\mu)$ and $(V;\theta,\gamma)$ when we consider the Zinbiel algebra $(P,\ast_P)$ and the pre-Lie algebra $(P,\circ_P)$ respectively, we only need to prove that four linear maps $-\rho^{\ast}-\mu^{\ast},\mu^{\ast},\theta^{\ast}-\gamma^{\ast},-\gamma^{\ast}$ satisfy the Equations \eqref{Rep1}-\eqref{Rep5}. For all $x,y,z\in P$ and $\xi\in P^*$, we have
\begin{eqnarray*}
&&\langle \Big((-\rho^{\ast}-\mu^{\ast})(x\circ_P y-y \circ_P x)+(\theta^*-\gamma^*)(x)(\rho^*+\mu^*)(y)-(\rho^*+\mu^*)(y)(\theta^*-\gamma^*)(x)\Big)\xi,z\rangle\\
&=& \langle \xi,(\rho+\mu)(x\circ_P y-y \circ_P x)z+(\rho+\mu)(y)(\theta-\gamma)(x)z- (\theta-\gamma)(x)(\rho+\mu)(y)z \rangle\\
&=& \langle \xi,\rho(x\circ_P y-y \circ_P x)z+\mu(x\circ_P y-y \circ_P x)z+\rho(y)\theta(x)z-\rho(y)\gamma(x)z+\mu(y)\theta(x)z\\
&&-\mu(y)\gamma(x)z-\theta(x)\rho(y)z+\gamma(x)\rho(y)z-\theta(x)\mu(y)z+\gamma(x)\mu(y)z \rangle\\
&=& 0,
\end{eqnarray*}
which implies that \eqref{Rep1} holds. Similarly, Equations \eqref{Rep2}-\eqref{Rep5} hold. Thus, $(V^{\ast};-\rho^{\ast}-\mu^{\ast},\mu^{\ast},\theta^{\ast}-\gamma^{\ast},-\gamma^{\ast})$
is a representation of $(P,\ast_P,\circ_P)$.
\end{proof}

\begin{ex}{\rm
Let $(P,\ast_P,\circ_P)$ be a pre-Poisson algebra. By Proposition \ref{dual-rep}, $(P^{\ast};-L_{\ast}^{\ast}-R_{\ast}^{\ast},R_{\ast}^{\ast},L_{\circ}^{\ast}-R_{\circ}^{\ast},
-R_{\circ}^{\ast})$ is a representation of $(P,\ast_P,\circ_P)$, which is called the {\bf coregular representation}, where four linear maps $L_{\ast}^{\ast},R_{\ast}^{\ast},L_{\circ}^{\ast},R_{\circ}^{\ast}:P\to \gl(P^{\ast})$ are respectively defined by
\begin{eqnarray*}
\langle L_{\ast}^{\ast} (x) (\xi),y \rangle&=&-\langle \xi, x\ast_P y \rangle, \quad \langle R_{\ast}^{\ast} (x) (\xi),y \rangle=-\langle \xi, y\ast_P x \rangle,\\
\langle L_{\circ}^{\ast} (x) (\xi),y \rangle&=&-\langle \xi, x\circ_P y \rangle, \quad \langle R_{\circ}^{\ast} (x) (\xi),y \rangle=-\langle \xi, y\circ_P x \rangle,\quad  \forall x,y\in A, \xi \in A^{\ast}.
\end{eqnarray*}
}
\end{ex}

Next we introduce the notions of a symplectic structure on a Poisson algebra and a phase space of a Poisson algebra.

\begin{defi}
A {\bf symplectic structure} on a Poisson algebra $(P,\cdot_P,\{\cdot,\cdot\}_P)$ is a non-degenerate skew-symmetric bilinear form $\omega\in \wedge^2 P^*$ which is a symplectic structure on the Lie algebra $(P,\{\cdot,\cdot\}_P)$ and a Connes cocycle on the commutative associative algebra $(P,\cdot_P)$, i.e.,
\begin{eqnarray}
\label{sym-1}\omega(\{x,y\}_P,z)+\omega(\{y,z\}_P,x)+\omega(\{z,x\}_P,y)&=&0,\\
\label{sym-2}\omega(x\cdot_P y,z)+\omega(y\cdot_P z,x)+\omega(z\cdot_P x,y)&=&0,
\end{eqnarray}
for all $x,y,z\in P$.
\end{defi}

\begin{pro}\label{sym-Poisson-pre-Poisson}
Let $(P,\cdot_P,\{\cdot,\cdot\}_P,\omega)$ be a symplectic Poisson algebra. Then there exists a compatible pre-Poisson algebra $(P,\ast_P,\circ_P)$ given by
\begin{eqnarray}
\label{Poisson-pre1} \omega(x\ast_P y,z)&=&\omega(y,x\cdot_P z);\\
\label{Poisson-pre2} \omega(x\circ_P y,z)&=&-\omega(y,\{x, z\}_P), \quad \forall x,y,z\in P.
\end{eqnarray}
\end{pro}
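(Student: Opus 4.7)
The plan is to use the non-degeneracy of $\omega$ in an essential way: relations \eqref{Poisson-pre1} and \eqref{Poisson-pre2} unambiguously define two bilinear operations $\ast_P$ and $\circ_P$ on $P$, and every axiom I have to verify can be tested by pairing the putative identity with an auxiliary element $w\in P$ through $\omega$ and then rewriting everything in terms of $\omega,\cdot_P,\{\cdot,\cdot\}_P$. First I would record two preliminary identities, namely $x\ast_P y+y\ast_P x=x\cdot_P y$ and $x\circ_P y-y\circ_P x=\{x,y\}_P$; each drops out by pairing with $w$ and using the Connes-type cocycle conditions \eqref{sym-2} and \eqref{sym-1} together with the skew-symmetry of $\omega$ and the (anti)commutativity of the corresponding Poisson operation. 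These identities already prove that the sub-adjacent Poisson structure of $(P,\ast_P,\circ_P)$ coincides with $(P,\cdot_P,\{\cdot,\cdot\}_P)$, which is the compatibility half of the statement.

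Next I would verify the Zinbiel identity \eqref{Zin-identity} and the pre-Lie identity \eqref{preLie-identity} separately. For Zinbiel: iterating \eqref{Poisson-pre1} twice turns $\omega\bigl(x\ast_P(y\ast_P z),w\bigr)$ into $\omega\bigl(z,(x\cdot_P y)\cdot_P w\bigr)$ via associativity and commutativity of $\cdot_P$, which matches $\omega\bigl((x\ast_P y+y\ast_P x)\ast_P z,w\bigr)$ by the symmetrization identity above. For pre-Lie: pairing the difference of the two associators $(x\circ_P y)\circ_P z-x\circ_P(y\circ_P z)$ and $(y\circ_P x)\circ_P z-y\circ_P(x\circ_P z)$ with $w$ and unfolding through \eqref{Poisson-pre2} produces exactly the Jacobi combination $-\{\{x,y\}_P,w\}_P+\{x,\{y,w\}_P\}_P-\{y,\{x,w\}_P\}_P$, which vanishes. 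Structurally these are the well-known facts that a Connes cocycle on a commutative associative algebra induces a compatible Zinbiel algebra and that a symplectic Lie algebra induces a compatible pre-Lie algebra, each half of the symplectic Poisson structure contributing one of them.

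The real content lies in the two compatibility conditions \eqref{pre-poisson1}-\eqref{pre-poisson2}, and the Poisson Leibniz rule is exactly what makes them work. Pairing the left-hand side of \eqref{pre-poisson1} with $w$ and using the preliminary identity for the pre-Lie commutator gives $\omega\bigl(z,\{x,y\}_P\cdot_P w\bigr)$, while unfolding the right-hand side through \eqref{Poisson-pre1} and \eqref{Poisson-pre2} gives $\omega\bigl(z,\{x,y\cdot_P w\}_P-y\cdot_P\{x,w\}_P\bigr)$; the Leibniz rule identifies the two expressions. A parallel computation reduces \eqref{pre-poisson2} to a second application of the Leibniz rule combined with commutativity of $\cdot_P$ and skew-symmetry of $\{\cdot,\cdot\}_P$. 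I expect the main obstacle to be bookkeeping rather than anything conceptual: four different dualization formulas are in play depending on which slot $w$ lands in, and keeping signs and variable positions aligned is where the calculation is fragile. Alternatively, once the Zinbiel and pre-Lie axioms are in place, one could package this last step by checking that $(P;L_\ast,L_\circ)$ is a representation of $(P,\cdot_P,\{\cdot,\cdot\}_P)$ and then invoking Proposition \ref{rep-def-pre-Poisson}, but the direct route is equally short.
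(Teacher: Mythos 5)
Your proposal is correct and follows essentially the same route as the paper: the paper likewise verifies the compatibility conditions \eqref{pre-poisson1}--\eqref{pre-poisson2} by pairing with an auxiliary element, unfolding through \eqref{Poisson-pre1}--\eqref{Poisson-pre2}, and invoking the Leibniz rule. The only difference is that the paper delegates the Zinbiel and pre-Lie halves to citations of \cite{LB} and \cite{Bai}, whereas you verify them directly by the same dualization technique.
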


\begin{proof}
From \cite[Propositon 2.7]{LB} and \cite[Theorem 2.2]{Bai}, we can obtain that $(P,\ast_P)$ is a Zinbiel algebra and $(P,\circ_P)$ is a pre-Lie algebra. We only need to prove that $(P,\ast_P)$ and $(P,\circ_P)$ satisfy the compatibility conditions \eqref{pre-poisson1} and \eqref{pre-poisson2} respectively. For all $x,y,z,t\in P$, by \eqref{Poisson-pre1} and \eqref{Poisson-pre2}, we have
\begin{eqnarray*}
&&\omega((x\circ_P y-y\circ_P x)\ast_P z-x\circ_P (y\ast_P z)+y\circ_P (x\ast_P z),t)\\
&=& \omega(z,(x\circ_P y-y\circ_P x)\cdot_P t)+\omega(y\ast_P z,\{x,t\}_P)+\omega(x\ast_P z,y\cdot_P t)\\
&=& \omega(z,\{x,y\}_P\cdot_P t+y\cdot_P \{x,t\}_P-\{x,y\cdot_P t\})\\
&=& 0,
\end{eqnarray*}
and
\begin{eqnarray*}
&&\omega((x\ast_P y+y\ast_P x)\circ_P z-x\ast_P (y\circ_P z)-y\ast_P (x\circ_P z),t)\\
&=& -\omega(z,\{x\cdot_P y,t\})-\omega(y\circ_P z,x\cdot_P t)-\omega(x\circ_P z,y\cdot_P t)\\
&=& \omega(z,\{t,x\cdot_P y\}_P +\{x,y\cdot_P t\}_P +\{y,t\cdot_P x\}_P )\\
&=& 0,
\end{eqnarray*}
which implies that $(P,\ast_P)$ and $(P,\circ_P)$ satisfy the compatibility conditions \eqref{pre-poisson1} and \eqref{pre-poisson2}. Thus, $(P,\ast_P,\circ_P)$ is a compatible  pre-Poisson algebra of a Poisson algebra $(P,\cdot_P,\{\cdot,\cdot\}_P)$.
\end{proof}

\begin{defi}
Let $(P,\cdot_P,\{\cdot,\cdot\}_P)$ be a Poisson algebra and $P^*$ its dual space.
If there is a Poisson algebra structure on the direct sum vector space $T^*P=P\oplus P^*$ such that $(T^*P,\cdot,\{\cdot,\cdot\},\omega)$ is a symplectic Poisson algebra, where $\omega$ is given by \eqref{natural-skew-sym-form}, and $(P,\cdot_P,\{\cdot,\cdot\}_P)$ and $(P^*,\cdot|_{P^*},\{\cdot,\cdot\}|_{P^*})$ are Poisson subalgebras of $(T^*P,\cdot,\{\cdot,\cdot\})$, then the symplectic Poisson algebra $(T^*P,\cdot,\{\cdot,\cdot\},\omega)$ is called {\bf a phase space of the Poisson algebra} $(P,\cdot_P,\{\cdot,\cdot\}_P)$.
\end{defi}

Pre-Poisson algebras play an important role in the study of phase spaces of Poisson algebras.

\begin{thm}\label{phasespace-subadj}
A Poisson algebra has a phase space if and only if it is sub-adjacent to a pre-Poisson algebra.
\end{thm}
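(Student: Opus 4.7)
\emph{Strategy.} The plan is to prove both implications by exploiting the correspondence between symplectic Poisson algebras and pre-Poisson algebras supplied by Proposition~\ref{sym-Poisson-pre-Poisson}. The forward implication extracts a pre-Poisson structure on $P$ from a given phase space; the reverse implication constructs a phase space on $T^*P$ from the coregular representation of the given pre-Poisson algebra.

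\emph{The ``only if'' direction.} Suppose $(T^*P,\cdot,\{\cdot,\cdot\},\omega)$ is a phase space of $(P,\cdot_P,\{\cdot,\cdot\}_P)$. By Proposition~\ref{sym-Poisson-pre-Poisson} there is a compatible pre-Poisson algebra $(T^*P,\ast,\circ)$ defined by \eqref{Poisson-pre1}--\eqref{Poisson-pre2}, whose sub-adjacent Poisson algebra is the given one. Since $\omega|_{P\times P}=0$ by \eqref{natural-skew-sym-form} and $\dim P=\tfrac12\dim T^*P$, the subspace $P$ is Lagrangian, i.e.~$P=P^\perp$. I would then verify that $P$ is closed under $\ast$ and $\circ$: for $x,y,w\in P$ the Poisson-subalgebra hypothesis yields $x\cdot w,\{x,w\}\in P$, whence
\[ \omega(x\ast y,w)=\omega(y,x\cdot w)=0,\qquad \omega(x\circ y,w)=-\omega(y,\{x,w\})=0, \]
forcing $x\ast y,\,x\circ y\in P^\perp=P$. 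Hence $(P,\ast|_P,\circ|_P)$ is a pre-Poisson algebra and, by \eqref{sub-adj-poisson}, its sub-adjacent Poisson structure is the restriction of $(T^*P,\cdot,\{\cdot,\cdot\})$ to $P$, which equals $(P,\cdot_P,\{\cdot,\cdot\}_P)$.

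\emph{The ``if'' direction.} Suppose $(P,\ast_P,\circ_P)$ is a pre-Poisson algebra with sub-adjacent Poisson algebra $(P,\cdot_P,\{\cdot,\cdot\}_P)$. By Proposition~\ref{dual-rep} the coregular representation $(P^{\ast};-L_{\ast}^{\ast}-R_{\ast}^{\ast},R_{\ast}^{\ast},L_{\circ}^{\ast}-R_{\circ}^{\ast},-R_{\circ}^{\ast})$ is a representation of $(P,\ast_P,\circ_P)$, and I would form the associated semidirect-product pre-Poisson algebra $P\ltimes P^*$ on $T^*P=P\oplus P^*$ and take its sub-adjacent Poisson algebra. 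In this Poisson structure, $P$ is automatically a Poisson subalgebra, and $P^*$ is a Poisson subalgebra with trivial multiplications because in the semidirect product $\xi\ast\eta=\xi\circ\eta=0$ for $\xi,\eta\in P^*$. What remains is to verify that the canonical pairing $\omega$ from \eqref{natural-skew-sym-form} satisfies the cocycle identities \eqref{sym-1} and \eqref{sym-2} (nondegeneracy is immediate).

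\emph{Main obstacle.} The heart of the argument is this symplectic verification. A direct computation using the induced sub-adjacent product $(x+\xi)\cdot(y+\eta)=x\cdot_P y-L_\ast^*(x)\eta-L_\ast^*(y)\xi$ together with the identity $\langle L_\ast^*(x)\xi,y\rangle=-\langle\xi,x\ast y\rangle$ reduces the cyclic sum in \eqref{sym-2} to three terms of the shape $\langle\xi_i,\,x_j\ast x_k+x_k\ast x_j-x_j\cdot_P x_k\rangle$, each vanishing by \eqref{Zin-com}; an analogous reduction handles \eqref{sym-1} via \eqref{preLie-com}. The calculation is elementary but sign-sensitive, and this is where most care is needed. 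Note that the pre-Poisson axioms \eqref{pre-poisson1}--\eqref{pre-poisson2} themselves play no explicit role in the cocycle verification: they were already consumed in invoking Proposition~\ref{dual-rep} to produce the pre-Poisson structure on $T^*P$.
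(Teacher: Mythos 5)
Your proof is correct and follows the paper's own argument in both directions: the converse via Proposition~\ref{sym-Poisson-pre-Poisson} together with the Lagrangian property of $P$, and the forward direction via a semidirect product on $P\oplus P^*$ whose Poisson structure is exactly the paper's $P^c\ltimes_{-L^*_\ast,L^*_\circ}P^*$, followed by the same cyclic-sum verification of \eqref{sym-1}--\eqref{sym-2}. The only cosmetic difference is that you build the semidirect product first at the pre-Poisson level from the coregular representation and then pass to the sub-adjacent Poisson algebra, whereas the paper forms the Poisson semidirect product directly from the representation $(P^*;-L^*_\ast,L^*_\circ)$; the two constructions yield the identical structure.
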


\begin{proof}
Let $(P,\ast_P,\circ_P)$ be a pre-Poisson algebra. By proposition \ref{rep-def-pre-Poisson}, $(P;L_{\ast},L_{\circ})$ is a representation of the sub-adjacent Poisson algebra $(P^c,\cdot_P,\{\cdot,\cdot\}_P)$. Then $(P^*;-L^*_{\ast},L^*_{\circ})$ is also a representation of $(P^c,\cdot_P,\{\cdot,\cdot\}_P)$. Thus, we have the semi-direct product Poisson algebra $(P^c\ltimes_{-L^*_{\ast},L^*_{\circ}} P^*,\cdot_{\ltimes},\{\cdot,\cdot\}_{\ltimes})$. By \eqref{natural-skew-sym-form}, for all $x,y,z\in P $ and $\xi,\eta,\zeta\in P^*$, we have
\begin{eqnarray*}
&&\omega(\{x+\xi,y+\eta\}_{\ltimes},z+\zeta)+\omega(\{y+\eta,z+\zeta\}_{\ltimes},x+\xi)
+\omega(\{z+\zeta,x+\xi\}_{\ltimes},y+\eta)\\
&=& \omega(\{x,y\}_P+L^*_{\circ}(x)\eta-L^*_{\circ}(y)\xi,z+\zeta )+ \omega(\{y,z\}_P+L^*_{\circ}(y)\zeta-L^*_{\circ}(z)\eta,x+\xi )\\
&&+\omega(\{z,x\}_P+L^*_{\circ}(z)\xi-L^*_{\circ}(x)\zeta,y+\eta )\\
&=& -\langle \eta,x\circ_P z \rangle+\langle \xi,y\circ_P z\rangle-\langle \theta,\{x,y\}_P \rangle-\langle \zeta,y\circ_P x \rangle+\langle \eta,z\circ_P x\rangle-\langle \xi,\{y,z\}_P \rangle-\langle \xi,z\circ_P y \rangle\\
&&+\langle \zeta,x\circ_P y\rangle-\langle \eta,\{z,x\}_P \rangle\\
&=& 0,
\end{eqnarray*}
and
\begin{eqnarray*}
&&\omega((x+\xi)\cdot_{\ltimes}(y+\eta),z+\zeta)+\omega((y+\eta)\cdot_{\ltimes}(z+\zeta),x+\xi)
+\omega((z+\zeta)\cdot_{\ltimes}(x+\xi),y+\eta)\\
&=& \omega(x\cdot_P y-L^*_*(x)\eta-L^*_*(y)\xi,z+\zeta)+\omega(y\cdot_P z-L^*_*(y)\zeta-L^*_*(z)\eta,x+\xi)\\
&&+\omega(z\cdot_P x-L^*_*(z)\xi-L^*_*(x)\zeta,y+\eta)\\
&=& \langle \eta,x\ast_P z\rangle+\langle \xi,y\ast_P z\rangle-\langle \zeta,x\cdot_P y\rangle
+\langle \zeta,y\ast_P x\rangle+\langle \eta,z\ast_P x\rangle-\langle \xi,y\cdot_P z\rangle
+\langle \xi,z\ast_P y\rangle\\
&&+\langle \zeta,x\ast_P y\rangle-\langle \eta,z\cdot_P x\rangle\\
&=& 0,
\end{eqnarray*}
which implies that $\omega$ is a symplectic structure on the semi-direct product Poisson algebra $(P^c\ltimes_{-L^*_{\ast},L^*_{\circ}} P^*,\cdot_{\ltimes},\{\cdot,\cdot\}_{\ltimes})$. Moreover, $(P^c,\cdot_P,\{\cdot,\cdot\}_P)$ is a subalgebra of $P^c\ltimes_{-L^*_{\ast},L^*_{\circ}} P^*$ and $P^*$ is a trivial Poisson subalgebra of  $P^c\ltimes_{-L^*_{\ast},L^*_{\circ}} P^*$. Thus, the symplectic Poisson algebra $(P^c\ltimes_{-L^*_{\ast},L^*_{\circ}} P^*,\cdot_{\ltimes},\{\cdot,\cdot\}_{\ltimes},\omega)$ is a phase space of the sub-adjacent Poisson algebra $(P^c,\cdot_P,\{\cdot,\cdot\}_P)$.

Conversely, let $(T^*P=P\oplus P^*,\cdot,\{\cdot,\cdot\},\omega)$  be a phase space of a Poisson algebra $(P,\cdot_P,\{\cdot,\cdot\}_P)$. By Proposition \ref{sym-Poisson-pre-Poisson}, there exists a compatible pre-Poisson algebra structure $(\ast,\circ)$ on $T^*P$ given by \eqref{Poisson-pre1} and \eqref{Poisson-pre2}. Since $(P,\cdot_P,\{
\cdot,\cdot\}_P)$ is a subalgebra of $T^*P$, we have
\begin{eqnarray*}
\omega(x\ast y,z)&=&\omega(y,x\cdot_P z)=0;\\
\omega(x\circ y,z)&=&-\omega(y,\{x, z\}_P)=0,\quad \forall x,y,z\in P.
\end{eqnarray*}
Thus, $x\ast y\in P$ and $x\circ y\in P$, which implies that $(P,\ast|_P,\circ|_P)$ is a subalgebra of the pre-Poisson algebra $(T^*P=P\oplus P^*,\ast,\circ)$. Its sub-adjacent Poisson algebra $(P^c,\cdot_c,\{\cdot,\cdot\}_c)$ is exactly the original Poisson algebra $(P,\cdot_P,\{\cdot,\cdot\}_P)$.
\end{proof}

\begin{cor}\label{phase-space-sub}
Let $(T^*P=P\oplus P^*,\cdot,\{\cdot,\cdot\},\omega)$  be a phase space of a Poisson algebra $(P,\cdot_P,\{\cdot,\cdot\}_P)$ and $(P\oplus P^*,\ast,\circ)$ the associated pre-Poisson algebra. Then both $(P,\ast|_P,\circ|_P)$ and $(P^*,\ast|_{P^*},\circ|_{P^*})$ are subalgebras of the pre-Poisson algebra $(P\oplus P^*,\ast,\circ)$.
\end{cor}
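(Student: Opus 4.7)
The plan is to mirror the second half of the proof of Theorem \ref{phasespace-subadj} after observing a key symmetry. That argument showed $(P,\ast|_P,\circ|_P)$ is closed by exploiting two facts: (i) the restriction of $\omega$ to $P\times P$ vanishes (an immediate consequence of formula \eqref{natural-skew-sym-form}); and (ii) $P$ is itself a Poisson subalgebra of $T^{*}P$, so $x\cdot_P z$ and $\{x,z\}_P$ stay in $P$ for $x,z\in P$. Combining these with \eqref{Poisson-pre1} and \eqref{Poisson-pre2} forced $x\ast y$ and $x\circ y$ to have no $P^{*}$-component. So the first assertion of the corollary is already established and I would simply cite it.

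The remaining content is the analogous statement for $P^{*}$, and here I would use exactly the same scheme with the roles of $P$ and $P^{*}$ interchanged. The corresponding analogue of (i) is that $\omega$ vanishes on $P^{*}\times P^{*}$, which is again immediate from \eqref{natural-skew-sym-form}. The analogue of (ii) is given by hypothesis: by the definition of a phase space, $(P^{*},\cdot|_{P^{*}},\{\cdot,\cdot\}|_{P^{*}})$ is a Poisson subalgebra of $T^{*}P$, so for any $\xi,\eta\in P^{*}$ and any test element $\zeta\in P^{*}$, both $\xi\cdot\zeta$ and $\{\xi,\zeta\}$ lie in $P^{*}$.

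With these ingredients in hand, for $\xi,\eta\in P^{*}$ and any $\zeta\in P^{*}$, I would apply \eqref{Poisson-pre1} and \eqref{Poisson-pre2} to compute
\begin{eqnarray*}
\omega(\xi\ast\eta,\zeta)&=&\omega(\eta,\xi\cdot\zeta)=0,\\
\omega(\xi\circ\eta,\zeta)&=&-\omega(\eta,\{\xi,\zeta\})=0.
\end{eqnarray*}
To conclude that $\xi\ast\eta$ and $\xi\circ\eta$ belong to $P^{*}$, I would write each as $a+\alpha$ with $a\in P$, $\alpha\in P^{*}$ and note that, by \eqref{natural-skew-sym-form}, pairing with a general $\zeta\in P^{*}$ yields $\omega(a+\alpha,\zeta)=-\langle\zeta,a\rangle$. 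Vanishing for all $\zeta\in P^{*}$ then forces $a=0$, so that $\xi\ast\eta,\xi\circ\eta\in P^{*}$, establishing that $(P^{*},\ast|_{P^{*}},\circ|_{P^{*}})$ is a pre-Poisson subalgebra.

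There is essentially no obstacle here; the proof is a short dualization of an argument already executed. The only point one must be careful about is the extraction of the $P$-component from an element of $T^{*}P$ via the non-degenerate form $\omega$, which is why testing against elements of $P^{*}$ (rather than $P$) is the correct move when verifying closure of $P^{*}$.
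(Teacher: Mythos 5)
Your proposal is correct and is exactly the argument the paper intends: the corollary is stated without proof precisely because it follows from the second half of the proof of Theorem \ref{phasespace-subadj} by the dualization you describe, using that $\omega$ vanishes on $P^*\times P^*$, that $P^*$ is a Poisson subalgebra, and that testing against all $\zeta\in P^*$ kills the $P$-component. No gaps.
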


\begin{cor}
Let $(T^*P=P\oplus P^*,\cdot,\{\cdot,\cdot\},\omega)$  be a phase space of a Poisson algebra $(P,\cdot_P,\{\cdot,\cdot\}_P)$ such that the Poisson algebra $(P\oplus P^*,\cdot,\{\cdot,\cdot\})$ is a semidirect product $P\ltimes_{-\alpha^*,\beta^*} P^*$, where
$(\alpha,\beta)$ is a representation of $(P,\cdot_P,\{\cdot,\cdot\}_P)$ on $P$ and $(-\alpha^*,\beta^*)$ is its
dual representation, then
$$x\ast_P y:=\alpha(x)y,\quad x\circ_P y:=\beta(x)y,\quad \forall x,y,z\in P,$$
defines a pre-Poisson algebra structure on $P$.
\end{cor}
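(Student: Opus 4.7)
The plan is to appeal to Proposition \ref{sym-Poisson-pre-Poisson} to install a compatible pre-Poisson structure $(\ast,\circ)$ on $T^{*}P$, then use Corollary \ref{phase-space-sub} to restrict to $P$, and finally match the restrictions $\ast|_{P}$ and $\circ|_{P}$ with $\alpha(x)y$ and $\beta(x)y$ via a direct pairing computation. Once this identification is made, the pre-Poisson axioms on $P$ descend automatically from those already produced for $T^{*}P$, so no compatibility identity on $(\alpha,\beta)$ needs to be checked by hand.

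To identify the restrictions, fix $x,y\in P$ and test against an arbitrary $\eta\in P^{*}$. In the semidirect product $P\ltimes_{-\alpha^{*},\beta^{*}}P^{*}$, the associative and Lie actions of $x$ on $\eta$ are $x\cdot\eta=-\alpha^{*}(x)\eta$ and $\{x,\eta\}=\beta^{*}(x)\eta$ respectively. Substituting into \eqref{Poisson-pre1}--\eqref{Poisson-pre2} and unfolding $\omega$ using \eqref{natural-skew-sym-form} and the dualization convention $\langle T^{*}(x)\xi,y\rangle=-\langle \xi,T(x)y\rangle$ employed throughout the paper, a short sign-tracking yields
\begin{eqnarray*}
\omega(x\ast y,\eta)&=&\omega(y,-\alpha^{*}(x)\eta)\;=\;-\langle \eta,\alpha(x)y\rangle,\\
\omega(x\circ y,\eta)&=&-\omega(y,\beta^{*}(x)\eta)\;=\;-\langle \eta,\beta(x)y\rangle.
\end{eqnarray*}
On the other hand, $x\ast y$ and $x\circ y$ both lie in $P$ by Corollary \ref{phase-space-sub}, so the left-hand sides also equal $-\langle \eta,x\ast y\rangle$ and $-\langle \eta,x\circ y\rangle$ respectively. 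Since $\eta\in P^{*}$ is arbitrary, we conclude $x\ast y=\alpha(x)y$ and $x\circ y=\beta(x)y$, which is the asserted pre-Poisson structure.

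The only delicate aspect of the argument is the sign bookkeeping, which intertwines the conventions built into $\omega$ via \eqref{natural-skew-sym-form} and into the dualization $T\mapsto T^{*}$. All of the substantive work --- verifying the Zinbiel, pre-Lie, and two cross-compatibility axioms \eqref{pre-poisson1}--\eqref{pre-poisson2} --- is already absorbed into Proposition \ref{sym-Poisson-pre-Poisson}, so no separate axiom checking is required and the corollary reduces to a bookkeeping exercise.
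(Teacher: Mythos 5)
Your proposal is correct and follows essentially the same route as the paper: invoke Proposition \ref{sym-Poisson-pre-Poisson} to get the compatible pre-Poisson structure on the phase space, note via Corollary \ref{phase-space-sub} that $P$ is closed under $\ast$ and $\circ$, and then identify $x\ast y=\alpha(x)y$ and $x\circ y=\beta(x)y$ by pairing against an arbitrary $\eta\in P^*$ using \eqref{Poisson-pre1}--\eqref{Poisson-pre2} and the sign conventions of $\omega$ and the dual maps. Your sign bookkeeping matches the paper's computation, so nothing further is needed.
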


\begin{proof}
For all $x,y,z\in h$ and $\xi\in h^*$, by \eqref{Poisson-pre1} and \eqref{Poisson-pre2}, we have
\begin{eqnarray*}
\langle \xi,x\ast_P y\rangle&=&-\omega(x\ast_P y,\xi)=-\omega(y,x\cdot_{P\oplus P^*} \xi)
=\omega(y,\alpha^*(x)\xi)
= -\langle\alpha^*(x)\xi, y\rangle=\langle  \xi,\alpha(x)y \rangle;\\
\langle \xi,x\circ_P y\rangle&=&-\omega(x\circ_P y,\xi)=\omega(y,\{x, \xi\}_{P\oplus P^*})
=\omega(y,\beta^*(x)\xi)
= -\langle\beta^*(x)\xi, y\rangle=\langle  \xi,\beta(x)y \rangle.
\end{eqnarray*}
Therefore, we have $x\ast_P y=\alpha(x)y$ and $x\circ_P y=\beta(x)y$.
\end{proof}

At the end of this section, we introduce the notion of a Manin triple of pre-Poisson algebras.

\begin{defi}
{\bf A quadratic pre-Poisson algebra} is a pre-Poisson algebra $(P,\ast_P,\circ_P)$ equipped with a nondegenerate skew-symmetric bilinear form $\omega\in \wedge^2 P^*$ such that the following invariant conditions hold:
\begin{eqnarray}
\label{quadratic1}\omega(x\ast_P y,z)&=&\omega(y,x\ast_P z+z\ast_P x);\\
\label{quadratic2}\omega(x\circ_P y,z)&=&-\omega(y,x\circ_P z-z\circ_P x), \quad \forall x,y,z\in P.
\end{eqnarray}
\end{defi}

\begin{rmk}
Actually, by \eqref{quadratic1} and \eqref{quadratic2}, the following equalities also hold:
\begin{equation}\label{quadratic3}
 \omega(x\ast_P y,z)=\omega(z\ast_P y,x),\quad \omega(x\circ_P y,z)=-\omega(z\circ_P y,x), \quad \forall x,y,z\in P.
\end{equation}
\end{rmk}

\begin{defi}
A {\bf Manin triple of pre-Poisson algebras} is a triple $(\huaP,P,P')$, where
\begin{itemize}
\item[{\rm(i)}] $(\huaP,\ast_{\huaP},\circ_{\huaP},\omega)$ is a quadratic pre-Poisson algebra.

\item[{\rm(ii)}] both $P$ and $P'$ are isotropic subalgebras of $(\huaP,\ast_{\huaP},\circ_{\huaP})$ with respect to $\omega$.

\item[{\rm(iii)}]$\huaP=P\oplus P'$ as vector spaces.
\end{itemize}
\end{defi}

\begin{ex}
Let $(P,\ast_P,\circ_P)$ be a pre-Poisson algebra. Then $(P\ltimes_{-L_{\ast}^{\ast}-R_{\ast}^{\ast},R_{\ast}^{\ast},L_{\circ}^{\ast}-R_{\circ}^{\ast},
-R_{\circ}^{\ast}} P^*,P,P^*)$ is a Manin triple of pre-Poisson algebras, where the natural nondegenerate skew-symmetric bilinear form $\omega$ on $P\oplus P^*$ is given by \eqref{natural-skew-sym-form}.
\end{ex}

There is a one-to-one correspondence between Manin triples of pre-Poisson algebras and phase spaces of Poisson algebras.

\begin{thm}\label{phase-space-Manin-triple}
If $(P\oplus P^*,P,P^*)$ is a Manin triple of pre-Poisson algebras, then $((P \oplus P^*)^c,\cdot,\{\cdot,\cdot\},\omega)$ is a phase space of the Poisson algebra $(P^c,\cdot_P,\{\cdot,\cdot\}_P)$.

Conversely, if $(P\oplus P^*,\cdot_{P\oplus P^*},\{\cdot,\cdot\}_{P\oplus P^*},\omega)$ is a phase spaces of a Poisson algebra $(P,\cdot_P,\{\cdot,\cdot\}_P)$, then $(P\oplus P^*,P,P^*)$ is a Manin triple of pre-Poisson algebras, where the pre-Poisson algebra structure on $P\oplus P^*$ is given by \eqref{Poisson-pre1} and \eqref{Poisson-pre2}.
\end{thm}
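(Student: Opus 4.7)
The plan is to leverage the constructions already in place, namely Proposition~\ref{sym-Poisson-pre-Poisson}, Theorem~\ref{phasespace-subadj}, and Corollary~\ref{phase-space-sub}, so that both directions reduce to the observation that the invariance axioms of a quadratic pre-Poisson algebra match the symplectic axioms of its sub-adjacent Poisson algebra. For the forward direction, I would take the quadratic pre-Poisson algebra $(P\oplus P^{*},\ast_{\huaP},\circ_{\huaP},\omega)$ and pass to its sub-adjacent Poisson algebra $(P\oplus P^{*})^{c}$ via \eqref{sub-adj-poisson}. Since $P$ and $P^{*}$ are subalgebras of the pre-Poisson algebra, the commutator/anticommutator formulas immediately make them Poisson subalgebras of $(P\oplus P^{*})^{c}$, and the restriction to $P$ reproduces the original Poisson algebra $(P^{c},\cdot_{P},\{\cdot,\cdot\}_{P})$. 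The remaining task is to verify that $\omega$ is a symplectic structure on $(P\oplus P^{*})^{c}$, that is, to deduce \eqref{sym-1}--\eqref{sym-2} from \eqref{quadratic1}--\eqref{quadratic2}.

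The core identity, via \eqref{quadratic2}, is
\[
\omega(\{x,y\},z)=\omega(x\circ_{\huaP}y,z)-\omega(y\circ_{\huaP}x,z)=-\omega(y,\{x,z\})+\omega(x,\{y,z\}),
\]
so the cyclic sum $\omega(\{x,y\},z)+\omega(\{y,z\},x)+\omega(\{z,x\},y)$ collapses to $0$, yielding \eqref{sym-1}. An analogous computation using \eqref{quadratic1} together with the commutativity of $\cdot$ gives $\omega(x\cdot y,z)=\omega(y,x\cdot z)+\omega(x,y\cdot z)$ and hence, after cyclic summation, \eqref{sym-2}. Thus $((P\oplus P^{*})^{c},\cdot,\{\cdot,\cdot\},\omega)$ is a phase space of $(P^{c},\cdot_{P},\{\cdot,\cdot\}_{P})$.

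For the converse, I would start from a phase space $(P\oplus P^{*},\cdot_{P\oplus P^{*}},\{\cdot,\cdot\}_{P\oplus P^{*}},\omega)$. Proposition~\ref{sym-Poisson-pre-Poisson} produces a compatible pre-Poisson algebra $(P\oplus P^{*},\ast,\circ)$ defined by \eqref{Poisson-pre1}--\eqref{Poisson-pre2}, and Corollary~\ref{phase-space-sub} already yields that both $P$ and $P^{*}$ are pre-Poisson subalgebras. Isotropy of $P$ and $P^{*}$ with respect to $\omega$ is immediate from \eqref{natural-skew-sym-form}, and $P\oplus P^{*}$ is the required vector space decomposition. The invariance conditions \eqref{quadratic1}--\eqref{quadratic2} then follow directly from \eqref{Poisson-pre1}--\eqref{Poisson-pre2} by substituting $x\ast z+z\ast x=x\cdot z$ and $x\circ z-z\circ x=\{x,z\}$ on the right-hand sides. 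The main obstacle, such as it is, lies only in keeping straight the dictionary between the two sets of axioms; since the previously established results supply all the structural content, the proof reduces to an essentially formal translation in both directions.
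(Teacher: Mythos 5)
Your proof is correct, and the converse direction coincides with the paper's (invoke Proposition~\ref{sym-Poisson-pre-Poisson} and Corollary~\ref{phase-space-sub}, note isotropy, read off \eqref{quadratic1}--\eqref{quadratic2} from \eqref{Poisson-pre1}--\eqref{Poisson-pre2}). The forward direction, however, is genuinely different from the paper's. The paper writes out the sub-adjacent Poisson operations on $(P\oplus P^*)^c$ explicitly in terms of the coadjoint-type actions $L^*_\circ,\huaL^*_\circ,L^*_*,\huaL^*_*$ and verifies the cyclic conditions \eqref{sym-1}--\eqref{sym-2} by a long term-by-term expansion against the canonical form \eqref{natural-skew-sym-form}. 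You instead derive them abstractly from the invariance axioms of the quadratic pre-Poisson algebra: from \eqref{quadratic2} one gets $\omega(\{x,y\},z)=-\omega(y,\{x,z\})+\omega(x,\{y,z\})=-\omega(\{z,x\},y)-\omega(\{y,z\},x)$, which is precisely the vanishing of the cyclic sum, and the analogous manipulation with \eqref{quadratic1} and commutativity gives \eqref{sym-2}. This is shorter, avoids any coordinates, and in fact proves the stronger statement that the sub-adjacent Poisson algebra of \emph{any} quadratic pre-Poisson algebra is symplectic (a clean converse to Proposition~\ref{sym-Poisson-pre-Poisson}), whereas the paper's computation is tied to the specific Manin-triple form of the bracket. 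The only point you share with the paper in glossing over is the identification of the Manin triple's bilinear form with the canonical $\omega$ of \eqref{natural-skew-sym-form}, which is needed for the phase-space definition and is implicit in the theorem's statement.
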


\begin{proof}
Let $(P\oplus P^*,P,P^*)$ be a Manin triple of pre-Poisson algebras. Denote by $(\ast_P,\circ_P)$ and $(\ast_{P^*},\circ_{P^*})$ the pre-Poisson algebra structure on $P$ and $P^*$ respectively, and denote by $(\cdot_P,\{\cdot,\cdot\}_P)$ and $(\cdot_{P^*},\{\cdot,\cdot\}_{P^*})$ the corresponding sub-adjacent Poisson structure on $P^c$ and $(P^*)^c$ respectively. It is straightforward to deduce that the corresponding Poisson algebra structure $(\cdot,\{\cdot,\cdot\})$ on the $(P \oplus P^*)^c$ is given by
\begin{eqnarray*}
(x+\xi)\cdot(y+\eta)&=&x\cdot_P y-\huaL^*_*(\xi)y-\huaL^*_*(\eta)x+\xi\cdot_{P^*} \eta-L^*_*(x)\eta-L^*_*(y)\xi;\\
\{x+\xi,y+\eta\}&=&\{x, y\}_P+\huaL^*_{\circ}(\xi)y-\huaL^*_{\circ}(\eta)x+\{\xi,\eta\}_{P^*} +L^*_{\circ}(x)\eta-L^*_{\circ}(y)\xi.
\end{eqnarray*}
For all $x,y,z\in P$ and $\xi,\eta,\zeta\in P^*$, we have
\begin{eqnarray*}
&&\omega(\{x+\xi,y+\eta\},z+\zeta)+\omega(\{y+\eta,z+\zeta\},x+\xi)+\omega(\{z+\zeta, x+\xi\},y+\eta)\\
&=&\omega(\{x, y\}_P+\huaL^*_{\circ}(\xi)y-\huaL^*_{\circ}(\eta)x+\{\xi,\eta\}_{P^*} +L^*_{\circ}(x)\eta-L^*_{\circ}(y)\xi,z+\zeta)
+\omega(\{y, z\}_P+\huaL^*_{\circ}(\eta)z\\
&&-\huaL^*_{\circ}(\zeta)y+\{\eta,\zeta\}_{P^*} +L^*_{\circ}(y)\zeta-L^*_{\circ}(z)\eta,x+\xi)
+\omega(\{z, x\}_P+\huaL^*_{\circ}(\zeta)x-\huaL^*_{\circ}(\xi)z+\{\zeta,\xi\}_{P^*}\\ &&+L^*_{\circ}(z)\xi-L^*_{\circ}(x)\zeta,y+\eta)\\
&=& \langle \{\xi,\eta\}_{P^*} +L^*_{\circ}(x)\eta-L^*_{\circ}(y)\xi,z \rangle-\langle\zeta, \{x, y\}_P+\huaL^*_{\circ}(\xi)y-\huaL^*_{\circ}(\eta)x \rangle
+\langle \{\eta,\zeta\}_{P^*} +L^*_{\circ}(y)\zeta\\
&&-L^*_{\circ}(z)\eta,x\rangle-\langle \xi,\{y, z\}_P+\huaL^*_{\circ}(\eta)z-\huaL^*_{\circ}(\zeta)y \rangle
+\langle \{\zeta,\xi\}_{P^*} +L^*_{\circ}(z)\xi-L^*_{\circ}(x)\zeta,y \rangle\\
&&-\langle \eta,\{z, x\}_P+\huaL^*_{\circ}(\zeta)x-\huaL^*_{\circ}(\xi)z \rangle\\
&=& \langle \{\xi,\eta\}_{P^*},z \rangle-\langle \eta, x\circ_P z\rangle+\langle \xi, y\circ_P z \rangle-\langle \zeta, \{x,y\}_P \rangle+\langle \xi\circ_{P^*} \zeta, y\rangle-\langle\xi \circ_{P^*} \eta, z\rangle+\langle \{\eta,\zeta\}_{P^*},x \rangle\\
&&-\langle \zeta, y\circ_P x\rangle+\langle \eta, z\circ_P x \rangle-\langle \xi, \{y,z\}_P \rangle+\langle \eta \circ_{P^*} \xi, z\rangle-\langle\eta \circ_{P^*} \zeta, x\rangle
+\langle \{\zeta,\xi \}_{P^*},y \rangle\\
&&-\langle \xi, z\circ_P y\rangle+\langle \zeta, x\circ_P y \rangle-\langle \eta, \{z,x\}_P \rangle+\langle \zeta \circ_{P^*} \eta, x\rangle-\langle \zeta \circ_{P^*} \xi, y\rangle\\
&=& 0.
\end{eqnarray*}
Similarly, by a direct calculation, we also have
$$\omega((x+\xi)\cdot(y+\eta),z+\zeta)+\omega((y+\eta)\cdot(z+\zeta),x+\xi)+\omega((z+\zeta)\cdot( x+\xi),y+\eta)=0.$$
Thus, we deduce that $\omega$ is a symplectic structure on the Poisson algebra $((P \oplus P^*)^c,\cdot,\{\cdot,\cdot\})$. Therefore, it is a phase space.

Conversely, let $(P\oplus P^*,\cdot_{P\oplus P^*},\{\cdot,\cdot\}_{P\oplus P^*},\omega)$ be a phase space of a Poisson algebra $(P,\cdot_P,\{\cdot,\cdot\}_P)$. By Proposition \ref{sym-Poisson-pre-Poisson}, there exists a pre-Poisson algebra $(\ast,\circ)$ on $P\oplus P^*$ given by \eqref{Poisson-pre1} and \eqref{Poisson-pre2} such that $(P\oplus P^*,\ast,\circ,\omega)$ is a quadratic pre-Poisson algebra. By Corollary \ref{phase-space-sub},
$(P,\ast|_P,\circ|_P)$ and $(P^*,\ast|_{P^*},\circ|_{P^*})$ are subalgebras of the pre-Poisson algebra $(P\oplus P^*,\ast,\circ)$. It is obvious that both $P$ and $P^*$ are isotropic. Therefore, $(P\oplus P^*,P,P^*)$ is a Manin triple of pre-Poisson algebras.
\end{proof}

\section{Pre-Poisson bialgebras and the pre-Poisson Yang-Baxter equation}\label{sec:pre-Poisson-bialgebras}

In this section, we first introduce the notion of pre-Poisson bialgebras. Then we prove that pre-Poisson bialgebras, Manin triples of pre-Poisson algebras and phase spaces of the sub-adjacent Poisson algebras are equivalent. Moreover, we introduce the notion of coboundary pre-Poisson bialgebras, which leads to an analogue of the classical Yang-Baxter equation.

\emptycomment{
\subsection{Matched pairs of pre-Poisson algebras}
Let us recall the notions of matched pairs of Zinbiel algebras and matched pairs of pre-Lie algebras. Then we introduce the notion of matched pairs of pre-Poisson algebras.

\begin{defi}{\rm(\cite{Wang})}
Let $(A,\ast_A)$ and $(B,\ast_B)$ be two Zinbiel algebras. If there exists a representation $(\rho,\mu)$ of A on B and a representation $(\rho',\mu')$ of B on A satisfying the following equations:
\begin{eqnarray*}
u\ast_B(\mu(x)v)+\mu(\rho'(v)x)u&=&\mu(x)(u\ast_B v+v\ast_B u);\\
\rho(x)(u \ast_B v)-(\rho(x)u)\ast_B v&=&\rho(\mu'(u)x)v+(\mu(x)u)\ast_B v+\rho(\rho'(u)x)v;\\
u\ast_B(\rho(x)v)+\mu(\mu'(v)x)u-(\mu(x)u)\ast_B v&=&(\rho(x)u)\ast_B v+\rho(\mu'(u)x)v+\rho(\rho'(u)x)v;\\
x\ast_A(\mu'(u)y)+\mu'(\rho(y)u)x&=&\mu'(u)(x\ast_A y+y\ast_A x);\\
\rho'(u)(x \ast_A y)-(\rho'(u)x)\ast_A y&=&\rho'(\mu(x)u)y+(\mu'(u)x)\ast_A y+\rho'(\rho(x)u)y;\\
x\ast_A(\rho'(u)y)+\mu'(\mu(y)u)x-(\mu'(u)x)\ast_A y&=&(\rho'(u)x)\ast_A y+\rho'(\mu(x)u)y+\rho'(\rho(x)u)y,
\end{eqnarray*}
for all $x,y\in A$ and $u,v\in B$. Then we call $(A,B;(\rho,\mu),(\rho',\mu'))$ a {\bf matched pair of Zinbiel algebras}.
\end{defi}

\begin{pro}\label{mp-Zinbiel-algs}{\rm(\cite{Wang})}
Let $(A,B;(\rho,\mu),(\rho',\mu'))$ be a matched pair of Zinbiel algebras. Then there exists a Zinbiel algebra structure on $A\oplus B$ defined by
\begin{equation*}
(x+u)\ast_{\bowtie}(y+v)=x\ast_A y+\rho'(u)y+\mu'(v)x+u\ast_B v+\rho(x)v+\mu(y)u.
\end{equation*}
Denote this Zinbiel algebra by $A \bowtie_{\rho',\mu'}^{~\rho,\mu} B,$ or simply by
$A \bowtie B.$

Conversely, if $(A \oplus B,\ast_{\bowtie})$ is a Zinbiel algebra such
that $A$ and $B$ are Zinbiel subalgebras, then
$(A,B;(\rho,\mu),(\rho',\mu'))$ is a matched pair of Zinbiel
algebras, where the representation $(\rho,\mu)$ of $A$ on $B$
and the representation $(\rho',\mu')$ of $B$ on $A$ are
determined  by
$$ x\ast_{\bowtie} u=\rho(x)u+\mu'(u)x,\quad u \ast_{\bowtie} x=\mu(x)u+\rho'(u)x,\quad  \forall x\in A, u\in B. $$
\end{pro}

\begin{defi}{\rm (\cite{Bai})}\label{defi mp of pre-Lie alg}
A {\bf matched pair of pre-Lie algebras} consists of a pair of pre-Lie algebras $(\g,\circ_{\g})$ and $(\h,\circ_{\h}),$ together with a representation $(\theta,\gamma)$ of $\g$ on $\h$ and a representation $(\theta',\gamma')$ of $\h$ on $\g$, such that for all $x,y\in \g$ and $\xi,\eta \in \h,$ the following equalities hold:
\begin{eqnarray*}
\gamma(x)[\xi,\eta]_{\h}&=&\gamma(\theta'(\eta)x)\xi-\gamma(\theta'(\xi)x)\eta+\xi\circ_{\h}(\gamma(x)\eta)-\eta\circ_{\h}(\gamma(x)\xi);\\
\theta(x)(\xi\circ_{\h}\eta)&=&-\theta(\theta'(\xi)x-\gamma'(\xi)x)\eta+(\theta(x)\xi-\gamma(x)\xi)\circ_{\h} \eta+\gamma(\gamma'(\eta)x)\xi+\xi \circ_{\h} (\theta(x)\eta);\\
\gamma'(\xi)[x,y]_{\g}&=&\gamma'(\theta(y)\xi)x-\gamma'(\theta(x)\xi)y+x\circ_{\g}(\gamma'(\xi)y)-y\circ_{\g}(\gamma'(\xi)x);\\
\theta'(\xi)(x\circ_{\g} y)&=&-\theta'(\theta(x)\xi-\gamma(x)\xi)y+(\theta'(\xi)x-\gamma'(\xi)x)\circ_{\g} y+\gamma'(\gamma(y)\xi)x+x \circ_{\g} (\theta'(\xi)y),
\end{eqnarray*}
where $[\cdot,\cdot]_{\g}$ and $[\cdot,\cdot]_{\h}$ are the commutator Lie brackets on $\g$ and $\h$ respectively. We denote a matched pair of pre-Lie algebras by $(\g,\h;(\rho,\mu),(\rho',\mu')),$ or simply by $(\g,\h)$.
\end{defi}

\begin{pro}{\rm (\cite{Bai})}\label{mp of prelie alg}
Let $(\g,\h;(\theta,\gamma),(\theta',\gamma'))$ be a matched pair of pre-Lie
algebras. Then there is a pre-Lie algebra structure on the direct
sum space $\g\oplus \h$ with the pre-Lie multiplication $\circ_{\bowtie}$
given by
\begin{equation*}
(x+\xi)\circ_{\bowtie}(y+\eta)=(x\circ_{\g}
y+\theta'(\xi)y+\gamma'(\eta)x)+(\xi\circ_{\h}
\eta+\theta(x)\eta+\gamma(y)\xi ),\;\;\forall x,y\in \g, \xi,\eta\in
\h.
\end{equation*}

Denote this pre-Lie algebra by
$\g\bowtie_{\theta',\gamma'}^{~\theta,\gamma}\h,$ or simply by
$\g\bowtie\h.$

Conversely, if $(\g \oplus \h,\circ_{\bowtie})$ is a pre-Lie algebra such
that $\g$ and $\h$ are pre-Lie subalgebras, then
$(\g,\h;(\theta,\gamma),(\theta',\gamma'))$ is a matched pair of pre-Lie
algebras, where the representation $(\theta,\gamma)$ of $\g$ on $\h$
and the representation $(\theta',\gamma')$ of $\h$ on $\g$ are
determined  by
$$ x\circ_{\bowtie} \xi=\theta(x)\xi+\gamma'(\xi)x,\quad \xi \circ_{\bowtie} x=\gamma(x)\xi+\theta'(\xi)x,\quad  \forall x\in \g, \xi\in \h. $$
\end{pro}

\begin{defi}
Let $(P,\ast_P,\circ_P)$ and $(Q,\ast_Q,\circ_Q)$ be two pre-Poisson algebras. {\bf A matched pair of pre-Poisson algebras} is a pair of pre-Poisson algebras $(P,Q)$ together with a representation $(\rho,\mu,\theta,\gamma)$ of $P$ on $Q$ and a representation $(\rho',\mu',\theta',\gamma')$ of $Q$ on $P$ such that $(P,Q;(\rho,\mu),(\rho',\mu'))$ is a matched pair of Zinbiel algebras, $(P,Q;(\theta,\gamma),(\theta',\gamma'))$ is a matched pair of pre-Lie algebras and the following equalities hold for all $x,y\in P$ and $u,v \in Q$:
\begin{eqnarray}
\label{mp-pre-poisson-1}&&\mu'(u)(x\circ_P y-y\circ_P x)+\mu'(\theta(x)u)y+y\ast_P (\gamma'(u)x)-x\circ_P(\mu'(u)y)-\gamma'(\rho(y)u)x=0;\\
\label{mp-pre-poisson-2}&&\rho'(u)(x\circ_P y)+(\gamma'(u)x-\theta'(u)x)\ast_P y+\rho'(\theta(x)u-\gamma(x)u)y-x\circ_P (\rho'(u)y)-\gamma'(\mu(y)u)x=0;\\
\label{mp-pre-poisson-3}&&-\theta'(u)(x\ast_P y)+x\ast_P (\theta'(u)y)+\mu'(\gamma(y)u)x+(\theta'(u)x-\gamma'(u)x)\ast_P y+\rho'(\gamma(x)u-\theta(x)u)y=0;\\
\label{mp-pre-poisson-4}&& \gamma'(u)(x\ast_P y+y\ast_P x)-x\ast_P(\gamma'(u)y)-\mu'(\theta(y)u)x-y\ast_P (\gamma'(u)x)-\mu'(\theta(x)u)y=0;\\
\label{mp-pre-poisson-5}&& \rho'(u)(x\circ_P y)+\mu'(\gamma(y)u)x+x\ast_P (\theta'(u)y)-(\mu'(u)x+\rho'(u)x)\circ_P y-\theta'(\rho(x)u+\mu(x)u)y=0;\\
\label{mp-pre-poisson-6}&&\mu(x)(u\circ_Q v-v\circ_Q u)+\mu(\theta'(u)x)v+v\ast_Q (\gamma(x)u)-u\circ_Q(\mu(x)v)-\gamma(\rho'(v)x)u=0;\\
\label{mp-pre-poisson-7}&&\rho(x)(u \circ_Q v)+(\gamma(x)u-\theta(x)u)\ast_Q v+\rho(\theta'(u)x-\gamma'(u)x)v-u\circ_Q (\rho(x)v)-\gamma(\mu'(v)x)u=0;\\
\label{mp-pre-poisson-8}&&-\theta(x)(u\ast_Q v)+u\ast_Q (\theta(x)v)+\mu(\gamma'(v)x)u+(\theta(x)u-\gamma(x)u)\ast_Q v+\rho(\gamma'(u)x-\theta'(u)x)v=0;\\
\label{mp-pre-poisson-9}&& \gamma(x)(u\ast_Q v+v\ast_Q u)-u\ast_Q(\gamma(x)v)-\mu(\theta'(v)x)u-v\ast_Q (\gamma(x)u)-\mu(\theta'(u)x)v=0;\\
\label{mp-pre-poisson-10}&& \rho(x)(u\circ_Q v)+\mu(\gamma'(v)x)u+u\ast_Q (\theta(x)v)-(\mu(v)u+\rho(x)u)\circ_Q v-\theta(\rho'(u)x+\mu'(u)x)v=0.
\end{eqnarray}
Then we denote a matched pair of pre-Poisson algebras by $(P,Q;(\rho,\mu,\theta,\gamma),(\rho',\mu',\theta',\gamma'))$.
\end{defi}

\begin{pro}\label{mp-pre-Poisson}
Let $(P,Q;(\rho,\mu,\theta,\gamma),(\rho',\mu',\theta',\gamma'))$ be a matched pair of pre-Poisson algebras. Then there exists a pre-Poisson algebra structure on $P\oplus Q$ defined by \begin{eqnarray}
\label{double-pre-Poisson-1} (x+u)\ast_{\bowtie}(y+v)&=&x\ast_P y+\rho'(u)y+\mu'(v)x+u\ast_Q v+\rho(x)v+\mu(y)u;\\
\label{double-pre-Poisson-2}  (x+u)\circ_{\bowtie}(y+v)&=& x\circ_{P}
y+\theta'(u)y+\gamma'(v)x+u\circ_{Q}
v +\theta(x)v+\gamma(y)u ,
\end{eqnarray}
for all $x,y\in P, u,v\in Q.$ Denote this pre-Poisson algebra by $P\bowtie^{~~\rho,\mu,\theta,\gamma}_{\rho',\mu',\theta',\gamma'} Q$, or simply by $P\bowtie Q$.

Conversely, if $(P\oplus Q,\ast_{\bowtie},\circ_{\bowtie})$ is a pre-Poisson algebra such that P and Q are pre-Poisson subalgebras, then $(P,Q;(\rho,\mu,\theta,\gamma),(\rho',\mu',\theta',\gamma'))$ is a matched pair of pre-Poisson algebras, where the representation $(\rho,\mu,\theta,\gamma)$ of $P$ on $Q$ and representation $(\rho',\mu',\theta',\gamma')$ of $Q$ on $P$ are determined by
\begin{eqnarray*}
&&x\ast_{\bowtie} u=\rho(x)u+\mu'(u)x,\quad u \ast_{\bowtie} x=\mu(x)u+\rho'(u)x;\\
&& x\circ_{\bowtie} u=\theta(x)u+\gamma'(u)x,\quad u \circ_{\bowtie} x=\gamma(x)u+\theta'(u)x,
\end{eqnarray*}
for all $x\in P$ and $u\in Q$.
\end{pro}

\begin{proof}
By Proposition \ref{mp-Zinbiel-algs} and \ref{mp of prelie alg}, the equations \eqref{double-pre-Poisson-1} and \eqref{double-pre-Poisson-2} define the Zinbiel algebra and pre-Lie algebra structures on $P\oplus Q$ respectively. Thus, for all $x,y,z\in P$ and $u,v,w\in Q$, the equations \eqref{double-pre-Poisson-1} and \eqref{double-pre-Poisson-2} define the pre-Poisson algebra structure $P\oplus Q$ if and only if the following equations are satisfied:
\begin{eqnarray*}
(x \circ_{\bowtie} y-y \circ_{\bowtie} x)\ast_{\bowtie} w=x \circ_{\bowtie} (y\ast_{\bowtie} w)-y \circ_{\bowtie} (x\ast_{\bowtie} w)
&\Longleftrightarrow& {\rm the~equations~\eqref{Rep1}~and~\eqref{mp-pre-poisson-1}~hold;}\\
(x \circ_{\bowtie} v-v \circ_{\bowtie} x)\ast_{\bowtie} z=x \circ_{\bowtie} (v\ast_{\bowtie} z)-v \circ_{\bowtie} (x\ast_{\bowtie} z)
&\Longleftrightarrow& {\rm the~equations~\eqref{Rep2}~and~\eqref{mp-pre-poisson-2}~hold;}\\
(u \circ_{\bowtie} y-y \circ_{\bowtie} u)\ast_{\bowtie} z=u \circ_{\bowtie} (y\ast_{\bowtie} z)-y \circ_{\bowtie} (u\ast_{\bowtie} z)
&\Longleftrightarrow& {\rm the~equations~\eqref{Rep5}~and~\eqref{mp-pre-poisson-3}~hold;}\\
(x \circ_{\bowtie} v-v \circ_{\bowtie} x)\ast_{\bowtie} w=x \circ_{\bowtie} (v\ast_{\bowtie} w)-v \circ_{\bowtie} (x\ast_{\bowtie} w)
&\Longleftrightarrow& {\rm the~equations~\eqref{Rep5}~and~\eqref{mp-pre-poisson-8}~hold;}\\
(u \circ_{\bowtie} v-v \circ_{\bowtie} u)\ast_{\bowtie} z=u \circ_{\bowtie} (v\ast_{\bowtie} z)-v \circ_{\bowtie} (u\ast_{\bowtie} z)
&\Longleftrightarrow& {\rm the~equations~\eqref{Rep1}~and~\eqref{mp-pre-poisson-6}~hold;}\\
(u \circ_{\bowtie} y-y \circ_{\bowtie} u)\ast_{\bowtie} w=u \circ_{\bowtie} (y\ast_{\bowtie} w)-y \circ_{\bowtie} (u\ast_{\bowtie} w)
&\Longleftrightarrow& {\rm the~equations~\eqref{Rep2}~and~\eqref{mp-pre-poisson-7}~hold;}\\
(x \ast_{\bowtie} y+y \ast_{\bowtie} x)\circ_{\bowtie} w=x \ast_{\bowtie} (y\circ_{\bowtie} w)+y \ast_{\bowtie} (x\circ_{\bowtie} w)
&\Longleftrightarrow& {\rm the~equations~\eqref{Rep4}~and~\eqref{mp-pre-poisson-4}~hold;}\\
(x \ast_{\bowtie} v+v \ast_{\bowtie} x)\circ_{\bowtie} z=x \ast_{\bowtie} (v\circ_{\bowtie} z)+v \ast_{\bowtie} (x\circ_{\bowtie} z)
&\Longleftrightarrow& {\rm the~equations~\eqref{Rep3}~and~\eqref{mp-pre-poisson-5}~hold;}\\
(u \ast_{\bowtie} v+v \ast_{\bowtie} u)\circ_{\bowtie} z=u \ast_{\bowtie} (v\circ_{\bowtie} z)+v \ast_{\bowtie} (u\circ_{\bowtie} z)
&\Longleftrightarrow& {\rm the~equations~\eqref{Rep4}~and~\eqref{mp-pre-poisson-9}~hold;}\\
(u \ast_{\bowtie} y+y \ast_{\bowtie} u)\circ_{\bowtie} w=u \ast_{\bowtie} (y\circ_{\bowtie} w)+y \ast_{\bowtie} (u\circ_{\bowtie} w)
&\Longleftrightarrow& {\rm the~equations~\eqref{Rep3}~and~\eqref{mp-pre-poisson-10}~hold.}
\end{eqnarray*}
\end{proof}

The notion of a matched pair of Poisson algebras can be seen in \cite[Theorem 1]{NB}.
recall that {\bf a matched pair of Poisson algebras} consists of a pair of Poisson algebras $(P,\cdot_P,\{\cdot,\cdot\}_P)$ and $(Q,\cdot_Q,\{\cdot,\cdot\}_Q)$ together with a representation $(\alpha,\beta)$ of $P$ on $Q$ and a representation $(\alpha',\beta')$ of $Q$ on $P$ such that $(P,Q;\alpha,\alpha')$ is a matched pair of commutative associative algebras, $(P,Q;\beta,\beta')$ is a matched pair of Lie algebras and the following equalities hold:
\begin{eqnarray}
\label{mp-Poisson-1} && \beta'(u)(x\cdot_P y)=(\beta'(u)x)\cdot_P y+x\cdot_P(\beta'(u)y)-\alpha'(\beta(x)u)y-\alpha'(\beta(y)u)x;\\
\label{mp-Poisson-2} && \{x,\alpha'(u)y\}_P-\beta'(\alpha(y)u)x=\alpha'(\beta(x)u)y-(\beta'(u)x)\cdot_P y+\alpha'(u)(\{x,y\}_P);\\
\label{mp-Poisson-3} && \beta(x)(u\cdot_Q v)=(\beta(x)u)\cdot_Q v+u\cdot_Q(\beta(x)v)-\alpha(\beta'(u)x)v-\alpha(\beta'(v)x)u;\\
\label{mp-Poisson-4} && \{u,\alpha(x)v\}_Q-\beta(\alpha'(v)x)u=\alpha(\beta'(u)x)v-(\beta(x)u)\cdot_Q v+\alpha(x)(\{u,v\}_Q);
\end{eqnarray}
for all $x,y\in P $ and $u,v\in Q$. In this case, there exists a Poisson algebra structure on $P\oplus Q$ given by
\begin{eqnarray*}
(x+u)\cdot_{\bowtie} (y+v)&=&x\cdot_P y+\alpha'(u)y+\alpha'(v)x+u\cdot_Q v+\alpha(x)v+\alpha(y)u;\\
\{x+u,y+v\}_{\bowtie}&=& \{x,y\}_P+\beta'(u)y-\beta'(v)x+\{u,v\}_Q+\beta(x)v-\beta(y)u.
\end{eqnarray*}
Denoted by $P\bowtie^{~~\alpha,\beta}_{\alpha',\beta'} Q$ and called the double of the matched pair. Moreover, every Poisson algebra which is the direct sum of the underlying vector spaces of
two subalgebras is the double of a matched pair of Poisson algebras.

The following proposition shows that a matched pair of pre-Poisson algebras can give rise to a matched pair of Poisson algebras.

\begin{pro}\label{mp-pre-poisson-and-poisson}
Let $(P,Q;(\rho,\mu,\theta,\gamma),(\rho',\mu',\theta',\gamma'))$ be a matched pair of pre-Poisson algebras. Then $(P^c,Q^c;(\rho+\mu,\theta-\gamma),(\rho'+\mu',\theta'-\gamma'))$ is a matched pair of Poisson algebras.
\end{pro}

\begin{proof}
This inclusion can be proved by a direct calculation or from the relation between the pre-Poisson algebra $P\bowtie Q$ and its sub-adjacent Poisson algebra. In fact, the sub-adjacent Poisson algebra $(P\bowtie Q)^c$ is just the Poisson algebra $P^c \bowtie Q^c$ obtained from the matched pair $(P^c,Q^c;(\alpha,\beta),(\alpha',\beta'))$:
\begin{eqnarray*}
(x+u)\cdot_{\bowtie} (y+v)&=&x\cdot_P y+\alpha'(u)y+\alpha'(v)x+u\cdot_Q v+\alpha(x)v+\alpha(y)u;\\
\{x+u,y+v\}_{\bowtie}&=& \{x,y\}_P+\beta'(u)y-\beta'(v)x+\{u,v\}_Q+\beta(x)v-\beta(y)u;
\end{eqnarray*}
for all $x,y\in P$ and $u,v\in Q$, where $\alpha=\rho+\mu$, $\alpha'=\rho'+\mu'$, $\beta=\theta-\gamma$ and $\beta'=\theta'-\gamma'$.
\end{proof}

\begin{thm}\label{mp-equ}
Let $(P,\ast_P,\circ_P)$ and $(P^*,\ast_{P^*},\circ_{P^*})$ be two pre-Poisson algebras. Then
$(P,P^*;(-L_{\ast}^{\ast}-R_{\ast}^{\ast},R_{\ast}^{\ast},L_{\circ}^{\ast}-R_{\circ}^{\ast},
-R_{\circ}^{\ast}),(-\huaL_{\ast}^{\ast}-\huaR_{\ast}^{\ast},\huaR_{\ast}^{\ast},\huaL_{\circ}^{\ast}
-\huaR_{\circ}^{\ast},-\huaR_{\circ}^{\ast}))$ is a matched pair of pre-Poisson algebras if and only if
$(P^c,(P^*)^c;-L_{\ast}^{\ast},L_{\circ}^{\ast},-\huaL_{\ast}^{\ast},\huaL_{\circ}^{\ast})$ is a matched pair of Poisson algebras.
\end{thm}

\begin{proof}
By proposition \ref{mp-pre-poisson-and-poisson}, we can obviously know that the ``only if" part is right. We only need to prove that the ``if" part. A direct proof is given as follows. In the case $\rho=-L_{\ast}^{\ast}-R_{\ast}^{\ast},\mu=R_{\ast}^{\ast},\theta=L_{\circ}^{\ast}-R_{\circ}^{\ast},
\gamma=-R_{\circ}^{\ast},\rho'=-\huaL_{\ast}^{\ast}-\huaR_{\ast}^{\ast},\mu'=\huaR_{\ast}^{\ast},
\theta'=\huaL_{\circ}^{\ast}-\huaR_{\circ}^{\ast},
\gamma'=-\huaR_{\circ}^{\ast}$ and $\alpha=-L_{\ast}^{\ast},\beta=L_{\circ}^{\ast},
\alpha'=-\huaL_{\ast}^{\ast},\beta'=\huaL_{\circ}^{\ast}$, we have
\begin{eqnarray*}
{\rm equation~\eqref{mp-Poisson-1}}\Longleftrightarrow {\rm equation~\eqref{mp-pre-poisson-4}}&\Longleftrightarrow&{\rm equation~\eqref{mp-pre-poisson-7}}\Longleftrightarrow{\rm equation~\eqref{mp-pre-poisson-10}}\\
{\rm equation~\eqref{mp-Poisson-2}}\Longleftrightarrow {\rm equation~\eqref{mp-pre-poisson-1}}&\Longleftrightarrow&{\rm equation~\eqref{mp-pre-poisson-8}}\\
{\rm equation~\eqref{mp-Poisson-3}}\Longleftrightarrow {\rm equation~\eqref{mp-pre-poisson-2}}&\Longleftrightarrow&{\rm equation~\eqref{mp-pre-poisson-5}}\Longleftrightarrow{\rm equation~\eqref{mp-pre-poisson-9}}\\
{\rm equation~\eqref{mp-Poisson-4}}\Longleftrightarrow {\rm equation~\eqref{mp-pre-poisson-3}}&\Longleftrightarrow&{\rm equation~\eqref{mp-pre-poisson-6}}
\end{eqnarray*}
We omit the details. The proof is finished.
\end{proof}
}

\subsection{Pre-Poisson bialgebras}
Let us first introduce the notion of pre-Poisson bialgebras.

\begin{defi}
A {\bf pre-Poisson bialgebra} is a sextuple $(P,P^*,\ast_P,\circ_P,\Delta,\delta)$, where $(P,\ast_P,\circ_P)$ is a pre-Poisson algebra and $\Delta,\delta:P\to P \otimes P$ are two linear maps such that
\begin{itemize}
\item[{\rm(i)}] $(P^*,\Delta^*,\delta^*)$ is a pre-Poisson algebra, where $\Delta^*,\delta^*:P^* \otimes P^* \to P^* $ are the dual maps of $\Delta$ and $\delta$ respectively. That means, $(P^*,\Delta^*)$ is a Zinbiel algebra and $(P^*,\delta^*)$ is a pre-Lie algebra such that the following compatible conditions hold:
    \begin{eqnarray}
    \label{Poisson-bialg-5} (\delta\otimes \Id)\Delta-(\tau\otimes \Id)(\delta\otimes \Id)\Delta-(\Id \otimes \Delta)\delta+(\tau\otimes \Id)(\Id \otimes \delta)\Delta&=&0;\\
    \label{Poisson-bialg-6} (\Delta\otimes \Id)\delta+(\tau\otimes \Id)(\Delta\otimes \Id)\delta-(\Id \otimes \delta)\Delta-(\tau\otimes \Id)(\Id \otimes \delta)\Delta&=&0.
    \end{eqnarray}

\item[{\rm(ii)}] $(P,P^*,\ast_P,\Delta)$ is a Zinbiel bialgebra and $(P,P^*,\circ_P,\delta)$ is a pre-Lie bialgebra.

\item[{\rm(iii)}] For all $x,y\in P$, $\Delta$ and $\delta$ satisfy the following compatibility conditions:
  \begin{eqnarray}
  \label{Poisson-bialg-1}\delta(x\ast_P y+y\ast_P x)&=&(\Id \otimes (L_\ast+R_\ast)(y))\delta(x)+(\Id \otimes (L_\ast+R_\ast)(x))\delta(y)\\
  \nonumber&&-(L_\circ(x)\otimes \Id)\Delta(y)-(L_\circ(y)\otimes \Id)\Delta(x);\\
  \label{Poisson-bialg-2}\Delta(x\circ_P y-y \circ_P x)&=& (L_\ast(y)\otimes \Id-\Id \otimes (L_\ast+R_\ast)(y))\delta(x)\\
  \nonumber&&+(L_\circ(x)\otimes \Id+\Id \otimes (L_\circ-R_\circ)(x))\Delta(y);\\
  \label{Poisson-bialg-3}(\Delta+\tau\Delta)(x\circ_P y)&=&-(\Id \otimes R_\ast(y))\delta(x)-(R_\ast(y)\otimes \Id)\tau\delta(x)\\
  \nonumber&&+(L_\circ(x)\otimes \Id+\Id\otimes L_\circ(x))(\Delta+\tau\Delta)(y);\\
  \label{Poisson-bialg-4}(\delta-\tau\delta)(x\ast_P y)&=&(\Id \otimes R_\ast(y))\delta(x)+(\Id \otimes L_\ast(x))(\delta-\tau\delta)(y)+(R_\circ(y)\otimes \Id)\tau\Delta(x)\\
  \nonumber&&-(L_\circ(x)\otimes \Id)(\Delta+\tau\Delta)(y).
  \end{eqnarray}
\end{itemize}
\end{defi}

\begin{ex}\label{dual pre-Poisson alg}
{\rm
Let $(P,P^*,\ast_P,\circ_P,\Delta,\delta)$ be a pre-Poisson bialgebra. Then its dual $(P^*,P,\Delta^*,\delta^*,\ast_P^*,\circ_P^*)$ is also a pre-Poisson bialgebra.
}
\end{ex}

The following theorem is the main result in this section.
\begin{thm}\label{pre-Poisson-bialgs}
Let $(P,\ast_P,\circ_P)$ and $(P^*,\ast_{P^*},\circ_{P^*})$ be two pre-Poisson algebras. Then the following sentences are equivalent:
\begin{itemize}
\item[{\rm(i)}] $(P,P^*,\ast_P,\circ_P,\Delta,\delta)$ is a pre-Poisson bialgebra, where $\Delta^*:=\ast_{P^*}$ and $\delta^*:=\circ_{P^*}$.

\item[{\rm(ii)}] $(P\oplus P^*,P,P^*)$ is a Manin triple of pre-Poisson algebras, where the skew-symmetric bilinear form $\omega$ on $P\oplus P^*$ is given by \eqref{natural-skew-sym-form}.

\item[{\rm(iii)}] $(P^c \oplus (P^*)^c,\cdot,\{\cdot,\cdot\},\omega)$ is a phase spaces of the sub-adjacent Poisson algebra $(P^c,\cdot_P,\{\cdot,\cdot\}_P)$.
\end{itemize}
\end{thm}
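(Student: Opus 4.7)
The equivalence (ii) $\Leftrightarrow$ (iii) is immediate from Theorem \ref{phase-space-Manin-triple}, so the heart of the statement is the equivalence (i) $\Leftrightarrow$ (ii). My strategy is to show that any pre-Poisson algebra structure $(\ast_d, \circ_d)$ on $P \oplus P^*$ that extends the given pre-Poisson structures on $P$ and $P^*$ and makes $\omega$ from \eqref{natural-skew-sym-form} an invariant form (in the sense of \eqref{quadratic1}--\eqref{quadratic2}) is forced to have its mixed products given by the coregular representations of Proposition \ref{dual-rep}. Solving $\omega(x \ast_d \xi, y) = \omega(\xi, x \ast_d y + y \ast_d x)$ and its analogues for $\circ_d$ yields, for $x \in P$ and $\xi \in P^*$,
\begin{eqnarray*}
x \ast_d \xi &=& (-L_\ast^*(x) - R_\ast^*(x))\xi + \huaR_\ast^*(\xi)\, x, \qquad
\xi \ast_d x \;=\; R_\ast^*(x)\xi + (-\huaL_\ast^*(\xi) - \huaR_\ast^*(\xi))\, x,\\
x \circ_d \xi &=& (L_\circ^*(x) - R_\circ^*(x))\xi - \huaR_\circ^*(\xi)\, x, \qquad
\xi \circ_d x \;=\; -R_\circ^*(x)\xi + (\huaL_\circ^*(\xi) - \huaR_\circ^*(\xi))\, x.
\end{eqnarray*}
With this prescription, $P$ and $P^*$ are automatically isotropic, so producing a Manin triple amounts to verifying that $(P \oplus P^*, \ast_d, \circ_d)$ itself satisfies the Zinbiel axiom, the pre-Lie axiom, and the two pre-Poisson compatibilities \eqref{pre-poisson1}--\eqref{pre-poisson2}.

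The next step is to classify the axioms for $(\ast_d, \circ_d)$ according to how many of the inputs lie in $P^*$. The all-$P$ cases hold by the given pre-Poisson structure on $P$, and the all-$P^*$ cases hold by the pre-Poisson structure on $P^*$ encoded by $(\Delta^*, \delta^*)$. For the mixed cases one pairs the resulting identity in $P \oplus P^*$ with an arbitrary element of the complementary subspace, using the defining relations $\langle \Delta(x), \xi \otimes \eta \rangle = \langle x, \xi \ast_{P^*} \eta \rangle$ and $\langle \delta(x), \xi \otimes \eta \rangle = \langle x, \xi \circ_{P^*} \eta \rangle$ to rewrite operator identities as tensor identities. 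In this way: (a) the mixed Zinbiel identity for $\ast_d$ translates into the Zinbiel bialgebra axioms for $(P, P^*, \ast_P, \Delta)$; (b) the mixed pre-Lie identity for $\circ_d$ translates into the pre-Lie bialgebra axioms for $(P, P^*, \circ_P, \delta)$; and (c) the mixed forms of the pre-Poisson compatibilities \eqref{pre-poisson1}--\eqref{pre-poisson2} translate into precisely \eqref{Poisson-bialg-1}--\eqref{Poisson-bialg-6}, where \eqref{Poisson-bialg-5}--\eqref{Poisson-bialg-6} arise from the subcases with two inputs in $P^*$ and encode the pre-Poisson compatibility on the dual side (consistent with Example \ref{dual pre-Poisson alg}).

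The main obstacle is the bookkeeping in step (c): each of \eqref{pre-poisson1} and \eqref{pre-poisson2} breaks into several mixed subcases depending on which slots are populated by elements of $P$ versus $P^*$, and one must verify that, after exploiting the symmetry $x \leftrightarrow y$ within each identity together with the $P \leftrightarrow P^*$ duality provided by Example \ref{dual pre-Poisson alg}, these subcases collapse exactly to the independent conditions \eqref{Poisson-bialg-1}--\eqref{Poisson-bialg-6} with no redundancy and no extra constraints. The underlying calculations are direct: each subcase becomes a routine rearrangement once the identity is expressed in operator form and paired against a test element. Once this translation has been completed, both directions of (i) $\Leftrightarrow$ (ii) follow in one stroke, since a pre-Poisson bialgebra produces the Manin triple via the explicit $(\ast_d, \circ_d)$ above, while conversely any Manin triple structure on $P \oplus P^*$ must take this form and therefore forces the bialgebra axioms on $(\Delta, \delta)$.
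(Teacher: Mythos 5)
Your proposal follows essentially the same route as the paper: both derive the explicit bicrossed pre-Poisson structure on $P\oplus P^*$ from the coregular-representation formulas forced by the invariance of $\omega$, and then identify the mixed-input cases of the Zinbiel, pre-Lie and pre-Poisson axioms with the bialgebra conditions \eqref{Poisson-bialg-1}--\eqref{Poisson-bialg-6}, exactly as in the paper's "direct calculation." The only slip is cosmetic: \eqref{Poisson-bialg-5}--\eqref{Poisson-bialg-6} are the dual of the pre-Poisson compatibility on $(P^*,\Delta^*,\delta^*)$, i.e.\ the all-$P^*$ subcase (automatic here since $(P^*,\ast_{P^*},\circ_{P^*})$ is assumed to be a pre-Poisson algebra), rather than the subcase with two inputs in $P^*$.
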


\begin{proof}
{\rm(ii)} is equivalent to {\rm(iii)} according to the Theorem \ref{phase-space-Manin-triple}.
Now we prove that {\rm(i)} is equivalent to {\rm(ii)}. If $(P,P^*,\ast_P,\circ_P,\Delta,\delta)$ is a pre-Poisson bialgebra, by a direct calculation using \eqref{Poisson-bialg-1}-\eqref{Poisson-bialg-4}, we deduce that there exists a pre-Poisson algebra structure $(\ast_{\bowtie},\circ_{\bowtie})$ on $P\oplus P^*$ given by
\begin{eqnarray*}
(x+\xi) \ast_{\bowtie} (y+\eta)&=& x \ast_P y-(\huaL^*_*+\huaR^*_*)(\xi)y+ \huaR^*_*(\eta)x+u \ast_{P^*} v-(L^*_*+R^*_*)(x)\eta+ R^*_*(y)\xi;\\
(x+\xi) \circ_{\bowtie} (y+\eta)&=& x \circ_P y+(\huaL^*_{\circ}-\huaR^*_{\circ})(\xi)y- \huaR^*_{\circ}(\eta)x+u \ast_{P^*} v+(L^*_{\circ}-R^*_{\circ})(x)\eta- R^*_{\circ}(y)\xi,
\end{eqnarray*}
for all $x,y\in P,\xi, \eta\in P^*.$ In this case, it is obvious that both $P$ and $P^*$ are isotropic subalgebras of $(P\oplus P^*,\ast_{\bowtie},\circ_{\bowtie})$ with respect to $\omega$.
We only need to prove that $\omega$ satisfies the invariant conditions \eqref{quadratic1} and \eqref{quadratic2}. For all $x,y,z\in P$ and $\xi,\eta,\zeta\in P^*$, we have
\begin{eqnarray*}
&&\omega((x+\xi)\ast_{\bowtie}(y+\eta),z+\zeta)\\
&=&\langle \xi\ast_{P^*} \eta-(L^*_*+R^*_*)(x)\eta+R^*_*(y)\xi,z\rangle-\langle \zeta,x\ast_P y-
(\huaL^*_*+\huaR^*_*)(\xi)y+ \huaR^*_*(\eta)x \rangle\\
&=& \langle \xi \ast_{P^*} \eta, z\rangle+\langle \eta,x\cdot_P z\rangle-\langle  \xi,z\ast_P y \rangle-\langle  \zeta,x\ast_P y \rangle-\langle \xi\cdot_{P^*} \zeta,y  \rangle+\langle \zeta\ast_{P^*} \eta,x \rangle\\
&=& \langle \eta,x\cdot_P z-\huaL^*_* (\xi)z-\huaL^*_*(\zeta) x \rangle-\langle \xi\cdot_{P^*}\zeta-L^*_*(x)\zeta-L^*_*(z)\xi,y \rangle\\
&=& \omega(y+\eta,x\cdot_P z-\huaL^*_*(\xi)z-\huaL^*_*(\zeta)x+\xi\cdot_{P^*} \zeta-L^*_*(x)\zeta-L^*_*(z)\xi)\\
&=& \omega(y+\eta,(x+\xi)\ast_{\bowtie} (z+\zeta)+(z+\zeta)\ast_{\bowtie} (x+\xi)),
\end{eqnarray*}
which implies that $\omega$ satisfies the invariant condition \eqref{quadratic1}. Similarly, we can also have
$$\omega((x+\xi)\circ_{\bowtie}(y+\eta),z+\zeta)=-\omega(y+\eta,\{x+\xi,z+\zeta\}_{\bowtie}), $$
which implies that $\omega$ satisfies the invariant condition \eqref{quadratic2}. Thus, $(P\oplus P^*,P,P^*)$ is a Manin triple of pre-Poisson algebras.

On the other hand, if $(P\oplus P^*,P,P^*)$ is a Manin triple of pre-Poisson algebras with the skew-symmetric bilinear form $\omega$ given by \eqref{natural-skew-sym-form}, by the invariant condition of $\omega$, we can deduce that the pre-Poisson algebra structure on $P\oplus P^*$ given by
\begin{eqnarray*}
(x+\xi) \ast (y+\eta)&=& x \ast_P y-(\huaL^*_*+\huaR^*_*)(\xi)y+ \huaR^*_*(\eta)x+u \ast_{P^*} v-(L^*_*+R^*_*)(x)\eta+ R^*_*(y)\xi;\\
(x+\xi) \circ (y+\eta)&=& x \circ_P y+(\huaL^*_{\circ}-\huaR^*_{\circ})(\xi)y- \huaR^*_{\circ}(\eta)x+u \circ_{P^*} v+(L^*_{\circ}-R^*_{\circ})(x)\eta- R^*_{\circ}(y)\xi,
\end{eqnarray*}
which implies that $(P,P^*,\ast_P,\circ_P,\Delta,\delta)$ is a pre-Poisson bialgebra with $\Delta^*=\ast_{P^*}$ and $\delta^*=\circ_{P^*}$ by a direct calculation. The proof is finished.
\end{proof}

\begin{defi}
Let $(P,P^*,\ast_P,\circ_P,\Delta_P,\delta_P)$ and $(Q,Q^*,\ast_Q,\circ_Q,\Delta_Q,\delta_Q)$ be two pre-Poisson bialgebras. A linear map $\varphi:P \to Q$ is called {\bf a homomorphism of pre-Poisson bialgebras}, if $\varphi:P  \to Q$ is a homomorphism of pre-Poisson algebras such that $\varphi^*:Q^* \to P^*$ is also a homomorphism of pre-Poisson algebras, i.e.,
\begin{eqnarray}
(\varphi \otimes \varphi)\circ \Delta_P=\Delta_Q \circ \varphi, \quad (\varphi \otimes \varphi)\circ \delta_P=\delta_Q \circ \varphi.
\end{eqnarray}
Moreover, if $\varphi:P \to Q$ is a linear isomorphism of vector spaces, then $\varphi$ is called {\bf an isomorphism of pre-Poisson bialgebras.}
\end{defi}

\subsection{Coboundary pre-Poisson bialgebras and the pre-Poisson Yang-Baxter equation}

In this subsection, we introduce the notion of coboundary pre-Poisson bialgebras, which gives rise to an analogue of the classical Yang-Baxter equation, named the pre-Poisson Yang-Baxter equation.

\begin{defi}
A pre-Poisson bialgebra $(P,P^*,\ast_P,\circ_P,\Delta_P,\delta_P)$ is called {\bf coboundary} if there exists an element $r\in P\otimes P$ such that
\begin{eqnarray}
\label{coboundary-condition1}\Delta(x)&=&\Big( \Id\otimes (L_\ast+R_\ast)(x)-L_\ast(x) \otimes \Id  \Big)r;\\
\label{coboundary-condition2}\delta(x)&=&\Big( L_{\circ}(x) \otimes \Id +\Id\otimes (L_{\circ}-R_{\circ})(x)  \Big)r, \quad \forall x\in P.
\end{eqnarray}
\end{defi}

Let us recall some results of coboundary Zinbiel bialgebras and coboundary pre-Lie bialgebras.

Let $(A,\ast_A)$ be a Zinbiel algebra and $r\in A\otimes A$. The linear map $\Delta$ defined by \eqref{coboundary-condition1} induces a Zinbiel algebra on $A^*$ such that $(A,A^*,\ast_{A},\Delta)$ is a coboundary Zinbiel bialgebra if and only if the following conditions are satisfied:
\begin{eqnarray}
\label{coboundary-Zinbiel-1}\Big( L_{\ast}(x\ast_A y)\otimes \Id-\Id \otimes L_{\ast} (x\ast_A y)-L_{\ast} (x) L_{\ast}(y) \otimes \Id+L_{\ast}(x)\otimes L_{\ast} (y) \Big)(r-\tau(r))&=&0;\\
\label{coboundary-Zinbiel-2}\Big(L_{\ast} (x)\otimes \Id\otimes \Id-\Id \otimes \Id \otimes (L_{\ast}+R_{\ast})(x) \Big)Z(r)&=&0,
\end{eqnarray}
where
$Z(r)=-r_{13}\ast r_{12}-r_{23}\ast r_{21}+r_{13}\cdot r_{21}+r_{12}\cdot r_{23}-r_{13}\cdot r_{23}$ is called the Zinbiel Yang-Baxter equation. See \cite{Wang} for more details.

Let $(\g,\circ_\g)$ be a pre-Lie algebra and $r\in \g\otimes \g.$ The linear map $\delta$ defined by \eqref{coboundary-condition2} induces a pre-Lie algebra on $\g^*$ such that $(\g,\g^*,\circ_{\g},\delta)$ is a coboundary pre-Lie bialgebra if and only if the following conditions are satisfied:
\begin{eqnarray}
\label{coboundary-pre-Lie-1}\Big(L_\circ (x\circ_\g y)\otimes \Id+\Id\otimes L_\circ (x\circ_\g y)-L_{\circ} (x)L_{\circ} (y)\otimes \Id -L_{\circ} (x)\otimes L_{\circ} (y)&&\\
\nonumber-L_{\circ} (y)\otimes L_{\circ} (x)-\Id\otimes L_{\circ} (x)L_{\circ} (y) \Big)(r-\tau(r))&=&0;\\
\label{coboundary-pre-Lie-2}\Big (L_\circ (x)\otimes \Id\otimes \Id+\Id\otimes  L_\circ(x)\otimes \Id+\Id\otimes \Id \otimes (L_{\circ}-R_{\circ})(x) \Big)S(r)&=&0,
\end{eqnarray}
where
$S(r)=r_{13}\circ r_{12}-r_{23}\circ r_{21}+[r_{23},r_{21}]-[r_{13},r_{21}]-[r_{13},r_{23}]$ is called the S-equation. More details can be seen in \cite{Bai}.

\begin{thm}\label{equ-coboundary-pre-Poisson-bialg}
Let $(P,\ast_P,\circ_P)$ be a pre-Poisson algebra and $r \in P \otimes P.$ Define $\Delta,\delta:P\to P\otimes P$ by \eqref{coboundary-condition1} and \eqref{coboundary-condition2} respectively. Then $\Delta,\delta$ induce a pre-Poisson algebra on $P^*$ such that $(P,P^*,\ast_P,\circ_P,\Delta,\delta)$ is a coboundary pre-Poisson bialgebra if and only if the Conditions \eqref{coboundary-Zinbiel-1}-\eqref{coboundary-pre-Lie-2} hold and the following conditions are satisfied for all $x,y \in P$:
\begin{itemize}
\item[{\rm(i)}] for $r=\sum_{i} a_i\otimes b_i \in P\otimes P,$

\begin{eqnarray*}
 &&\Big(L_\circ(x)\otimes \Id \otimes \Id \Big)Z(r)+\Big(\Id\otimes \Id \otimes (L_\ast+R_\ast)(x)-\Id\otimes L_\ast(x) \otimes \Id \Big )S(r)+\sum_{i} \Big(L_\circ(x)L_\ast(a_i)\otimes \Id\\
 &&-L_{\circ}(x \ast_P a_i)\otimes \Id-L_\circ(x)\otimes L_\ast(a_i)+L_\circ(a_i)\otimes L_\ast(x)-\Id \otimes L_{\circ}(x \ast_P a_i )+\Id \otimes L_\ast(x)L_{\circ}(a_i) \Big)\\
 &&(r-\tau(r))\otimes b_i=0;
\end{eqnarray*}

\item[{\rm(ii)}] for $r=\sum_{i} a_i\otimes b_i \in P\otimes P,$

\begin{eqnarray*}
 &&\Big( \Id\otimes \Id \otimes (L_\circ-R_\circ)(x) \Big) Z(r)+\Big( L_\ast(x)\otimes \Id\otimes \Id+\Id \otimes L_\ast(x)\otimes  \Id \Big)S(r)+\sum_{i} \Big(L_\ast(x)L_\circ(a_i)\otimes \Id\\
 &&-L_{\ast}(x \circ_P a_i)\otimes \Id-L_\ast(x)\otimes L_\circ(a_i)-L_\circ(a_i)\otimes L_\ast(x)+\Id \otimes L_{\ast}(x \circ_P a_i)-\Id \otimes L_\ast(x)L_{\circ}(a_i) \Big)\\
 &&(r-\tau(r))\otimes b_i=0;
\end{eqnarray*}

\item[{\rm(iii)}]
$\Big( L_*(y)\otimes L_{\circ}(x)-L_{\circ}(x) \otimes L_*(y)+\Id \otimes L_*(x\circ_P y)-L_*(x\circ_P y)\otimes \Id+L_{\circ}(x) L_*(y)\otimes \Id-\Id \otimes L_{\circ}(x) L_{\ast}(y) \Big) (r-\tau(r))=0;$

\item[{\rm(iv)}]
$\Big( L_{\circ}(x)\otimes L_*(y)-L_{\circ}(y)\otimes L_*(x)+L_{\circ}(x\circ_P y)\otimes \Id-
L_{\circ}(x)L_*(y)\otimes \Id+\Id \otimes L_{\circ}(x\ast_P y)-\Id\otimes L_*(x)L_{\circ}(y)
\Big) (r-\tau(r))=0.$
\end{itemize}
We denote this coboundary pre-Poisson bialgebra by $(P,P_r^*)$.
\end{thm}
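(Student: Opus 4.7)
The plan is to translate each defining axiom of a pre-Poisson bialgebra into a condition on $r$ by substituting the coboundary formulas \eqref{coboundary-condition1}-\eqref{coboundary-condition2}. By the results of \cite{Wang} on coboundary Zinbiel bialgebras, once $\Delta$ is given as in \eqref{coboundary-condition1}, the pair $(P, P^*, \ast_P, \Delta)$ is a Zinbiel bialgebra if and only if \eqref{coboundary-Zinbiel-1}-\eqref{coboundary-Zinbiel-2} hold; likewise, by \cite{Bai}, $(P, P^*, \circ_P, \delta)$ is a pre-Lie bialgebra if and only if \eqref{coboundary-pre-Lie-1}-\eqref{coboundary-pre-Lie-2} hold. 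What remains is to identify the additional conditions on $r$ coming from the four mixed compatibilities \eqref{Poisson-bialg-1}-\eqref{Poisson-bialg-4} between $\Delta, \delta$ and the pre-Poisson operations on $P$, and from the two compatibilities \eqref{Poisson-bialg-5}-\eqref{Poisson-bialg-6} that make $(P^*, \Delta^*, \delta^*)$ a pre-Poisson algebra.

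First, I would plug \eqref{coboundary-condition1}-\eqref{coboundary-condition2} into each of \eqref{Poisson-bialg-1}-\eqref{Poisson-bialg-4}. Using Proposition \ref{rep-def-pre-Poisson}—equivalently, the pre-Poisson axioms \eqref{pre-poisson1}-\eqref{pre-poisson2}—together with the Zinbiel identity \eqref{Zin-identity} and the pre-Lie identity \eqref{preLie-identity}, I expect the contributions from $r$ itself to cancel after systematic rewriting, leaving a residue that depends only on $r-\tau(r)$. After grouping, I anticipate that \eqref{Poisson-bialg-1} and \eqref{Poisson-bialg-2} collapse to condition (iii), and \eqref{Poisson-bialg-3} and \eqref{Poisson-bialg-4} collapse to condition (iv); the structural reason is that each of these axioms is, after antisymmetrization in the arguments on which the pre-Lie operation acts, expressible purely via cross-multiplications of $L_\ast$ and $L_\circ$ with the skew part of $r$.

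Next, I would substitute into \eqref{Poisson-bialg-5}-\eqref{Poisson-bialg-6}, which live in $P^{\otimes 3}$ after expansion. The diagonal contributions, in which two copies of $r$ are acted on by the same type of multiplication, will reproduce the Zinbiel Yang-Baxter tensor $Z(r)$ and the S-equation tensor $S(r)$, exactly as in the proofs of the coboundary Zinbiel and pre-Lie bialgebra theorems cited above. The cross-terms, where one copy of $r$ is processed by $\ast_P$-multiplication and the other by $\circ_P$-multiplication, should assemble—after writing $r=\sum_i a_i\otimes b_i$ and using pre-Poisson compatibilities to commute operators past each other—into the $(r-\tau(r))\otimes b_i$ tails appearing in conditions (i) and (ii).

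The main obstacle will be combinatorial bookkeeping: each of the six compatibility conditions generates a large tensor expression, and sorting the resulting terms into $Z(r)$-, $S(r)$-, and $(r-\tau(r))$-contributions is delicate. My strategy is to fix the basis expansion $r=\sum_i a_i\otimes b_i$, expand each side slot by slot, and repeatedly invoke Proposition \ref{rep-def-pre-Poisson} together with the Zinbiel and pre-Lie associator identities to effect the cancellations. The converse direction is then automatic: if all of conditions \eqref{coboundary-Zinbiel-1}-\eqref{coboundary-pre-Lie-2} and (i)-(iv) hold, then every step of the above reduction reverses, so all six bialgebra axioms are satisfied and $(P,P_r^*)$ is a pre-Poisson bialgebra.
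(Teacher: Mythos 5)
Your overall strategy is the same as the paper's: quote the coboundary Zinbiel and coboundary pre-Lie results to dispose of \eqref{coboundary-Zinbiel-1}--\eqref{coboundary-pre-Lie-2}, then substitute \eqref{coboundary-condition1}--\eqref{coboundary-condition2} into the remaining six bialgebra axioms and sort the resulting tensors. Your identification of conditions (i) and (ii) with \eqref{Poisson-bialg-5} and \eqref{Poisson-bialg-6} (with the $Z(r)$ and $S(r)$ blocks coming from the ``diagonal'' terms and the $(r-\tau(r))\otimes b_i$ tails from the cross-terms) also agrees with the paper.

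There is, however, a concrete misattribution in your roadmap for the four mixed compatibilities. You predict that \eqref{Poisson-bialg-1} and \eqref{Poisson-bialg-2} collapse to condition (iii) and that \eqref{Poisson-bialg-3} and \eqref{Poisson-bialg-4} collapse to condition (iv). In fact, when $\Delta$ and $\delta$ are given by \eqref{coboundary-condition1} and \eqref{coboundary-condition2}, the conditions \eqref{Poisson-bialg-1} and \eqref{Poisson-bialg-2} are satisfied identically for \emph{every} $r\in P\otimes P$ --- they impose no constraint at all --- while condition (iii) is equivalent to \eqref{Poisson-bialg-3} alone and condition (iv) to \eqref{Poisson-bialg-4} alone. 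This is not a cosmetic point: following your plan you would search for a residue in \eqref{Poisson-bialg-1}--\eqref{Poisson-bialg-2} that does not exist, and you would expect \eqref{Poisson-bialg-3}--\eqref{Poisson-bialg-4} to merge into a single identity when they actually produce the two independent conditions (iii) and (iv). The structural reason for the discrepancy is that \eqref{Poisson-bialg-1} and \eqref{Poisson-bialg-2} are exactly the (anti)symmetrized combinations $\delta(x\ast_P y+y\ast_P x)$ and $\Delta(x\circ_P y-y\circ_P x)$, and for coboundary cochains these symmetrizations are forced by the pre-Poisson identities \eqref{pre-poisson1}--\eqref{pre-poisson2} on $P$ with no hypothesis on $r$; the genuinely constraining axioms are the ones involving $\Delta+\tau\Delta$ and $\delta-\tau\delta$, namely \eqref{Poisson-bialg-3} and \eqref{Poisson-bialg-4}. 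Once this bookkeeping is corrected, the rest of your plan (including the reversibility of each step for the converse) goes through as in the paper.
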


\begin{proof}
First by Conditions \eqref{coboundary-Zinbiel-1} and \eqref{coboundary-Zinbiel-2}, the dual map $\Delta:\otimes^2 P^*\to P^*$ defined by \eqref{coboundary-condition1} induces a Zinbiel algebra on $P^*$ such that $(P,P^*,\ast_{P},\Delta)$ is a Zinbiel bialgebra. By by Conditions \eqref{coboundary-pre-Lie-1} and \eqref{coboundary-pre-Lie-2}, the dual map $\delta:\otimes^2 P^*\to P^*$ defined by \eqref{coboundary-condition2} induces a pre-Lie algebra on $P^*$ such that $(P,P^*,\circ_{P},\delta)$ is a pre-Lie bialgebra.
Then the Conditions {\rm(i)} and {\rm(ii)} are equivalent to \eqref{Poisson-bialg-5} and \eqref{Poisson-bialg-6} respectively, which implies that $(P^*,\Delta^*,\delta^*)$ is a pre-Poisson algebra.

Finally, by a direct calculation, if $\Delta$ and $\delta$ are given by \eqref{coboundary-condition1} and \eqref{coboundary-condition2} for some $r\in P\otimes P$, $\Delta$ and $\delta$ satisfy the compatible conditions \eqref{Poisson-bialg-1}-\eqref{Poisson-bialg-2} naturally. Moreover, the Conditions {\rm(iii)} and {\rm(iv)} are equivalent to \eqref{Poisson-bialg-3} and \eqref{Poisson-bialg-4} respectively.
\end{proof}

\begin{defi}
Let $(P,\ast_P,\circ_P)$ be a pre-Poisson algebra and $r\in P\otimes P$. The equation
\begin{equation}
Z(r)=S(r)=0
\end{equation}
is called the {\bf pre-Poisson Yang-Baxter equation} on $P$.
\end{defi}

Later on, we will use the pre-Poisson Yang-Baxter equation and some invariant conditions to define the quasi-triangular pre-Poisson bialgebras.

\section{Quasi-triangular and factorizable pre-Poisson bialgebras}\label{sec:quasi-fac}

In this section, we introduce the notions of quasi-triangular pre-Poisson bialgebras and factorizable pre-Poisson bialgebras. We show that quasi-triangular pre-Poisson bialgebras can give rise to relative Rota-Baxter operators of weight $1$. Moreover, we show that the double of a pre-Poisson bialgebra naturally enjoys a factorizable pre-Poisson bialgebra structure.

\subsection{Quasi-triangular pre-Poisson bialgebras and relative Rota-Baxter operators}

For $r\in P\otimes P,$ we define $r_+,r_-:P^*\to P$ by
\begin{eqnarray}
\langle r_+(\xi), \eta\rangle=r(\xi,\eta)=\langle \xi, r_-(\eta) \rangle, \quad \forall \xi,\eta\in P^*.
\end{eqnarray}
Then the pre-Poisson algebra structure on $P^*$ in Theorem \ref{equ-coboundary-pre-Poisson-bialg} is given by
\begin{eqnarray}
\label{Pr-1}\xi \ast_r \eta&=&-(L^*_* (r_+\xi)+R^*_* (r_+\xi))\eta+R^*_*(r_{-}\eta) \xi,\\
\label{Pr-2}\xi \circ_r \eta&=& (L^*_{\circ}-R^*_{\circ})(r_{+}\xi) \eta-R^*_{\circ}(r_{-}\eta) \xi,
\quad \forall \xi,\eta\in P^*.
\end{eqnarray}

Now we introduce the notion of  $(L,R)$-invariance of a 2-tensor $r\in P\otimes P$, which is the main ingredient in the definition of a quasi-triangular pre-Poisson bialgebra.
\begin{defi}
Let $(P,\ast_P,\circ_P)$ be a pre-Poisson algebra and $r\in P\otimes P.$ Then $r$ is  {\bf $(L,R)$-invariant} if
\begin{eqnarray}
\Big(L_{\ast}(x) \otimes \Id-\Id\otimes (L_\ast+R_\ast) (x) \Big)r&=&0,\\
\Big(L_{\circ}(x) \otimes \Id+\Id\otimes (L_{\circ}-R_{\circ})(x) \Big)r&=&0,\quad \forall x \in P.
\end{eqnarray}
\end{defi}

\begin{lem}\label{invariance1}
Let $(P,\ast_P,\circ_P)$ be a pre-Poisson algebra and $r\in P\otimes P.$ Then $r$ is $(L,R)$-invariant if and only if
\begin{eqnarray}
r_+(L_{\ast}^* (x) \xi)+x \cdot_P r_+(\xi)&=&0,\\
r_+(L_{\circ}^* (x) \xi)-\{x, r_+(\xi)\}_P&=&0,\quad \forall x\in P,\xi \in P^{\ast}.
\end{eqnarray}
\end{lem}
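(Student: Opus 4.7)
The plan is straightforward component-level bookkeeping: write $r = \sum_i a_i \otimes b_i$ and translate each tensor-valued invariance condition into the corresponding operator identity by pairing the first leg with an arbitrary $\xi \in P^\ast$.

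First I would unpack $r_+$ in coordinates. From the definition $\langle r_+(\xi), \eta\rangle = r(\xi,\eta)$ one reads off $r_+(\xi) = \sum_i \langle \xi, a_i\rangle\, b_i$. Combined with the sign convention $\langle L_\ast^\ast(x)\xi, y\rangle = -\langle \xi, x \ast_P y\rangle$ of the coregular representation, a direct computation gives
$$r_+(L_\ast^\ast(x)\xi) = -\sum_i \langle \xi, x \ast_P a_i\rangle\, b_i, \qquad x \cdot_P r_+(\xi) = \sum_i \langle \xi, a_i\rangle\, (x \cdot_P b_i).$$
Since $x \cdot_P y = x \ast_P y + y \ast_P x$, the identity $r_+(L_\ast^\ast(x)\xi) + x \cdot_P r_+(\xi) = 0$, required for all $x \in P$ and all $\xi \in P^\ast$, is equivalent by arbitrariness of $\xi$ to the tensor identity
$$\sum_i (x \ast_P a_i) \otimes b_i \;=\; \sum_i a_i \otimes (x \cdot_P b_i), \qquad \forall\, x \in P,$$
which is exactly $\bigl(L_\ast(x) \otimes \Id - \Id \otimes (L_\ast + R_\ast)(x)\bigr) r = 0$.

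The second identity follows by the analogous argument in the pre-Lie direction: using $\{x, y\}_P = x \circ_P y - y \circ_P x$ one has $(\Id \otimes (L_\circ - R_\circ)(x))r = \sum_i a_i \otimes \{x, b_i\}_P$, and $r_+(L_\circ^\ast(x)\xi) = -\sum_i \langle \xi, x \circ_P a_i\rangle\, b_i$, $\{x, r_+(\xi)\}_P = \sum_i \langle \xi, a_i\rangle\, \{x, b_i\}_P$. The same pairing argument shows that $\bigl(L_\circ(x) \otimes \Id + \Id \otimes (L_\circ - R_\circ)(x)\bigr)r = 0$ is equivalent to $r_+(L_\circ^\ast(x)\xi) - \{x, r_+(\xi)\}_P = 0$ for all $x$ and $\xi$.

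There is no substantive obstacle here; the proof is pure bookkeeping. The only subtle point to watch is the built-in minus sign in the paper's convention for the coregular representation, which is precisely what produces the sign $+\, x \cdot_P r_+(\xi)$ in the first identity and the sign $-\, \{x, r_+(\xi)\}_P$ in the second, matching the respective signs in the tensor invariance conditions.
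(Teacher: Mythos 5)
Your proof is correct. The paper itself does not carry out this computation: its ``proof'' of Lemma \ref{invariance1} is a one-line citation to the Zinbiel-algebra case (\cite[Lemma 4.2]{Wang}) and the pre-Lie case (\cite[Lemma 2.9]{WBLS}), each of which handles one of the two displayed identities. Your coordinate computation --- writing $r=\sum_i a_i\otimes b_i$, reading off $r_+(\xi)=\sum_i\langle\xi,a_i\rangle b_i$, and using the sign convention $\langle L_\ast^\ast(x)\xi,y\rangle=-\langle\xi,x\ast_P y\rangle$ together with the nondegeneracy of the contraction on the first tensor leg --- is exactly the direct verification that those cited lemmas encapsulate, with the signs handled correctly, so it supplies the argument the paper delegates to references rather than taking a genuinely different route.
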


\begin{proof}
It follows from \cite[Lemma 4.2]{Wang} and \cite[Lemma 2.9]{WBLS}.
\end{proof}

\begin{pro}\label{pre-Poisson-inv}
Let $(P,\ast_P,\circ_P)$ be a pre-Poisson algebra and $r\in P \otimes P.$ If the skew-symmetric part $a$ of $r$ is $(L,R)$-invariant, then for all $x,y\in P$, we have
\begin{eqnarray*}
&&\Big(L_\circ(x)L_\ast(y)\otimes \Id
-L_{\circ}(x \ast_P y)\otimes \Id-L_\circ(x)\otimes L_\ast(y)+L_\circ(y)\otimes L_\ast(x)-\Id \otimes L_{\circ}(x \ast_P y )\\
&&+\Id \otimes L_\ast(x)L_{\circ}(y) \Big)a=0;\\
&&\Big(L_\ast(x)L_\circ(y)\otimes \Id
-L_{\ast}(x \circ_P y)\otimes \Id-L_\ast(x)\otimes L_\circ(y)-L_\circ(y)\otimes L_\ast(x)+\Id \otimes L_{\ast}(x \circ_P y)\\
&&-\Id \otimes L_\ast(x)L_{\circ}(y) \Big)a=0;\\
&&\Big( L_*(y)\otimes L_{\circ}(x)-L_{\circ}(x) \otimes L_*(y)+\Id \otimes L_*(x\circ_P y)-L_*(x\circ_P y)\otimes \Id+L_{\circ}(x) L_*(y)\otimes \Id\\
&&-\Id \otimes L_{\circ}(x) L_{\ast}(y) \Big) (r-\tau(r))=0;\\
&&\Big( L_{\circ}(x)\otimes L_*(y)-L_{\circ}(y)\otimes L_*(x)+L_{\circ}(x\circ_P y)\otimes \Id-
L_{\circ}(x)L_*(y)\otimes \Id+\Id \otimes L_{\circ}(x\ast_P y)\\
&&-\Id\otimes L_*(x)L_{\circ}(y)\Big) (r-\tau(r))=0.
\end{eqnarray*}
\end{pro}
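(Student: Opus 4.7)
The plan is to test each of the four tensor identities against an arbitrary covector pair $\xi\otimes\eta\in P^{*}\otimes P^{*}$ and reduce it to a scalar identity inside $P$. Writing $B := a_{+}$ and $u := B(\xi)$, recall from Lemma~\ref{invariance1} that the $(L,R)$-invariance of $a$ is equivalent to the operator identities
\[
B(L_{\ast}^{\ast}(x)\xi) \;=\; -\,x\cdot_{P} B(\xi), \qquad B(L_{\circ}^{\ast}(x)\xi) \;=\; \{x,\,B(\xi)\}_{P}, \qquad \forall x\in P,
\]
while the skew-symmetry $\tau(a)=-a$ supplies $\langle B(\xi),\eta\rangle = -\langle B(\eta),\xi\rangle$.

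Next, pair each of the four identities with $\xi\otimes\eta$. A generic term $(T\otimes S)a$ pairs to $\langle B(T^{\ast}\xi), S^{\ast}\eta\rangle$, and iterated application of the two displayed identities telescopes every composition of dual maps $L_{\ast}^{\ast}$, $L_{\circ}^{\ast}$ acting on $\xi$ into an expression in $u$ built from the Poisson operations $\cdot_{P}$ and $\{\cdot,\cdot\}_{P}$. Each tensor identity thus takes the scalar form $\langle\eta, \Phi(x,y,u)\rangle = 0$ for some $\Phi(x,y,u)\in P$ that is polynomial in the pre-Poisson operations $\ast_{P}$, $\circ_{P}$, $\cdot_{P}$, $\{\cdot,\cdot\}_{P}$. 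By the arbitrariness of $\eta$, each identity reduces to the vanishing of $\Phi$.

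The final step is to verify $\Phi(x,y,u)=0$ in each case. For (i) and (ii), the cancellations are governed essentially by the compatibility \eqref{pre-poisson1} (respectively \eqref{pre-poisson2}), applied to the combination $\{x,u\}_{P}\ast_{P}y$ (respectively $u\circ_{P}(x\ast_{P}y)$) that arises, together with the Leibniz rule $\{x,y\cdot_{P}z\}_{P} = \{x,y\}_{P}\cdot_{P}z + y\cdot_{P}\{x,z\}_{P}$ and the commutativity of $\cdot_{P}$. For (iii) and (iv), which involve $r-\tau(r) = 2a$, one additionally invokes the skew-symmetry $\langle B(\xi),\eta\rangle = -\langle B(\eta),\xi\rangle$ to recombine terms in which the role of $\xi$ and $\eta$ has been swapped, before the compatibilities \eqref{pre-poisson1} and \eqref{pre-poisson2} finish the cancellation.

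The main obstacle is bookkeeping rather than any deep structural issue. Each identity expands into six tensor summands, each of which produces several pre-Poisson terms after applying Lemma~\ref{invariance1}; sign tracking from the conventions $\langle L_{\ast}^{\ast}(x)\xi,y\rangle = -\langle\xi, x\ast_{P} y\rangle$ and $\langle L_{\circ}^{\ast}(x)\xi,y\rangle = -\langle\xi, x\circ_{P} y\rangle$ is tedious but forced. The conceptual content is that each of (i)--(iv) is a tautological consequence of the pre-Poisson axioms once the invariance has been used to strip all operators acting on the $\xi$-side of the pairing, so the argument is routine but laborious.
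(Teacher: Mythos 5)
Your plan coincides with the paper's own proof: pair each identity with $\xi\otimes\eta$, use Lemma~\ref{invariance1} to convert $a_+(L_\ast^*(x)\xi)$ and $a_+(L_\circ^*(x)\xi)$ into $-x\cdot_P a_+(\xi)$ and $\{x,a_+(\xi)\}_P$, move the remaining dual operators across the pairing, and kill the resulting element of $P$ with \eqref{pre-poisson1}--\eqref{pre-poisson2} and the relations \eqref{sub-adj-poisson}. The only cosmetic difference is that for (iii) and (iv) no extra $\xi\leftrightarrow\eta$ recombination is needed, since $r-\tau(r)=2a$ reduces those cases to the same computation on $a$.
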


\begin{proof}
Since the skew-symmetric part $a$ of $r$ is $(L,R)$-invariant, by Lemma \ref{invariance1}, for all $x,y\in P$ and $\xi,\eta\in P^*$, we have
\begin{eqnarray*}
&&\Big(L_\circ(x)L_\ast(y)\otimes \Id
-L_{\circ}(x \ast_P y)\otimes \Id-L_\circ(x)\otimes L_\ast(y)+L_\circ(y)\otimes L_\ast(x)-\Id \otimes L_{\circ}(x \ast_P y )\\
&&+\Id \otimes L_\ast(x)L_{\circ}(y) \Big)a(\xi,\eta)\\
&=& a(L^*_*(y)L^*_{\circ}(x)\xi,\eta)+ a(L^*_{\circ}(x\ast_P y)\xi,\eta)-a(L^*_{\circ}(x)\xi,L^*_*(y)\eta)+a(L^*_{\circ}(y)\xi,L^*_*(x)\eta)\\
&&+a(\xi,L^*_{\circ}(x\ast_P y)\eta)+a(\xi,L^*_{\circ}(y)L^*_*(x)\eta)\\
&=& \langle a_+(L^*_*(y)L^*_{\circ}(x)\xi),\eta\rangle+\langle a_+(L^*_{\circ}(x\ast_P y)\xi),\eta \rangle-\langle a_+(L^*_{\circ}(x)\xi), L^*_*(y)\eta  \rangle+\langle  a_+(L^*_{\circ}(y)\xi),L^*_*(x)\eta   \rangle\\
&&+\langle a_+(\xi),L^*_{\circ}
(x\ast_P y)\eta  \rangle+\langle a_+(\xi),L^*_{\circ}(y)L^*_*(x)\eta   \rangle\\
&=& \langle  -y\cdot_{P} \{x,a_+(\xi)\}_P+\{x\ast_P y,a_+(\xi)\}_P+y\ast_P \{x,a_+(\xi)\}_P-x\ast_P\{y,a_+(\xi)\}_P-(x\ast_P y)\circ_P a_+(\xi)\\
&&+x\ast_P (y\circ_P a_+(\xi)),\eta       \rangle\\
&=& \langle -\{x,a_+(\xi)\}_P \ast_P y-a_+(\xi)\circ_P (x\ast_P y)+x \ast_P (a_+(\xi) \circ_P y),\eta \rangle\\
&=& 0,
\end{eqnarray*}
which implies that
\begin{eqnarray*}
&&\Big(L_\circ(x)L_\ast(y)\otimes \Id
-L_{\circ}(x \ast_P y)\otimes \Id-L_\circ(x)\otimes L_\ast(y)+L_\circ(y)\otimes L_\ast(x)-\Id \otimes L_{\circ}(x \ast_P y )\\
&&+\Id \otimes L_\ast(x)L_{\circ}(y) \Big)a=0.
\end{eqnarray*}
We can also prove the other three equations using the same method. We omit the details. The proof is finished.
\end{proof}

\begin{defi}
Let $(P,\ast_P,\circ_P)$ be a pre-Poisson algebra. If $r\in P\otimes P$ satisfies $Z(r)=S(r)=0$ and the skew-symmetric part $a$ of $r$ is $(L,R)$-invariant, then the pre-Poisson bialgebra $(P,P_{r}^{\ast})$ induced by $r$ is called a {\bf quasi-triangular pre-Poisson bialgebra}.
Moreover, if $r$ is symmetric, then $(P,P_{r}^{\ast})$ is called a {\bf triangular pre-Poisson bialgebra}.
\end{defi}

Denote by $I$ the operator
\begin{eqnarray}\label{I}
I=r_{+}-r_{-}:P^{\ast}\longrightarrow P.
\end{eqnarray}

Note that $I^{\ast}=-I$. Actually, $\frac{1}{2} I$ is the contraction of the skew-symmetric part $a$ of $r$, which means that $\frac{1}{2} \langle I(\xi),\eta \rangle=a(\xi,\eta).  $
If $r$ is symmetric, then $I=0$.

Now we give another characterization of the $(L,R)$-invariant condition.

\begin{pro}\label{invariance2}
Let $(P,\ast_P,\circ_P)$ be a pre-Poisson algebra and $r\in P\otimes
P.$ The skew-symmetric part $a$ of $r$ is $(L,R)$-invariant if
and only if $I\circ L^*_*(x)=-(L_*(x)+R_*(x))\circ I$ and $I \circ L^*_{\circ}(x)=(L_\circ(x)-R_\circ(x))\circ I$,   for all $x\in P$,  where $I$ is given by \eqref{I}.
\end{pro}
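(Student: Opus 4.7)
The proof will be a direct reduction to Lemma \ref{invariance1}, applied to the skew-symmetric part $a$ of $r$ rather than to $r$ itself. The key identification is the remark just above the proposition: $\tfrac{1}{2}I$ equals the contraction $a_+$ of the skew-symmetric $2$-tensor $a$, because $\langle a_+(\xi),\eta\rangle = a(\xi,\eta) = \tfrac{1}{2}\langle(r_+-r_-)(\xi),\eta\rangle = \tfrac{1}{2}\langle I(\xi),\eta\rangle$ for all $\xi,\eta\in P^*$. So whatever $(L,R)$-invariance buys us for $r$ via Lemma \ref{invariance1}, we can rephrase equivalently in terms of $I$.

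The plan is as follows. First, I would observe that since $(L,R)$-invariance is a linear condition on a tensor in $P\otimes P$ expressed via the operators $L_\ast,R_\ast,L_\circ,R_\circ$, Lemma \ref{invariance1} applies verbatim to the skew-symmetric part $a$: namely, $a$ is $(L,R)$-invariant if and only if
\begin{eqnarray*}
a_+(L_\ast^*(x)\xi) + x\cdot_P a_+(\xi) &=& 0,\\
a_+(L_\circ^*(x)\xi) - \{x, a_+(\xi)\}_P &=& 0,
\end{eqnarray*}
for all $x\in P$ and $\xi\in P^*$. Then I would substitute $a_+ = \tfrac{1}{2}I$ and clear the factor of $\tfrac{1}{2}$, yielding $I(L_\ast^*(x)\xi) = -x\cdot_P I(\xi)$ and $I(L_\circ^*(x)\xi) = \{x, I(\xi)\}_P$.

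The final step is to translate the right-hand sides into the pre-Poisson operators. By the definition of the sub-adjacent Poisson algebra, $x\cdot_P y = x\ast_P y + y\ast_P x = (L_\ast(x) + R_\ast(x))y$ and $\{x,y\}_P = x\circ_P y - y\circ_P x = (L_\circ(x) - R_\circ(x))y$. Applying these with $y = I(\xi)$ turns the two identities into precisely
\[
I\circ L_\ast^*(x) = -(L_\ast(x)+R_\ast(x))\circ I, \qquad I\circ L_\circ^*(x) = (L_\circ(x)-R_\circ(x))\circ I,
\]
establishing one direction. The converse is immediate upon reversing these substitutions.

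I do not anticipate any serious obstacle: the whole argument is an algebraic unwinding of Lemma \ref{invariance1} combined with the definitions of $\cdot_P$ and $\{\cdot,\cdot\}_P$ in terms of the left and right multiplications. The only point requiring mild care is that one must apply Lemma \ref{invariance1} to $a$ (not to $r$); this is legitimate because the $(L,R)$-invariance conditions are linear in the tensor, so the statement of the lemma carries over to any $2$-tensor, in particular to the skew-symmetric part $a$ of $r$.
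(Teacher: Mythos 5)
Your proof is correct. The paper itself disposes of this proposition in one line, by citing the corresponding statements for Zinbiel algebras and pre-Lie algebras separately (\cite[Proposition 4.5]{Wang} and \cite[Proposition 2.12]{WBLS}), exactly as it does for Lemma \ref{invariance1}. Your route instead derives the proposition internally: you apply Lemma \ref{invariance1} to the $2$-tensor $a$ (which is legitimate simply because that lemma is stated for an arbitrary element of $P\otimes P$ --- you do not even need the linearity remark), identify $a_+=\tfrac12 I$ from $a(\xi,\eta)=\tfrac12\langle I\xi,\eta\rangle$, and then rewrite $x\cdot_P y=(L_\ast(x)+R_\ast(x))y$ and $\{x,y\}_P=(L_\circ(x)-R_\circ(x))y$ with $y=I(\xi)$. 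Every step checks out, and both directions follow since each substitution is an equivalence. What your version buys is a self-contained two-line reduction that makes the logical dependence on Lemma \ref{invariance1} explicit, whereas the paper's citation implicitly bundles the Zinbiel and pre-Lie cases from the external references; the mathematical content is the same.
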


\begin{proof}
It follows from \cite[Proposition 4.5]{Wang} and \cite[Proposition 2.12]{WBLS}.
\end{proof}

\begin{lem}
Let $(P,\ast_P,\circ_P)$ be a pre-Poisson algebra and $r\in P\otimes P.$
Let $r=a+\Lambda$ with $a\in\wedge^2 A$ and $\Lambda \in S^2 (A)$. If $a$ is $(L,R)$-invariant, then
\begin{eqnarray*}
\langle \xi, x\ast_P a_+(\eta) \rangle+\langle \eta,x\cdot_P a_+(\xi)  \rangle&=&0, \\
\langle \xi, x\circ_P a_+(\eta) \rangle-\langle \eta,\{x, a_+(\xi)\}_P  \rangle&=&0,
\quad \forall x \in P, \xi,\eta \in P^{\ast}.
\end{eqnarray*}
Moreover, the pre-Poisson algebra structure on $P^{\ast}$ given by \eqref{Pr-1}-\eqref{Pr-2} reduces to
\begin{eqnarray*}
    \xi \ast_{r} \eta&=& -(L^*_*(\Lambda_{+}\xi)+R^*_*(\Lambda_{+}\xi))\eta+R^*_*(\Lambda_{+}\eta)\xi\\
    \xi \circ_{r} \eta&=&
    (L^*_{\circ}(\Lambda_{+}\xi)-R^*_{\circ}(\Lambda_{+}\xi))\eta-R^*_{\circ}(\Lambda_{+}\eta)\xi,
    \quad \forall \xi,\eta \in A^{\ast}.
\end{eqnarray*}
\end{lem}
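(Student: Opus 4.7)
The proof proceeds in two stages, with the first giving the two pairing identities and the second using them to strip the $a_+$ contributions from $\ast_r$ and $\circ_r$.

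For the first stage, the plan is to apply Lemma \ref{invariance1} directly to the skew-symmetric part $a$, which is itself $(L,R)$-invariant by hypothesis. This yields
\[
a_+(L_\ast^\ast(x)\xi) = -x\cdot_P a_+(\xi), \qquad a_+(L_\circ^\ast(x)\xi) = \{x,a_+(\xi)\}_P.
\]
Since $a\in\wedge^2 P$, the map $a_+$ is skew self-dual: $\langle a_+(\xi),\eta\rangle = -\langle a_+(\eta),\xi\rangle$. Combining this with the definitions $\langle L_\ast^\ast(x)\xi,y\rangle = -\langle\xi,x\ast_P y\rangle$ and $\langle L_\circ^\ast(x)\xi,y\rangle = -\langle\xi,x\circ_P y\rangle$, I will compute
\[
\langle\xi, x\ast_P a_+(\eta)\rangle = -\langle L_\ast^\ast(x)\xi, a_+(\eta)\rangle = \langle\eta, a_+(L_\ast^\ast(x)\xi)\rangle = -\langle\eta, x\cdot_P a_+(\xi)\rangle,
\]
which gives the first asserted identity. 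The same dualization argument with $L_\circ^\ast$ in place of $L_\ast^\ast$ yields the second identity.

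For the second stage, decompose $r=a+\Lambda$, so that $r_+=a_++\Lambda_+$ and $r_-=-a_++\Lambda_+$. Substituting into \eqref{Pr-1} isolates the $a_+$-dependent terms
\[
-\bigl(L_\ast^\ast(a_+\xi)+R_\ast^\ast(a_+\xi)\bigr)\eta - R_\ast^\ast(a_+\eta)\xi;
\]
pairing this against an arbitrary $z\in P$ rewrites it as $\langle\eta, z\cdot_P a_+(\xi)\rangle + \langle\xi, z\ast_P a_+(\eta)\rangle$, which vanishes by the first identity of the lemma. An entirely analogous computation applied to \eqref{Pr-2} gives $\langle\eta,\{z,a_+\xi\}_P\rangle - \langle\xi, z\circ_P a_+(\eta)\rangle = 0$ by the second identity, so the $a_+$-terms in $\xi\circ_r\eta$ also vanish. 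What remains in each formula are precisely the $\Lambda_+$-expressions claimed.

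There is no genuine obstacle here; the only care required is bookkeeping of signs. In particular one must track carefully that $r_-=-a_++\Lambda_+$ (the sign flip coming from skew-symmetry of $a$ versus symmetry of $\Lambda$), since the third summand in each of \eqref{Pr-1}, \eqref{Pr-2} is the term whose sign behavior differs between $a$ and $\Lambda$ and ultimately combines with the other two summands to produce a total cancellation via the invariance pairings.
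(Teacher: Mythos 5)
Your proof is correct. Where the paper disposes of this lemma in one line by citing the corresponding statements for the two constituent structures separately (Lemma 4.6 of the Zinbiel paper \cite{Wang} for the $\ast$-part and Lemma 2.13 of \cite{WBLS} for the $\circ$-part), you give a self-contained direct verification, and every step checks out: Lemma \ref{invariance1} applied to $a$ gives $a_+(L_\ast^\ast(x)\xi)=-x\cdot_P a_+(\xi)$ and $a_+(L_\circ^\ast(x)\xi)=\{x,a_+(\xi)\}_P$; combining these with the skew self-duality $\langle a_+(\xi),\eta\rangle=-\langle a_+(\eta),\xi\rangle$ yields the two pairing identities exactly as you write them; and the decomposition $r_+=a_++\Lambda_+$, $r_-=-a_++\Lambda_+$ correctly isolates the $a_+$-terms of \eqref{Pr-1} and \eqref{Pr-2}, whose pairings against an arbitrary $z\in P$ are precisely the left-hand sides of the two identities (up to an overall sign in the $\circ_r$ case) and hence vanish. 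The only thing your writeup loses relative to the paper's citation is the observation that the two identities are literally the Zinbiel-algebra and pre-Lie-algebra invariance statements in disguise, so no genuinely new pre-Poisson computation is needed; what it gains is that the reader does not have to unwind two external references to see why the $a_+$-contributions cancel. Your emphasis on the sign flip $r_-=-a_++\Lambda_+$ is exactly the right point of care.
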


\begin{proof}
It follows from \cite[Lemma 4.6]{Wang} and \cite[Lemma 2.13]{WBLS}.
\end{proof}

\begin{thm}\label{heart}
Let $(P,\ast_P,\circ_P)$ be a pre-Poisson algebra and $r\in P\otimes P.$ Assume that the skew-symmetric part $a$ of $r$ is $(L,R)$-invariant. Then $r$ satisfies the pre-Poisson Yang-Baxter equation $Z(r)=S(r)=0$ if and only if $(P^{\ast},\ast_{r},\circ_{r})$ is a pre-Poisson algebra and the linear maps $r_{+},r_{-}:(P^{\ast},\ast_{r},\circ_{r}) \longrightarrow (P,\ast_P,\circ_P)$ are both pre-Poisson algebra homomorphisms.
\end{thm}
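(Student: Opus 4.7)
The strategy is to split the pre-Poisson Yang--Baxter equation $Z(r)=S(r)=0$ into its Zinbiel component $Z(r)=0$ and its pre-Lie component $S(r)=0$, invoke the corresponding known results for each piece, and then glue the two structures on $P^{\ast}$ together via the $(L,R)$-invariance of the skew-symmetric part $a$.

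First I would invoke the one-variable analogues already in the literature. By \cite{Wang}, under the $L_{\ast}$-invariance of $a$, the equation $Z(r)=0$ is equivalent to $(P^{\ast},\ast_{r})$ being a Zinbiel algebra with $r_{+},r_{-}\colon(P^{\ast},\ast_{r})\to(P,\ast_{P})$ Zinbiel homomorphisms. By \cite{WBLS,Bai}, under the $L_{\circ}$-invariance of $a$, the equation $S(r)=0$ is equivalent to $(P^{\ast},\circ_{r})$ being a pre-Lie algebra with $r_{\pm}\colon(P^{\ast},\circ_{r})\to(P,\circ_{P})$ pre-Lie homomorphisms. Since $(L,R)$-invariance packages both invariance hypotheses, these two strands give the Zinbiel and pre-Lie parts of the statement simultaneously; once we know $(P^{\ast},\ast_{r},\circ_{r})$ is a pre-Poisson algebra, they also immediately yield that $r_{\pm}$ are pre-Poisson homomorphisms.

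What remains is to certify the pre-Poisson compatibility identities \eqref{pre-poisson1}--\eqref{pre-poisson2} between $\ast_{r}$ and $\circ_{r}$ on $P^{\ast}$. For this I would appeal to Theorem \ref{equ-coboundary-pre-Poisson-bialg}: conditions (i)--(iv) there are precisely what is required for the dual maps $\Delta^{\ast}=\ast_{r}$ and $\delta^{\ast}=\circ_{r}$ to define a pre-Poisson algebra on $P^{\ast}$. Under $(L,R)$-invariance of $a$, Proposition \ref{pre-Poisson-inv} supplies conditions (iii) and (iv) directly. For conditions (i) and (ii), the terms proportional to $Z(r)$ and $S(r)$ vanish by hypothesis, and the residual pieces, all involving $r-\tau(r)=2a$, coincide (after reorganizing) with the two $(L,R)$-invariance identities derived in Proposition \ref{pre-Poisson-inv}. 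Thus all four conditions hold, and $(P^{\ast},\ast_{r},\circ_{r})$ is a pre-Poisson algebra.

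The main obstacle is precisely this gluing step: the compatibility conditions (i) and (ii) of Theorem \ref{equ-coboundary-pre-Poisson-bialg} entangle $Z(r)$, $S(r)$ and cross-terms in $a$, so the cancellation is not formal. It relies on the pre-Poisson axioms on $P$ together with the translation formulas from Lemma \ref{invariance1}, which allow one to commute $r_{+}$ past $L_{\ast}^{\ast}$ and $L_{\circ}^{\ast}$ up to signs and $\cdot_{P},\{\cdot,\cdot\}_{P}$ corrections. For the converse direction, one simply restricts: if $(P^{\ast},\ast_{r},\circ_{r})$ is a pre-Poisson algebra and $r_{\pm}$ are pre-Poisson homomorphisms, then projecting to the Zinbiel and pre-Lie components and reapplying \cite{Wang} and \cite{WBLS,Bai} respectively forces $Z(r)=0$ and $S(r)=0$.
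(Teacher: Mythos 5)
Your proposal is correct, and its skeleton coincides with the paper's: both proofs split $Z(r)=S(r)=0$ into the Zinbiel piece (handled by \cite[Theorem 4.7]{Wang}) and the pre-Lie piece (handled by \cite[Theorem 2.14]{WBLS}), and the converse direction in both is just restriction to the two components. The difference is that the paper's proof stops there and concludes with a bare ``therefore,'' leaving implicit the verification that $\ast_r$ and $\circ_r$ satisfy the pre-Poisson compatibility identities \eqref{pre-poisson1}--\eqref{pre-poisson2} on $P^*$ --- the two cited theorems only give the Zinbiel and pre-Lie structures separately. You make this gluing step explicit, and your mechanism for it is the right one: conditions (i)--(ii) of Theorem \ref{equ-coboundary-pre-Poisson-bialg} are exactly the dual compatibility identities \eqref{Poisson-bialg-5}--\eqref{Poisson-bialg-6}, the $Z(r)$- and $S(r)$-terms there vanish by hypothesis, and the residual terms in $r-\tau(r)=2a$ are killed by the first two identities of Proposition \ref{pre-Poisson-inv}, which is precisely what that proposition was proved for. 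One small imprecision: only conditions (i)--(ii) of Theorem \ref{equ-coboundary-pre-Poisson-bialg} govern the pre-Poisson structure on $P^*$; conditions (iii)--(iv) correspond to the mixed bialgebra identities \eqref{Poisson-bialg-3}--\eqref{Poisson-bialg-4} and are not needed for Theorem \ref{heart}, so invoking Proposition \ref{pre-Poisson-inv} for them is superfluous (though harmless). Net effect: your write-up is, if anything, more complete than the paper's own proof of this statement.
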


\begin{proof}
Since the skew-symmetric part $a$ of $r$ is $(L,R)$-invariant, by \cite[Theorem 4.7]{Wang}, we deduce that $r$ satisfies $Z(r)=0$ if and only if $(P^{\ast},\ast_{r})$ is a Zinbiel algebra and the linear maps $r_{+},r_{-}:(P^{\ast},\ast_{r}) \longrightarrow (P,\ast_P)$ are both pre-Poisson algebra homomorphisms. Similarly, by \cite[Theorem 2.14]{WBLS}, we deduce that $r$ satisfies the $S(r)=0$ if and only if $(P^{\ast},\circ_{r})$ is a pre-Lie algebra and the linear maps $r_{+},r_{-}:(P^{\ast},\circ_{r}) \longrightarrow (P,\circ_P)$ are both pre-Lie algebra homomorphisms. Therefore, $r$ satisfies the pre-Poisson Yang-Baxter equation $Z(r)=S(r)=0$ if and only if $(P^{\ast},\ast_{r},\circ_{r})$ is a pre-Poisson algebra and the linear maps $r_{+},r_{-}:(P^{\ast},\ast_{r},\circ_{r}) \longrightarrow (P,\ast_P,\circ_P)$ are both pre-Poisson algebra homomorphisms.
\end{proof}

Next we introduce the notion of relative Rota-Baxter operators on pre-Poisson algebras. We need to define actions of pre-Poisson algebras.

\begin{defi}
Let $(P,\ast_P,\circ_P)$ and $(Q,\ast_Q,\circ_Q)$ be two pre-Poisson algebras. An {\bf action} of $(P,\ast_P,\circ_P)$ on $(Q,\ast_Q,\circ_Q)$ consists of four linear maps $\rho,\mu,\theta,\gamma:P \to \gl(Q)$, such that $(Q;\rho,\mu,\theta,\gamma)$ is a representation of the pre-Poisson algebra $(P,\ast_P,\circ_P)$ and the following equations are satisfied:
\begin{eqnarray}
\label{action1} u\ast_{Q} (\mu(x)v)&=&\mu(x)(u\ast_Q v+v\ast_Q u);\\
\label{action2} u\ast_{Q} (\rho(x)v)&=&(\mu(x)u)\ast_Q v+(\rho(x)u)\ast_Q v;\\
\label{action3} \rho(x)(u\ast_Q v)&=&(\mu(x)u)\ast_Q v+(\rho(x)u)\ast_Q v;\\
\label{action4} \theta(x)(u\circ_Q v)&=& (\theta(x)u)\circ_Q v+u\circ_Q (\theta(x)v)-(\gamma(x)u)\circ_Q v;\\
\label{action5} \gamma(x)(u\circ_Q v)&=& u\circ_Q(\gamma(x)v)-v\circ_Q(\gamma(x)u)+\gamma(x)(v \circ_Q u);\\
\label{action6} \mu(x)(u\circ_Q v-v\circ_Q u)&=&u \circ_Q (\mu(x)v)-v \ast_Q (\gamma(x)u);\\
\label{action7} \theta(x)(u\ast_Q v) &=& (\theta(x)u-\gamma(x)u)\ast_Q v+u \ast_Q (\theta(x)v);\\
\label{action8} \rho(x)(u\circ_Q v) &=& (\theta(x)u-\gamma(x)u)\ast_Q v+u \circ_Q (\rho(x)v);\\
\label{action9} \gamma(x)(u\ast_Q v+v\ast_Q u) &=& u \ast_Q (\gamma(x)v)+v \ast_Q (\gamma(x)u);\\
\label{action10}\rho(x)(u\circ_Q v)&=& (\rho(x)u+\mu(x)u)\circ_Q v-u\ast_Q(\theta(x)v),
\end{eqnarray}
for all $x\in A,u,v\in B$.
\end{defi}

\begin{rmk}
Similar to representations of pre-Poisson algebras, an action $(\rho,\mu,\theta,\gamma)$ of $(P,\ast_P,\circ_P)$ on $(Q,\ast_Q,\circ_Q)$ has an equivalent characterization in terms of the pre-Poisson algebra structure on $P\oplus Q$ given by
\begin{eqnarray*}
(x+u)\ast_{(\rho,\mu)} (y+v)&=& x\ast_P y+\rho(x)v+\mu(y)u+u\ast_Q v;\\
(x+u)\circ_{(\theta,\gamma)} (y+v)&=& x\circ_P y+\theta(x)v+\gamma(y)u+u\circ_Q v,\quad \forall x,y\in P,u,v\in Q.
\end{eqnarray*}
\end{rmk}

\begin{ex}{\rm
Let $(P,\ast_P,\circ_P)$ be a pre-Poisson algebra. By a direct calculation, we can check that  $(P;L_\ast,R_\ast,L_\circ,R_\circ)$ is an action of $(P,\ast_P,\circ_P)$ on itself, which is called the {\bf regular action}.
}
\end{ex}

\begin{defi}
Let $(P,\ast_P,\circ_P)$ and $(Q,\ast_Q,\circ_Q)$ be two pre-Poisson algebras and $(\rho,\mu,\theta,\gamma)$ be an action of $(P,\ast_P,\circ_P)$ on $(Q,\ast_Q,\circ_Q)$. Then a linear map $T:Q\to P$ is called a {\bf relative Rota-Baxter operator of weight $\lambda$} with respect to the action $(\rho,\mu,\theta,\gamma)$ of $P$ on $Q$, if $T$ satisfies the following equations:
\begin{eqnarray}
\label{rRBO1}(Tu)\ast_{P} (Tv)&=&T(\rho(Tu)v+\mu(Tv)u+\lambda u\ast_Q v),\\
\label{rRBO2}(Tu)\circ_{P} (Tv)&=&T(\theta(Tu)v+\gamma(Tv)u+\lambda u\circ_Q v), \quad \forall u,v\in Q.
\end{eqnarray}
In particular, a relative Rota-Baxter operator on a pre-Poisson algebra  $(P,\ast_P,\circ_P)$ with respect to the regular action is called a {\bf Rota-Baxter operator of weight $\lambda$}.  A {\bf Rota-Baxter pre-Poisson algebra} $(P,\ast_P,\circ_P,\huaB)$ of weight $\lambda$ is a pre-Poisson algebra equipped with a Rota-Baxter operator $\huaB$ of weight $\lambda$.
\end{defi}


Let $(P,\ast_P,\circ_P,\huaB)$ be a Rota-Baxter pre-Poisson algebra of weight $\lambda$. Then there is a new pre-Poisson algebra structure $(\ast_{\huaB},\cdot_{\huaB})$ on $P$ defined by
\begin{eqnarray*}
x \ast_{\huaB} y&=& \huaB(x)\ast_P y+x \ast_P \huaB(y)+\lambda x\ast_P y;\\
x \circ_{\huaB} y&=& \huaB(x)\circ_P y+x \circ_P \huaB(y)+\lambda x\circ_P y.
\end{eqnarray*}
The pre-Poisson algebra $(P,\ast_{\huaB},\cdot_{\huaB})$ is called the {\bf descendent pre-Poisson algebra} and denoted by $P_{\huaB}$. Furthermore, $\huaB$ is a pre-Poisson algebra homomorphism from  $P_{\huaB}$ to $(P,\ast_P,\circ_P)$.

If $(P,P_r^*)$ is a coboundary pre-Poisson bialgebra induced by $r\in P\otimes P$, then by $I=r_+ - r_-$, the pre-Poisson algebra structure on $P^*$ given by \eqref{Pr-1}-\eqref{Pr-2} reduces to
\begin{eqnarray}
\label{rRBO-Pr1}\xi \ast_r \eta&=&-(L^*_* (r_+\xi)+R^*_* (r_+\xi))\eta+R^*_*(r_{+}\eta) \xi-R^*_*(I\eta) \xi,\\
\label{rRBO-Pr2}\xi \circ_r \eta&=& (L^*_{\circ}-R^*_{\circ})(r_{+}\xi) \eta-R^*_{\circ}(r_{+}\eta) \xi+R^*_{\circ}(I\eta) \xi,
\quad \forall \xi,\eta\in P^*.
\end{eqnarray}

Now we define new multiplications $(\ast_+,\circ_+)$ on $P^*$ as follows:
\begin{equation}
\xi \ast_+ \eta:=-R^*_*(I\eta) \xi, \quad \xi \circ_+ \eta:=R^*_{\circ}(I\eta) \xi, \quad \forall \xi,\eta\in P^*.
\end{equation}

\begin{pro}\label{ast+}
If the skew-symmetric part $a$ of $r$ is $(L,R)$-invariant, then $(P^*,\ast_+,\circ_+)$ is a pre-Poisson algebra.
\end{pro}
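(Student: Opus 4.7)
The plan is to verify in order that $(P^{\ast},\ast_+)$ is a Zinbiel algebra, that $(P^{\ast},\circ_+)$ is a pre-Lie algebra, and that the two pre-Poisson compatibility conditions \eqref{pre-poisson1}--\eqref{pre-poisson2} hold for $(\ast_+,\circ_+)$. The Zinbiel and pre-Lie parts are essentially the pre-Poisson shadows of verifications carried out in \cite{Wang} and \cite{WBLS} for factorizable Zinbiel and pre-Lie bialgebras, so the genuinely new content lies in the two mixed compatibilities.

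The key preliminary is to extract the full set of commutation relations for $I$. Proposition \ref{invariance2} supplies $I\circ L_{\ast}^{\ast}(x)=-(L_{\ast}(x)+R_{\ast}(x))\circ I$ and $I\circ L_{\circ}^{\ast}(x)=(L_{\circ}(x)-R_{\circ}(x))\circ I$; transposing each relation and using $I^{\ast}=-I$ yields the companions $I\circ R_{\ast}^{\ast}(x)=R_{\ast}(x)\circ I$ and $I\circ R_{\circ}^{\ast}(x)=-R_{\circ}(x)\circ I$. Expressed in pairing form, these amount to the four identities
\begin{eqnarray*}
\langle \xi, I(\alpha)\ast_{P} x\rangle &=& \langle \alpha, I(\xi)\ast_{P} x\rangle,\\
\langle \xi, x\ast_{P} I(\alpha)\rangle + \langle \alpha, x\cdot_{P} I(\xi)\rangle &=& 0,\\
\langle \xi, I(\alpha)\circ_{P} x\rangle + \langle \alpha, I(\xi)\circ_{P} x\rangle &=& 0,\\
\langle \xi, x\circ_{P} I(\alpha)\rangle &=& \langle \alpha, \{x, I(\xi)\}_{P}\rangle,
\end{eqnarray*}
valid for all $\xi,\alpha\in P^{\ast}$ and $x\in P$.

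With these in hand, the Zinbiel axiom for $\ast_+$, after pairing against an arbitrary $z\in P$, reduces using $I\circ R_{\ast}^{\ast}(x)=R_{\ast}(x)\circ I$ and the first two pairing identities to the trivial equality $(I\eta\cdot_{P}z)\cdot_{P}I\xi=(I\xi\cdot_{P}z)\cdot_{P}I\eta$, which holds by commutativity and associativity of $(P,\cdot_{P})$. The pre-Lie axiom for $\circ_+$ collapses in the same spirit to a consequence of the pre-Lie identity in $(P,\circ_{P})$ combined with the last two pairing identities. For the mixed relation \eqref{pre-poisson1} (and symmetrically \eqref{pre-poisson2}), pairing with $z\in P$ and repeatedly applying the pairing identities together with the pre-Poisson compatibilities on $P$ transforms both sides into the same expression, which becomes manifestly equal after invoking the Leibniz rule $\{x,y\cdot_{P}z\}_{P}=\{x,y\}_{P}\cdot_{P}z+y\cdot_{P}\{x,z\}_{P}$ of the sub-adjacent Poisson algebra $(P^{c},\cdot_{P},\{\cdot,\cdot\}_{P})$. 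The main obstacle lies in precisely this last bookkeeping: the four pairing identities must be combined with \eqref{pre-poisson1}--\eqref{pre-poisson2} on $P$ in the correct order so that the mixed $\ast$--$\circ$ terms cancel and the Poisson Leibniz rule can finally be invoked.
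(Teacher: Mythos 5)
Your proposal is correct and follows essentially the same route as the paper: both arguments pair each axiom for $(\ast_+,\circ_+)$ against an element of $P$ and use the $(L,R)$-invariance of $a$ via Proposition \ref{invariance2} together with its transposed companions $I\circ R_{\ast}^{\ast}(x)=R_{\ast}(x)\circ I$ and $I\circ R_{\circ}^{\ast}(x)=-R_{\circ}(x)\circ I$ (which the paper uses implicitly) to transport everything into identities on $P$. The only cosmetic difference is that you push the conversions one step further, so your reductions terminate at the commutativity, associativity, Jacobi and Leibniz identities of the sub-adjacent Poisson algebra $P^c$, whereas the paper stops at the Zinbiel identity, the pre-Lie identity and the compatibilities \eqref{pre-poisson1}--\eqref{pre-poisson2} of $(P,\ast_P,\circ_P)$ (and cites \cite{Wang} for the Zinbiel part); both endpoints are reachable and correct.
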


\begin{proof}
It follows from \cite[Proposition 4.12]{Wang} that $(P^*,\ast_+)$ is a Zinbiel algebra. By Proposition \ref{invariance2}, we have
\begin{eqnarray*}
&&\langle (\xi\circ_+ \eta)\circ_+ \zeta-\xi \circ_+ (\eta\circ_+ \zeta)-(\eta \circ_+ \xi)\circ_+ \zeta+\eta \circ_+ (\xi \circ_+ \zeta), x\rangle\\
&=& \langle R_{\circ}^* (I\zeta)R_{\circ}^* (I\eta) \xi-R^*_{\circ} (I(R^*_{\circ}(I \zeta)\eta))\xi- R_{\circ}^* (I\zeta)R_{\circ}^* (I\xi) \eta+R^*_{\circ} (I(R^*_{\circ}(I \zeta)\xi))\eta,x  \rangle\\
&=& \langle  \xi,(x \circ_P (I\zeta))\circ_P (I\eta)-x \circ_P ((I\eta)\circ_P(I\zeta)) -\{x\circ_P (I\zeta),I\eta\}_P + \{x,I\eta\}_P \circ_P (I\zeta) \rangle\\
&=& \langle  \xi, (I \eta) \circ_P (x\circ_P (I \zeta) )-x \circ_P ((I \eta)\circ_P (I \zeta) )+
(x \circ_P (I \eta))\circ_P (I \zeta)-((I \eta) \circ_P x )\circ_P (I \zeta)\\
&=& 0,
\end{eqnarray*}
which implies that $(P,\circ_+)$ is a pre-Lie algebra. Moreover, we have
\begin{eqnarray*}
&&\langle(\xi \circ_+ \eta-\eta \circ_+ \xi)\ast_+ \zeta-\xi \circ_+ (\eta \ast_+ \zeta) + \eta \ast_+ (\xi \circ_+ \zeta),x \rangle\\
&=& \langle R^*_*(I \zeta)(-R_{\circ}^*(I \eta)\xi+R_{\circ}^*(I \xi)\eta )+R^*_{\circ} (I(R^*_*(I\zeta)\eta))\xi-  R^*_* (I(R^*_{\circ}(I\zeta)\xi))\eta  \rangle,\\
&=& \langle \eta, (x \ast_P (I \zeta))\circ_P (I \xi)-\{x \ast_P (I \zeta), I \xi\}_P+
\{ x,I \xi\}_P \ast_P (I \zeta)-x \ast_P ((I \xi)\circ_P (I \zeta)) \rangle\\
&=& \langle \eta, (x \ast_P (I \zeta))\circ_P (I \xi)-\{x \ast_P (I \zeta), I \xi\}_P- (I \xi)\circ_P (x \ast_P (I \zeta))\\
&=& 0,
\end{eqnarray*}
which implies that $\xi \circ_+ \eta-\eta \circ_+ \xi)\ast_+ \zeta=\xi \circ_+ (\eta \ast_+ \zeta) - \eta \ast_+ (\xi \circ_+ \zeta)$ for all $\xi,\eta,\zeta\in P^*$. Similarly, we can also have $\xi \ast_+ \eta+\eta \ast_+ \xi)\circ_+ \zeta=\xi \ast_+ (\eta \circ_+ \zeta) + \eta \ast_+ (\xi \circ_+ \zeta)$. Therefore, $(P^*,\ast_+,\circ_+)$ is a pre-Poisson algebra.
\end{proof}

\begin{pro}
If the skew-symmetric part $a$ of $r$ is $(L,R)$-invariant, then $(P^{\ast};-L_{\ast}^{\ast}-R_{\ast}^{\ast},R_{\ast}^{\ast},L_{\circ}^{\ast}-R_{\circ}^{\ast},
-R_{\circ}^{\ast})$ is an action of the pre-Poisson algebra $(P,\ast_P,\circ_P)$ on the pre-Poisson algebra $(P^*,\ast_+,\circ_+)$ given in Proposition \ref{ast+}.
\end{pro}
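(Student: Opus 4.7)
The plan is to treat the quadruple $(-L_{\ast}^{\ast}-R_{\ast}^{\ast},R_{\ast}^{\ast},L_{\circ}^{\ast}-R_{\circ}^{\ast},-R_{\circ}^{\ast})$ first as a representation of $(P,\ast_P,\circ_P)$ (which is the coregular representation established earlier) and then to promote it to an action on the pre-Poisson algebra $(P^{\ast},\ast_+,\circ_+)$ by checking the ten compatibility axioms \eqref{action1}--\eqref{action10} with $Q=P^{\ast}$. The central technical tool will be Proposition \ref{invariance2}, which gives the intertwining identities
\[
I\circ L_{\ast}^{\ast}(x)=-(L_{\ast}(x)+R_{\ast}(x))\circ I,\qquad I\circ L_{\circ}^{\ast}(x)=(L_{\circ}(x)-R_{\circ}(x))\circ I.
\]
Since the operations $\ast_+$ and $\circ_+$ are defined via $R_{\ast}^{\ast}\circ I$ and $R_{\circ}^{\ast}\circ I$ respectively, these identities let us push $I$ past any coregular operator at the cost of changing which coregular operator appears.

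Next I would organize the verification into three blocks. The first block consists of the purely Zinbiel axioms \eqref{action1}--\eqref{action3}: here $\ast_+$ and the two maps $-L_{\ast}^{\ast}-R_{\ast}^{\ast}$, $R_{\ast}^{\ast}$ appear, so these reduce to the corresponding Zinbiel-action result already contained in \cite{Wang} (and used implicitly in the first half of the proof of Proposition \ref{ast+}). The second block consists of the purely pre-Lie axioms \eqref{action4}--\eqref{action6}: here $\circ_+$ and the two maps $L_{\circ}^{\ast}-R_{\circ}^{\ast}$, $-R_{\circ}^{\ast}$ appear, and these reduce to the pre-Lie-action result from \cite{WBLS}. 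The third block contains the mixed axioms \eqref{action7}--\eqref{action10}, which cannot be inherited from either sub-theory and must be verified directly.

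For the mixed axioms, my strategy is to pair both sides with an arbitrary $x\in P$, expand $\ast_+$ and $\circ_+$ according to their defining formulas, and then apply Proposition \ref{invariance2} to move $I$ through the coregular operators. After this step, each identity translates into a statement about elements of the form $x\ast_P(I\eta)$, $x\circ_P(I\xi)$, $(I\xi)\ast_P(I\eta)$, etc., which can be reduced to an identity inside $(P,\ast_P,\circ_P)$. The key inputs at this reduction stage are the two pre-Poisson compatibility relations \eqref{pre-poisson1}--\eqref{pre-poisson2}, together with the Zinbiel identity \eqref{Zin-identity} and the pre-Lie identity \eqref{preLie-identity}. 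The calculations should closely mirror those already carried out in the proof of Proposition \ref{ast+}, with only the location of the $I$'s changed.

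The main obstacle is bookkeeping rather than conceptual: the mixed axioms \eqref{action7}--\eqref{action10} are asymmetric in the roles of $u,v$ and involve all four coregular operators simultaneously, so one must be careful about the signs produced by the two intertwining rules (one sign change for the Zinbiel side, none for the pre-Lie side up to the $L_{\circ}-R_{\circ}$ combination). The crucial sanity check throughout is that whenever $I$ sweeps past a left multiplication by $x$, the symmetrization prescribed by the $(L,R)$-invariance matches the symmetrization built into the pre-Poisson compatibility axioms; granting this, all ten verifications proceed by the same template.
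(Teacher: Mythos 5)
Your proposal matches the paper's own (very terse) proof, which simply invokes Proposition \ref{invariance2} and a direct verification of the ten axioms \eqref{action1}--\eqref{action10} with details omitted; your elaboration of how that verification would go (dualize against $x\in P$, push $I$ through the coregular operators via the intertwining identities, and reduce to the pre-Poisson compatibility relations) is the intended computation. The only small slip is that \eqref{action6} is a mixed axiom (it involves $\ast_Q$, $\mu$ and $\gamma$ simultaneously) rather than a purely pre-Lie one, but it is handled by the same template you describe for \eqref{action7}--\eqref{action10}, so this does not affect correctness.
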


\begin{proof}
Since the skew-symmetric part $a$ of $r$ is $(L,R)$-invariant, by a direct calculation, we can check that the representation $(P^{\ast};-L_{\ast}^{\ast}-R_{\ast}^{\ast},R_{\ast}^{\ast},L_{\circ}^{\ast}-R_{\circ}^{\ast},
-R_{\circ}^{\ast})$ satisfies the equations \eqref{action1}-\eqref{action10} by Proposition \ref{invariance2}. We omit the details. Therefore,  $(P^{\ast};-L_{\ast}^{\ast}-R_{\ast}^{\ast},R_{\ast}^{\ast},L_{\circ}^{\ast}-R_{\circ}^{\ast},
-R_{\circ}^{\ast})$ is an action of the pre-Poisson algebra $(P,\ast_P,\circ_P)$ on the pre-Poisson algebra $(P^*,\ast_+,\circ_+)$.
\end{proof}

Then we show that quasi-triangular pre-Poisson bialgebras can give rise to relative Rota-Baxter operators of weight $1$ with respect to the action $(P^{\ast};-L_{\ast}^{\ast}-R_{\ast}^{\ast},R_{\ast}^{\ast},L_{\circ}^{\ast}-R_{\circ}^{\ast},
-R_{\circ}^{\ast})$.

\begin{thm}\label{quasi-rRBO}
Let $(P,P_{r}^{\ast})$  be a quasi-triangular pre-Poisson bialgebra induced by $r\in P \otimes P.$ Then $r_{+}:(P^*,\ast_+,\circ_+) \to (P,\ast_P,\circ_P) $ is a relative Rota-Baxter operator of weight $1$ with respect to the action $(P^{\ast};-L_{\ast}^{\ast}-R_{\ast}^{\ast},R_{\ast}^{\ast},L_{\circ}^{\ast}-R_{\circ}^{\ast},
-R_{\circ}^{\ast})$.
\end{thm}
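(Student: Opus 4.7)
The plan is to reduce the statement to the homomorphism property of $r_+$ established in Theorem \ref{heart}, by rewriting $\ast_r$ and $\circ_r$ using the decomposition into an ``action'' part and a ``descendent'' part $(\ast_+,\circ_+)$.

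First, I would invoke Theorem \ref{heart}: since by hypothesis the skew-symmetric part of $r$ is $(L,R)$-invariant and $r$ satisfies $Z(r)=S(r)=0$, the map $r_+:(P^{\ast},\ast_r,\circ_r)\longrightarrow (P,\ast_P,\circ_P)$ is a homomorphism of pre-Poisson algebras. Therefore
$$r_+(\xi)\ast_P r_+(\eta)=r_+(\xi\ast_r\eta),\qquad r_+(\xi)\circ_P r_+(\eta)=r_+(\xi\circ_r\eta).$$

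Next, the key observation is the explicit decomposition \eqref{rRBO-Pr1}--\eqref{rRBO-Pr2}, namely
\begin{eqnarray*}
\xi\ast_r\eta&=&\bigl(-L_\ast^{\ast}-R_\ast^{\ast}\bigr)(r_+\xi)\,\eta+R_\ast^{\ast}(r_+\eta)\,\xi+\bigl(-R_\ast^{\ast}(I\eta)\xi\bigr),\\
\xi\circ_r\eta&=&\bigl(L_\circ^{\ast}-R_\circ^{\ast}\bigr)(r_+\xi)\,\eta+\bigl(-R_\circ^{\ast}\bigr)(r_+\eta)\,\xi+R_\circ^{\ast}(I\eta)\xi,
\end{eqnarray*}
together with the definitions $\xi\ast_+\eta=-R_\ast^{\ast}(I\eta)\xi$ and $\xi\circ_+\eta=R_\circ^{\ast}(I\eta)\xi$. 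Substituting these into the displayed formulas gives
\begin{eqnarray*}
\xi\ast_r\eta&=&\bigl(-L_\ast^{\ast}-R_\ast^{\ast}\bigr)(r_+\xi)\,\eta+R_\ast^{\ast}(r_+\eta)\,\xi+\xi\ast_+\eta,\\
\xi\circ_r\eta&=&\bigl(L_\circ^{\ast}-R_\circ^{\ast}\bigr)(r_+\xi)\,\eta+\bigl(-R_\circ^{\ast}\bigr)(r_+\eta)\,\xi+\xi\circ_+\eta.
\end{eqnarray*}
Applying $r_+$ to both sides and combining with the homomorphism identities yields exactly the relative Rota-Baxter identities \eqref{rRBO1}--\eqref{rRBO2} of weight $\lambda=1$ for $T=r_+$ with respect to the action $(\rho,\mu,\theta,\gamma)=(-L_\ast^{\ast}-R_\ast^{\ast},R_\ast^{\ast},L_\circ^{\ast}-R_\circ^{\ast},-R_\circ^{\ast})$ of $(P,\ast_P,\circ_P)$ on $(P^{\ast},\ast_+,\circ_+)$.

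There is essentially no obstacle: all the hard work has already been done in Theorem \ref{heart} (homomorphism property of $r_+$), in the preceding proposition (that the coregular action is indeed an action on $(P^{\ast},\ast_+,\circ_+)$), and in Proposition \ref{ast+} (that $(P^{\ast},\ast_+,\circ_+)$ is a pre-Poisson algebra). The proof is therefore just an algebraic matching of the two decompositions above against the defining equations \eqref{rRBO1}--\eqref{rRBO2} with $\lambda=1$.
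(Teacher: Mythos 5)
Your proposal is correct and follows essentially the same route as the paper's own proof: invoke Theorem \ref{heart} for the homomorphism property of $r_+$, rewrite $\ast_r$ and $\circ_r$ via $I=r_+-r_-$ as the coregular action terms plus $\ast_+,\circ_+$ (formulas \eqref{rRBO-Pr1}--\eqref{rRBO-Pr2}), and then read off the weight-$1$ relative Rota-Baxter identities. No gaps.
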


\begin{proof}
By Theorem \ref{heart}, since $(P,P_{r}^{\ast})$ is a quasi-triangular pre-Poisson bialgebra induced by $r$, $r_{+}:(P^{\ast},\ast_{r},\circ_{r}) \longrightarrow (P,\ast_P,\circ_P)$ is a pre-Poisson algebra homomorphism, which implies that
$$ r_+(\xi\ast_{r} \eta)=(r_+\xi)\ast_P (r_+\eta),\quad  r_+(\xi\circ_{r} \eta)=(r_+\xi)\circ_P (r_+\eta), \quad \forall \xi,\eta\in P^*.$$
Then by \eqref{rRBO-Pr1}-\eqref{rRBO-Pr2}, we have
\begin{eqnarray*}
(r_+\xi)\ast_P (r_+\eta)=r_+(\xi\ast_{r} \eta)
&=& r_+(-(L_*^*+R_*^*)(r_+ \xi)\eta+R_*^*(r_+ \eta) \xi+\xi\ast_+ \eta);\\
(r_+\xi)\circ_P (r_+\eta)=r_+(\xi\circ_{r} \eta)
&=& r_+((L_{\circ}^*-R_{\circ}^*)(r_+ \xi)\eta-R_{\circ}^*(r_+ \eta) \xi+\xi\circ_+ \eta),
\end{eqnarray*}
which implies that $r_+:(P^*,\ast_+,\circ_+) \to (P,\ast_P,\circ_P)$ is a relative Rota-Baxter operators of weight $1$ with respect to the action $(P^{\ast};-L_{\ast}^{\ast}-R_{\ast}^{\ast},R_{\ast}^{\ast},L_{\circ}^{\ast}-R_{\circ}^{\ast},
-R_{\circ}^{\ast})$.
\end{proof}

\begin{rmk}
In fact, if we define a new multiplication on $P^*$ as follows:
\begin{equation*}
\xi \ast_- \eta:=-(L^*_*(I\xi)+R^*_*(I\xi)) \eta, \quad \xi \circ_- \eta:=(L^*_{\circ}-R^*_{\circ})(I\xi) \eta, \quad \forall \xi,\eta\in P^*,
\end{equation*}
then $(P^*,\ast_-,\circ_-)$ is also a pre-Poisson algebra. We can also prove that $r_-:(P^*,\ast_-,\circ_-) \to  (P,\ast_P,\circ_P)$ is a relative Rota-Baxter operator of weight $1$ with respect to the action $(P^{\ast};-L_{\ast}^{\ast}-R_{\ast}^{\ast},R_{\ast}^{\ast},L_{\circ}^{\ast}-R_{\circ}^{\ast},
-R_{\circ}^{\ast})$.
\end{rmk}

Taking a symmetric $r\in P\otimes P$ into Theorem \ref{quasi-rRBO}, we can obtain that a triangular pre-Poisson bialgebra can give rise to a relative Rota-Baxter operators of weight $0$ (also called $\huaO$-operator), with respect to the coregular representation $(P^{\ast};-L_{\ast}^{\ast}-R_{\ast}^{\ast},R_{\ast}^{\ast},L_{\circ}^{\ast}-R_{\circ}^{\ast},
-R_{\circ}^{\ast})$.

\begin{cor}
Let $(P,P_r^*)$ be a triangular pre-Poisson bialgebra induced by a symmetric $r\in P\otimes P$. Then $r_+,r_-:P^* \to P$ are relative Rota-Baxter operators of weight $0$ with
respect to the coregular representation $(P^{\ast};-L_{\ast}^{\ast}-R_{\ast}^{\ast},R_{\ast}^{\ast},L_{\circ}^{\ast}-R_{\circ}^{\ast},
-R_{\circ}^{\ast})$.
\end{cor}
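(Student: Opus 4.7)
The plan is to derive this corollary as a direct specialization of Theorem \ref{quasi-rRBO} to the symmetric case. The guiding observation is that when $r\in P\otimes P$ is symmetric, the operator $I=r_{+}-r_{-}$ defined in \eqref{I} vanishes identically, which will collapse the weight-$1$ structure in Theorem \ref{quasi-rRBO} to a weight-$0$ structure.

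First I would verify that $r_{+}=r_{-}$ when $r$ is symmetric. Writing $r=\sum_{i}a_{i}\otimes b_{i}$ with $\sum_{i}a_{i}\otimes b_{i}=\sum_{i}b_{i}\otimes a_{i}$, one computes directly from the definitions $\langle r_{+}(\xi),\eta\rangle=r(\xi,\eta)=\langle\xi,r_{-}(\eta)\rangle$ that $r_{+}(\xi)=\sum_{i}\langle\xi,a_{i}\rangle b_{i}=\sum_{i}\langle\xi,b_{i}\rangle a_{i}=r_{-}(\xi)$ for all $\xi\in P^{\ast}$. Consequently $I=0$.

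Next, since $I=0$, the new multiplications $\ast_{+}$ and $\circ_{+}$ on $P^{\ast}$ introduced just before Proposition \ref{ast+} become identically zero, because by definition $\xi\ast_{+}\eta=-R^{\ast}_{\ast}(I\eta)\xi=0$ and $\xi\circ_{+}\eta=R^{\ast}_{\circ}(I\eta)\xi=0$. Moreover, since a symmetric $r$ is automatically $(L,R)$-invariant (as its skew-symmetric part is zero), the hypotheses of Theorem \ref{quasi-rRBO} are satisfied, and the action referred to in that theorem is precisely the coregular representation $(P^{\ast};-L_{\ast}^{\ast}-R_{\ast}^{\ast},R_{\ast}^{\ast},L_{\circ}^{\ast}-R_{\circ}^{\ast},-R_{\circ}^{\ast})$ of $(P,\ast_{P},\circ_{P})$ on the \emph{trivial} pre-Poisson algebra $(P^{\ast},\ast_{+},\circ_{+})=(P^{\ast},0,0)$.

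Applying Theorem \ref{quasi-rRBO}, $r_{+}$ satisfies the weight-$1$ relative Rota-Baxter identities \eqref{rRBO1}--\eqref{rRBO2}, but now the terms $\xi\ast_{+}\eta$ and $\xi\circ_{+}\eta$ inside $r_{+}$ vanish. The defining identities therefore reduce to
\begin{eqnarray*}
r_{+}(\xi)\ast_{P}r_{+}(\eta)&=&r_{+}\bigl(-(L^{\ast}_{\ast}+R^{\ast}_{\ast})(r_{+}\xi)\eta+R^{\ast}_{\ast}(r_{+}\eta)\xi\bigr),\\
r_{+}(\xi)\circ_{P}r_{+}(\eta)&=&r_{+}\bigl((L^{\ast}_{\circ}-R^{\ast}_{\circ})(r_{+}\xi)\eta-R^{\ast}_{\circ}(r_{+}\eta)\xi\bigr),
\end{eqnarray*}
which are exactly the weight-$0$ relative Rota-Baxter equations for $r_{+}:P^{\ast}\to P$ with respect to the coregular representation. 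Since $r_{-}=r_{+}$, the same assertion holds for $r_{-}$. There is no serious obstacle here: the only point requiring care is to recognize that setting $I=0$ does not alter the coregular part of the action in Theorem \ref{quasi-rRBO} but only kills the secondary multiplications $(\ast_{+},\circ_{+})$ on $P^{\ast}$, and this is immediate from \eqref{rRBO-Pr1}--\eqref{rRBO-Pr2}.
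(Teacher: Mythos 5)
Your proposal is correct and follows exactly the route the paper intends: the corollary is obtained by specializing Theorem \ref{quasi-rRBO} to symmetric $r$, observing that $r_+=r_-$ forces $I=0$, which kills the multiplications $\ast_+,\circ_+$ and collapses the weight-$1$ relative Rota-Baxter identities to the weight-$0$ ones for the coregular representation. The paper gives no separate proof beyond the remark preceding the corollary, and your write-up is a faithful (and slightly more detailed) version of that same argument.
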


\subsection{Factorizable pre-Poisson bialgebras}

\begin{defi}
A quasi-triangular pre-Poisson  bialgebra $(P,P_{r}^{\ast})$ is called {\bf factorizable} if the skew-symmetric part $a$ of $r$ is non-degenerate, which means that the linear map $I:P^{\ast} \longrightarrow P$ defined in \eqref{I} is a linear isomorphism of vector spaces.
\end{defi}

Consider the map
$$
P^*\stackrel{r_+\oplus r_-}{\longrightarrow}P\oplus P\stackrel{(x,y)\longmapsto x-y}{\longrightarrow}P.
$$
The following result justifies the terminology of a factorizable pre-Poisson bialgebra.
\begin{pro}
  Let  $(P,P_{r}^{\ast})$ be a factorizable pre-Poisson bialgebra. Then $\Img(r_+\oplus r_-)$ is a pre-Poisson subalgebra of the direct sum pre-Poisson algebra $P \oplus P$, which is isomorphic to the pre-Poisson algebra $(P^{\ast},\ast_{r},\circ_{r}).$ Moreover, any $x\in P$ has a unique decomposition
  \begin{equation}
    x=x_+-x_-,
  \end{equation}
  where $(x_+,x_-)\in \Img(r_+\oplus r_-)$.
\end{pro}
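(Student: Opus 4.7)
The plan is to derive the proposition from the two facts already in hand: first, that by Theorem \ref{heart} both $r_+$ and $r_-$ are pre-Poisson algebra homomorphisms from $(P^*,\ast_r,\circ_r)$ to $(P,\ast_P,\circ_P)$ (since factorizability forces $r$ to satisfy the pre-Poisson Yang-Baxter equation and to have an $(L,R)$-invariant skew-symmetric part); second, that the map $I=r_+-r_-$ is a linear isomorphism $P^*\to P$.

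First I would assemble $r_+$ and $r_-$ into a single linear map
\[
\Phi=r_+\oplus r_-:(P^*,\ast_r,\circ_r)\longrightarrow (P\oplus P,\ast_\oplus,\circ_\oplus),
\]
where $\ast_\oplus,\circ_\oplus$ denote the component-wise operations on $P\oplus P$. Because both components are pre-Poisson algebra homomorphisms, $\Phi$ is itself a homomorphism of pre-Poisson algebras; hence its image is a pre-Poisson subalgebra of $P\oplus P$. For the isomorphism claim I would verify injectivity of $\Phi$: if $\Phi(\xi)=(0,0)$ then $I(\xi)=r_+(\xi)-r_-(\xi)=0$, and factorizability of $(P,P_r^*)$ gives $\xi=0$. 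Therefore $\Phi$ is a pre-Poisson algebra isomorphism onto $\Img(r_+\oplus r_-)$.

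For the decomposition statement I would exploit that $I:P^*\to P$ is a bijection. Given $x\in P$, set $\xi:=I^{-1}(x)$ and define $x_+:=r_+(\xi)$, $x_-:=r_-(\xi)$; then $(x_+,x_-)\in\Img(\Phi)$ and $x_+-x_-=I(\xi)=x$. For uniqueness, if $x=y_+-y_-$ with $(y_+,y_-)=\Phi(\eta)$ for some $\eta\in P^*$, then $I(\eta)=y_+-y_-=x=I(\xi)$, so $\eta=\xi$ and thus $(y_+,y_-)=(x_+,x_-)$.

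There is no serious obstacle here: the proposition is essentially a packaging of Theorem \ref{heart} together with the definition of factorizability. The only point worth stating carefully is the equivalence between uniqueness of the decomposition $x=x_+-x_-$ inside $\Img(r_+\oplus r_-)$ and injectivity of $I$, which is why I would spell out both directions of the bijection $I$ in the last step rather than simply invoking it.
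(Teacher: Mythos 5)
Your proposal is correct and follows essentially the same route as the paper: it invokes Theorem \ref{heart} to see that $r_+$ and $r_-$ are pre-Poisson algebra homomorphisms (so the image of $r_+\oplus r_-$ is a subalgebra of $P\oplus P$), and uses the nondegeneracy of $I=r_+-r_-$ for both the isomorphism onto the image and the existence and uniqueness of the decomposition $x=x_+-x_-$. You merely spell out the injectivity and uniqueness steps a bit more explicitly than the paper does, which is fine.
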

\begin{proof}
By Theorem \ref{heart}, both $r_+$ and $r_-$ are pre-Poisson algebra homomorphisms. Therefore, $\Img(r_+\oplus r_-)$ is a pre-Poisson subalgebra of the direct sum pre-Poisson algebra $P \oplus P$. Since $I=r_+  -r_-:P^*\to P$ is nondegenerate, it follows that the pre-Poisson algebra $\Img(r_+\oplus r_-)$ is isomorphic to the pre-Poisson algebra $(P^{\ast},\ast_{r},\circ_{r}).$

  Since $I:P^*\to P$ is nondegenerate, we have
  $$
 r_+ I^{-1}(x)-r_-I^{-1}(x)= (r_+  -r_-)I^{-1}(x)=x,
  $$
  which implies that $x=x_+-x_-,$ where $x_+=r_+ I^{-1}(x)$ and $x_-=r_-I^{-1}(x)$. The uniqueness also follows from the fact that $I:P^*\to P$ is nondegenerate.
\end{proof}


Let $(P,P^*)$ be a pre-Poisson bialgebra. Define multiplications $\ast_{\frkd}$ and $\circ_{\frkd}$ on $\frkd=P\oplus P^*$ by
\begin{eqnarray}
(x,\xi)\ast_{\frkd}(y,\eta)&=&\Big(  x\ast_P y-(\huaL^*_*+\huaR^*_*)(\xi)y+\huaR^*_*(\eta)x,\xi\ast_{P^*} \eta-(L_*^*+R_*^*)(x)\eta+R_*^*(y)\xi \Big); \\
(x,\xi)\circ_{\frkd} (y,\eta)&=&\big(x\circ_{P} y+(\huaL^*_{\circ}-\huaR^*_{\circ})({\xi}) y-\huaR^{\ast}_{\circ}({\eta})x ,\xi\circ_{P^{\ast}}\eta+ (L^*_{\circ}-R^*_{\circ})(x)\eta-R^{\ast}_{\circ}(y)\xi \big),
\end{eqnarray}
for all $x,y\in P,\xi,\eta\in P^*$. 
By Theorem \ref{pre-Poisson-bialgs}, $(\frkd,\cdot_{\frkd})$ is a pre-Poisson algebra, which is called the {\bf  double} of the  pre-Poisson bialgebra $(P,P^*).$

\begin{thm}\label{prePoisson-double-fac}
Let $(P,P^*)$ be a pre-Poisson bialgebra. Suppose that $\{ e_1,e_2,\dots,e_n\}$ is a basis of $P$ and $\{ e_1^*,e_2^*,\dots,e_n^*\}$ is the dual basis of $P^*$. Then $r=\sum_{i} e_i\otimes e_i^* \in P\otimes P^*  \subset \frkd \otimes \frkd$ induces a pre-Poisson algebra structure on $\frkd^*=P^*\oplus P$ such that $(\frkd,\frkd_r^*)$ is a quasi-triangular pre-Poisson bialgebra.
Moreover, $(\frkd,\frkd_r^*)$ is also a factorizable pre-Poisson bialgebra.
\end{thm}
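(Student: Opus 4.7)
The plan has three pieces: identify the contractions $r_+, r_-, I$ explicitly, verify the pre-Poisson Yang--Baxter equation $Z(r)=S(r)=0$ on $\frkd$, and verify the $(L,R)$-invariance of the skew-symmetric part of $r$. Factorizability then falls out.

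First, using the natural pairing $\langle (a, \alpha), (\beta, b) \rangle_{\frkd} = \langle \beta, a \rangle + \langle \alpha, b \rangle$ between $\frkd=P\oplus P^*$ and $\frkd^* = P^* \oplus P$, a one-line calculation from the definition of $r_\pm$ gives $r_+(\beta, b) = (0, \beta)$ and $r_-(\beta, b) = (b, 0)$. Therefore $I = r_+ - r_- : \frkd^* \to \frkd$ is the map $(\beta, b) \mapsto (-b, \beta)$, which is manifestly a linear isomorphism. In particular, once the quasi-triangular property is established, factorizability is immediate.

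Second, the pre-Poisson algebra $\frkd$ restricts along $\ast_\frkd$ to the Zinbiel double of $(P, P^*, \ast_P, \Delta)$ and along $\circ_\frkd$ to the pre-Lie double of $(P, P^*, \circ_P, \delta)$. So $Z(r) = 0$ follows from the Zinbiel analogue of this theorem in \cite{Wang}, while $S(r) = 0$ follows from the pre-Lie analogue in \cite{WBLS}. For the invariance, Theorem \ref{pre-Poisson-bialgs} equips $\frkd$ with a quadratic pre-Poisson structure for the symplectic form $\omega$ of \eqref{natural-skew-sym-form}; a direct comparison shows that $I$ coincides, up to a global sign, with the inverse of the musical map $\omega^{\sharp} : \frkd \to \frkd^*,\ (x,\xi)\mapsto (\xi,-x)$. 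Via Proposition \ref{invariance2}, the required intertwining relations
\[
I \circ L_\ast^*(X) = -(L_\ast + R_\ast)(X) \circ I,\qquad I \circ L_\circ^*(X) = (L_\circ - R_\circ)(X) \circ I,\qquad \forall X\in\frkd,
\]
then amount exactly to the invariance identities \eqref{quadratic1}--\eqref{quadratic3} of $\omega$ repackaged as equalities of endomorphisms.

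Putting the three pieces together, Theorem \ref{equ-coboundary-pre-Poisson-bialg} together with Proposition \ref{pre-Poisson-inv} show that $r$ induces a coboundary pre-Poisson bialgebra $(\frkd, \frkd_r^*)$: conditions (i) and (ii) follow from $Z(r)=0$, $S(r)=0$ combined with the first two invariance clauses of Proposition \ref{pre-Poisson-inv}, while conditions (iii) and (iv) are the remaining two. By construction $(\frkd, \frkd_r^*)$ is quasi-triangular, and since $I$ is bijective it is also factorizable. The main obstacle is the bookkeeping of signs and transposes in identifying $I$ with (a sign twist of) $(\omega^{\sharp})^{-1}$ and in translating the quadratic invariance of $\omega$ into the dual-action form required by Proposition \ref{invariance2}; once this is done the four compatibility conditions of Theorem \ref{equ-coboundary-pre-Poisson-bialg} decouple cleanly into a Zinbiel half, a pre-Lie half, and two mixed invariance clauses already furnished by Proposition \ref{pre-Poisson-inv}.
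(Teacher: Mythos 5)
Your proposal is correct and follows the same overall strategy as the paper: identify $r_{\pm}$ and $I$ explicitly (your formulas $r_+(\beta,b)=(0,\beta)$, $r_-(\beta,b)=(b,0)$, $I(\beta,b)=(-b,\beta)$ agree with the paper's), obtain $Z(r)=0$ and $S(r)=0$ from the Zinbiel-double result in \cite{Wang} and the pre-Lie-double result in \cite{Bai,WBLS}, establish the $(L,R)$-invariance of the skew-symmetric part $a$, and conclude factorizability from the bijectivity of $I$. The one place where you genuinely diverge is the invariance step: the paper simply cites the $(\tilde L,\tilde R)$-invariance statements already proved in \cite{Wang} and \cite{WBLS} for the Zinbiel and pre-Lie doubles, whereas you derive it internally by observing that $I$ is exactly $\huaI_{\omega}$ for the canonical symplectic form \eqref{natural-skew-sym-form} on the Manin triple $\frkd=P\oplus P^*$ (in fact no global sign twist is needed: $\langle I^{-1}(x+\xi),y+\eta\rangle=\langle\xi,y\rangle-\langle\eta,x\rangle=\omega(x+\xi,y+\eta)$), and then converting the quadratic invariance \eqref{quadratic1}--\eqref{quadratic2} into the intertwining relations of Proposition \ref{invariance2}; note that the relations you write, $I\circ L_\ast^*(X)=-(L_\ast+R_\ast)(X)\circ I$ etc., are the adjoints (using $I^*=-I$) of the relations $L_\ast(X)\circ I=-I\circ(L_\ast^*+R_\ast^*)(X)$ that drop directly out of \eqref{quadratic1}, so the two forms are equivalent and the step closes. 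Your route is slightly more self-contained within this paper and makes transparent why the invariance is automatic for the double, at the cost of the sign/adjoint bookkeeping you flag; the paper's route is shorter by outsourcing to the cited theorems. The final assembly via Theorem \ref{equ-coboundary-pre-Poisson-bialg} and Proposition \ref{pre-Poisson-inv} is also consistent with how the paper's definitions are organized.
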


\begin{proof}
By \cite[Theorem 4.20]{Wang} and \cite[Theorem 2.18]{WBLS}, we deduce that $Z(r)=0$ and the skew-symmetric part $a=\frac{1}{2}(e_i\otimes e_i^*-e_i^* \otimes e_i)$ of $r$ satisfies
\begin{eqnarray*}
a^{\sharp}(\tilde{L}^*_*(x,\xi)(\eta,y))+(x,\xi)\cdot_{\frkd} a^{\sharp}(\eta,y)=0;\\
a^{\sharp}(\tilde{L}^*_{\circ}(x,\xi)(\eta,y))-\{(x,\xi), a^{\sharp}(\eta,y)\}_{\frkd}=0;
\end{eqnarray*}
where $\tilde{L}_*$ and $\tilde{L}_{\circ}$ are the left multiplication operators of the Zinbiel algebra $(\frkd,\ast_{\frkd})$ and the pre-Lie algebra $(\frkd,\circ_{\frkd})$ respectively.
Thus, skew-symmetric part $a=\frac{1}{2}(e_i\otimes e_i^*-e_i^* \otimes e_i)$ of $r$ is $(\tilde{L},\tilde{R})$-invariant.
By \cite[Theorem 5.5]{Bai} and \cite[Theorem 2.18]{WBLS}, we deduce that $r$ satisfies the pre-Poisson equation $Z(r)=S(r)=0$. Thus, $(\frkd,\frkd_r^*)$ is a quasi-triangular pre-Poisson bialgebra.

Moreover, note that $r_+,r_{-}:\frkd_r^* \to \frkd$ are given respectively by
$$ r_+(\xi,x)=(0,\xi),\quad r_{-}(\xi,x)=(x,0), \quad \forall x\in P,\xi\in P^*. $$
This implies that $I(\xi,x)=(-x,\xi),$ which means that the linear map $I:\frkd_r^* \to \frkd$ is a linear isomorphism of vector spaces. Therefore, $(\frkd,\frkd_r^*)$ is also a factorizable pre-Poisson bialgebra.
\end{proof}

\section{Rota-Baxter characterization of factorizable pre-Poisson bialgebras}\label{sec:RB-fac}

In this section, we first add Rota-Baxter operators on quadratic pre-Poisson algebras and symplectic Poisson algebras, and show that there is still a one-to-one correspondence between quadratic Rota-Baxter pre-Poisson algebras and Rota-Baxter symplectic Poisson algebras. Finally we show that there is a one-to-one correspondence between factorizable pre-Poisson bialgebras and  quadratic Rota-Baxter pre-Poisson algebras, as well as Rota-Baxter symplectic Poisson algebras. Consequently, a Rota-Baxter symplectic Poisson algebra of weight $\lambda$ admits a different phase space than the semidirect product given in Theorem \ref{phasespace-subadj}.

First we add Rota-Baxter operators on quadratic pre-Poisson algebras and symplectic Poisson algebras, and introduce the notions of quadratic Rota-Baxter pre-Poisson algebras and Rota-Baxter symplectic Poisson algebras.

\begin{defi}\label{quadratic RB pre-Poisson algebra}
    Let$(P,\ast_P,\circ_P,\omega)$  a quadratic pre-Poisson algebra and $(P,\ast_P,\circ_P,\huaB)$ be a Rota-Baxter pre-Poisson algebra of weight $\lambda$. Then the quintuple $(P,\ast_P,\circ_P,\huaB,\omega)$ is called a {\bf quadratic Rota-Baxter pre-Poisson algebra of weight $\lambda$} if the following compatibility condition holds:
    \begin{eqnarray}\label{compatibility condition}
        \omega(\huaB x,y)+   \omega(x,\huaB y)+\lambda \omega(x,y)=0, \quad \forall x,y \in P.
    \end{eqnarray}
\end{defi}
\begin{defi}
    Let $(P,\cdot_P,\{\cdot,\cdot\}_P,\omega)$ be a symplectic Poisson algebra and $(P,\cdot_P,\{\cdot,\cdot\}_P,\huaB)$  be a Rota-Baxter Poisson algebra of weight $\lambda$. Then $(P,\cdot_P,\{\cdot,\cdot\}_P,\huaB,\omega)$ is  called a {\bf Rota-Baxter symplectic Poisson algebra of weight $\lambda$} if $\huaB$ and $\omega$ satisfy the compatibility condition \eqref{compatibility condition}.
\end{defi}

We now show that the relation between quadratic pre-Poisson algebras and symplectic Poisson algebras given in Theorem \ref{phase-space-Manin-triple} also holds for quadratic Rota-Baxter pre-Poisson algebras and Rota-Baxter symplectic Poisson algebras of the same weight.

\begin{thm}\label{thm:pre-Poisson-RB}
Let $(P,\ast_P,\circ_P,\huaB,\omega)$  be a quadratic Rota-Baxter pre-Poisson algebra of weight $\lambda$. Then  $(P^c,\cdot_P,\{\cdot,\cdot\}_P,\huaB,\omega)$ is a Rota-Baxter symplectic Poisson algebra of weight $\lambda$, where $\cdot_P,\{\cdot,\cdot\}_P$ is given by \eqref{sub-adj-poisson}.

Conversely, let $(P,\cdot_P,\{\cdot,\cdot\}_P,\huaB,\omega)$ be a Rota-Baxter symplectic Poisson algebra of weight $\lambda$. Then $(P,\ast_P,\circ_P,\huaB,\omega)$ is a quadratic Rota-Baxter pre-Poisson algebra of weight $\lambda$, where the pre-Poisson algebra structure $(P,\ast_P,\circ_P)$ is given by \eqref{Poisson-pre1} and \eqref{Poisson-pre2}.
\end{thm}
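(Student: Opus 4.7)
The plan is to exploit the already-established correspondence between quadratic pre-Poisson algebras and symplectic Poisson algebras (from Proposition \ref{sym-Poisson-pre-Poisson} and Theorem \ref{phase-space-Manin-triple}) and graft the Rota-Baxter operator on top.

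For the forward direction, starting from a quadratic Rota-Baxter pre-Poisson algebra $(P,\ast_P,\circ_P,\huaB,\omega)$, I would first show that $(P^c,\cdot_P,\{\cdot,\cdot\}_P,\omega)$ is a symplectic Poisson algebra. The Lie cocycle \eqref{sym-1} follows by expanding $\omega(\{x,y\}_P,z)=\omega(x\circ_P y-y\circ_P x,z)$ and using \eqref{quadratic2}, which rewrites each such term as $\omega(\{x,z\}_P,y)-\omega(\{y,z\}_P,x)$; summing cyclically and invoking the skew-symmetry of $\{\cdot,\cdot\}_P$ expresses three times the cyclic sum as minus itself, forcing it to vanish. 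The Connes cocycle \eqref{sym-2} follows analogously from \eqref{quadratic1}. Next, since $\cdot_P$ and $\{\cdot,\cdot\}_P$ are the symmetrization and antisymmetrization of $\ast_P$ and $\circ_P$ by \eqref{sub-adj-poisson}, adding (resp.\ subtracting) the pre-Poisson Rota-Baxter identities \eqref{rRBO1}--\eqref{rRBO2} evaluated at $(x,y)$ and $(y,x)$ yields the Poisson Rota-Baxter identities \eqref{eq:Poisson-RB1}--\eqref{eq:Poisson-RB2}. The compatibility \eqref{compatibility condition} depends only on $\huaB$ and $\omega$, so it transfers verbatim.

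For the converse direction, Proposition \ref{sym-Poisson-pre-Poisson} already provides the compatible pre-Poisson structure via \eqref{Poisson-pre1}--\eqref{Poisson-pre2}. The invariance \eqref{quadratic1} is immediate from \eqref{Poisson-pre1} since $x\cdot_P z=x\ast_P z+z\ast_P x$, and \eqref{quadratic2} follows identically from \eqref{Poisson-pre2} using $\{x,z\}_P=x\circ_P z-z\circ_P x$. The only substantive task is to verify that $\huaB$ remains a Rota-Baxter operator of weight $\lambda$ with respect to $\ast_P$ and $\circ_P$.

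For the Rota-Baxter identity \eqref{rRBO1}, the plan is to use the non-degeneracy of $\omega$: pair the obstruction $\huaB(x)\ast_P \huaB(y)-\huaB(\huaB(x)\ast_P y+x\ast_P \huaB(y)+\lambda x\ast_P y)$ with an arbitrary $z$ under $\omega$. The left-hand contribution becomes $\omega(\huaB(y),\huaB(x)\cdot_P z)$ via \eqref{Poisson-pre1}. Each of the three right-hand terms is then unpacked by alternately applying the compatibility \eqref{compatibility condition} to move each outer $\huaB$ across $\omega$ (incurring a weight-$\lambda$ correction each time) and the Poisson Rota-Baxter identity \eqref{eq:Poisson-RB1} to collapse the nested $\huaB(\huaB(x)\cdot_P z)$ expression; a direct computation shows that the $\lambda$-corrections cancel in pairs, leaving precisely $\omega(\huaB(y),\huaB(x)\cdot_P z)$. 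The argument for \eqref{rRBO2} is entirely parallel, using \eqref{Poisson-pre2} and \eqref{eq:Poisson-RB2} in place of their commutative counterparts. The main obstacle is the careful bookkeeping of the weight-$\lambda$ terms generated by repeated passes through \eqref{compatibility condition}; the cancellations are forced once the unfolding is set up, but writing them out cleanly is the tedious step.
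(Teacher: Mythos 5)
Your proof is correct and follows the same overall route as the paper: establish the symplectic/quadratic correspondence via Proposition \ref{sym-Poisson-pre-Poisson} (and Theorem \ref{phase-space-Manin-triple}) and then check that the Rota-Baxter identities pass back and forth, the forward direction being the easy symmetrization/antisymmetrization of \eqref{rRBO1}--\eqref{rRBO2} together with the verbatim transfer of \eqref{compatibility condition}. The only substantive difference is in the converse's key step --- that $\huaB$ remains a Rota-Baxter operator for $\ast_P$ and $\circ_P$ --- where the paper simply cites \cite[Theorem 5.4]{Wang} and \cite[Theorem 3.5]{WBLS}, whereas you give a direct verification by pairing the obstruction with an arbitrary $z$ under $\omega$; this does close up as claimed (after moving each outer $\huaB$ across $\omega$ via \eqref{compatibility condition} and collapsing $\huaB(x)\cdot_P\huaB(z)$ via \eqref{eq:Poisson-RB1}, the six cross-terms cancel in pairs leaving exactly $\omega(\huaB y,\huaB(x)\cdot_P z)$), so your argument is self-contained where the paper's is not.
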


\begin{proof}
It is straightforward to deduce that if $\huaB:P \rightarrow P$ is a  Rota-Baxter operator of weight $\lambda$ on a pre-Poisson algebra $(P,\ast_P,\circ_P)$, then $\huaB$  is a  Rota-Baxter operator of weight $\lambda$ on the sub-adjacent Poisson algebra $(P^c,\cdot_P,\{\cdot,\cdot\}_P)$. Since  $(P,\ast_P,\circ_P,\omega)$ is a quadratic pre-Poisson algebra, by Theorem \ref{phase-space-Manin-triple}, $(P,\cdot_P,\{\cdot,\cdot\}_P,\omega)$ is a symplectic Poisson algebra. Therefore, if $(P,\ast_P,\circ_P,\huaB,\omega)$  is a quadratic Rota-Baxter pre-Poisson algebra of weight $\lambda$, then $(P,\cdot_P,\{\cdot,\cdot\}_P,\huaB,\omega)$ is a Rota-Baxter symplectic Poisson algebra of weight $\lambda$.

Conversely, let $(P,\cdot_P,\{\cdot,\cdot\}_P,\huaB,\omega)$ be a Rota-Baxter symplectic Poisson algebra of weight $\lambda$. First by Theorem \ref{phase-space-Manin-triple}, $(P,\ast_P,\circ_P,\omega)$ is a quadratic pre-Poisson algebra. By \cite[Theorem 5.4]{Wang} and \cite[Theorem 3.5]{WBLS}, we can obtain that $\huaB$ is a Rota-Baxter operator of weight $\lambda$ on the pre-Poisson algebra $(P,\ast_P,\circ_P)$. Therefore, $(P,\ast_P,\circ_P,\huaB,\omega)$ is a quadratic Rota-Baxter pre-Poisson algebra of weight $\lambda$.
\end{proof}

The following theorem shows that a factorizable pre-Poisson bialgebra gives a quadratic Rota-Baxter pre-Poisson algebra.

\begin{thm}\label{Factorizable-pre-Poisson-algebra}
Let $(P,P_{r}^{\ast})$ be a factorizable pre-Poisson bialgebra with $I=r_{+}-r_{-}$. Then $(P,\huaB,\omega_{I})$ is a quadratic Rota-Baxter pre-Poisson algebra of weight $\lambda$, where the linear map $\huaB:P \to P$ and $\omega_{I}\in \wedge^{2} P^{\ast}$ are defined respectively by
\begin{eqnarray}
\label{huaB} \huaB&=&\lambda r_{-}\circ I^{-1},\\
\label{omegaI}\omega_I(x,y)&=&\langle I^{-1}x,y \rangle, \quad \forall x,y \in P.
\end{eqnarray}
\end{thm}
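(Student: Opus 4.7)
The goal is to verify the four conditions of Definition \ref{quadratic RB pre-Poisson algebra}. Three tools suffice: the identity $I^{*} = -I$, the reformulation of $(L,R)$-invariance of the skew-symmetric part of $r$ given in Proposition \ref{invariance2}, and Theorem \ref{heart}, which says $r_{+}, r_{-} \colon (P^{*}, \ast_{r}, \circ_{r}) \to (P, \ast_{P}, \circ_{P})$ are both pre-Poisson algebra homomorphisms.

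I would first treat $\omega_{I}$. Nondegeneracy is immediate from the invertibility of $I$. For skew-symmetry, setting $\mu = I^{-1}x$ and $\nu = I^{-1}y$ gives $\omega_{I}(x,y) = \langle \mu, I\nu\rangle$, and $I^{*} = -I$ (read as skew-symmetry of the bilinear form represented by $I$) yields $\langle \mu, I\nu\rangle = -\langle \nu, I\mu\rangle = -\omega_{I}(y,x)$. For the invariance conditions \eqref{quadratic1} and \eqref{quadratic2}, composing Proposition \ref{invariance2} with $I^{-1}$ produces $L_{\ast}^{*}(x)\, I^{-1} = -I^{-1}(L_{\ast}(x) + R_{\ast}(x))$ and $L_{\circ}^{*}(x)\, I^{-1} = I^{-1}(L_{\circ}(x) - R_{\circ}(x))$. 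Expanding $\omega_{I}(x\cdot_{P} y, z) = \langle I^{-1}(x\cdot_{P} y), z\rangle$ via these operator identities and the definitions of $L_{\ast}^{*}, L_{\circ}^{*}$ yields $\omega_{I}(x\cdot_{P} y, z) = \omega_{I}(y, x\ast_{P} z)$ and $\omega_{I}(\{x,y\}_{P}, z) = -\omega_{I}(y, x\circ_{P} z)$; the desired conditions \eqref{quadratic1} and \eqref{quadratic2} then follow by applying skew-symmetry of $\omega_{I}$ and relabelling.

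Next, for the Rota-Baxter property of $\huaB = \lambda\, r_{-} I^{-1}$, assuming $\lambda \neq 0$ (the $\lambda = 0$ case being trivial since $\huaB = 0$), fix $x, y \in P$ and set $\xi = I^{-1}x$, $\eta = I^{-1}y$, so that $r_{-}\xi = \huaB(x)/\lambda$, $I\xi = x$, and $r_{+}\xi = \huaB(x)/\lambda + x$. Subtracting the homomorphism identity $r_{-}(\xi)\ast_{P} r_{-}(\eta) = r_{-}(\xi \ast_{r}\eta)$ from $r_{+}(\xi)\ast_{P} r_{+}(\eta) = r_{+}(\xi \ast_{r}\eta)$ yields $I(\xi\ast_{r}\eta) = r_{-}\xi \ast_{P} I\eta + I\xi \ast_{P} r_{-}\eta + I\xi \ast_{P} I\eta$, which rearranges to $\lambda(\xi\ast_{r}\eta) = I^{-1}\bigl(\huaB(x)\ast_{P} y + x \ast_{P}\huaB(y) + \lambda x \ast_{P} y\bigr)$. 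Multiplying $r_{-}(\xi)\ast_{P} r_{-}(\eta) = r_{-}(\xi\ast_{r}\eta)$ by $\lambda^{2}$ and substituting this expression produces the Rota-Baxter relation $\huaB(x)\ast_{P}\huaB(y) = \huaB(\huaB(x)\ast_{P} y + x\ast_{P}\huaB(y) + \lambda x\ast_{P} y)$; the $\circ_{P}$ case is entirely parallel.

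Finally, for the compatibility \eqref{compatibility condition}, the same substitution reduces $\omega_{I}(\huaB x, y) + \omega_{I}(x, \huaB y) + \lambda \omega_{I}(x, y)$ to $\lambda\bigl[\langle I^{-1}r_{-}\xi, I\eta\rangle + \langle \xi, r_{-}\eta\rangle + \langle \xi, I\eta\rangle\bigr]$. Applying $I^{*} = -I$ converts the first term to $-\langle r_{-}\xi, \eta\rangle$, and the defining relations $\langle r_{-}\xi, \eta\rangle = r(\eta,\xi)$, $\langle \xi, r_{-}\eta\rangle = r(\xi,\eta)$, and $\langle \xi, I\eta\rangle = r(\eta,\xi) - r(\xi,\eta)$ make the bracket telescope to zero. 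I anticipate the main friction to be bookkeeping the dualities in the Rota-Baxter verification, particularly for $\circ_{P}$ where the asymmetric signs of \eqref{Pr-1}--\eqref{Pr-2} must be tracked; the invariance and compatibility parts are mostly formal consequences of Proposition \ref{invariance2} and $I^{*} = -I$.
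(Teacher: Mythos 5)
Your proposal is correct and follows essentially the same route as the paper: the Rota--Baxter identity is extracted from the homomorphism property of $r_\pm$ (Theorem \ref{heart}) by subtracting the $r_-$ identity from the $r_+$ identity with $r_+=I+r_-$, the invariance conditions \eqref{quadratic1}--\eqref{quadratic2} come from Proposition \ref{invariance2} composed with $I^{-1}$, and the compatibility \eqref{compatibility condition} telescopes using $I^*=-I$ (equivalently $r_-^*=r_+$). The only cosmetic differences are that you state the invariance identities in the relabelled form $\omega_I(x\cdot_P y,z)=\omega_I(y,x\ast_P z)$, which is equivalent to \eqref{quadratic1} via skew-symmetry, and that you verify the compatibility pointwise via $r(\xi,\eta)$ rather than as the operator identity $-I^{-1}r_+I^{-1}+I^{-1}r_-I^{-1}+I^{-1}=0$.
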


\begin{proof}
Since $r_{+},r_{-}:(P^{\ast},\ast_{r},\cdot_{r}) \longrightarrow (P,\ast_P,\circ_P)$ are both pre-Poisson algebra homomorphisms, for all $x,y\in P$, we have
\begin{eqnarray}\label{RB operator in fac}
I(I^{-1}x \ast_{r} I^{-1}y )&=&(r_{+}-r_{-})(I^{-1}x \ast_{r} I^{-1}y )\\
\nonumber&=&((I+r_{-})I^{-1}x) \ast_P ((I+r_{-})I^{-1}y)- (r_{-}I^{-1}x)\ast_P (r_{-}I^{-1}y)\\
\nonumber&=& (r_{-}I^{-1}x) \ast_P y+x \ast_P (r_{-}I^{-1}y)+x \ast_P y.
\end{eqnarray}
Therefore, we have
\begin{eqnarray*}
 \huaB( \huaB(x) \ast_P y+x\ast_P  \huaB(y) +\lambda x\ast_P y)&=& \lambda^{2} r_{-}I^{-1}\Big((r_{-}I^{-1}x) \ast_P y+x \ast_P (r_{-}I^{-1}y)+x \ast_P y \Big)\\
&=& \lambda^{2}r_{-}(I^{-1}x \ast_{r} I^{-1}y )\\
&=& \lambda^{2}(r_{-}I^{-1}x \ast_P r_{-}I^{-1}y )\\
&=&  \huaB(x)\ast_P  \huaB(y),
\end{eqnarray*}
which implies that $\huaB$ is a Rota-Baxter operator of weight $\lambda$ on the Zinbiel algebra  $(P,\ast)$. Similarly, we have
$$ \huaB( \huaB(x) \circ_P y+x\circ_P  \huaB(y) +\lambda x\circ_P y)=\huaB(x)\circ_P  \huaB(y).$$
Thus, $\huaB$ is a Rota-Baxter operator of weight $\lambda$ on the pre-Poisson algebra $(P,\ast_P,\circ_P)$.

Next we show that $(P,\huaB,\omega_{I})$ is a quadratic Rota-Baxter pre-Poisson algebra. Since $I^{\ast}=-I$, we have
$$ \omega_{I}(x,y)=\langle I^{-1}x,y \rangle=-\langle x,I^{-1}y \rangle=-\omega_{I}(y,x),$$
 which means that $ \omega_{I}$ is skew-symmetric.

Since the skew-symmetric part $a$ of $r$ is $(L,R)$-invariant, by Proposition \ref{invariance2},
we have $I\circ L^*_*(x)=-(L_*(x)+R_*(x))\circ I$ and $I \circ L^*_{\circ}(x)=(L_\circ(x)-R_\circ(x))\circ I$. Thus, we have
\begin{eqnarray*}
\omega_{I}(x\ast_P y,z)-\omega_{I}(y,x\ast_P z+z\ast_P x)
&=&\langle I^{-1}(x\ast_P y),z  \rangle-\langle I^{-1}(y),x\ast_P z+z \ast_P x \rangle\\
&=&\langle I^{-1}\circ L_{\ast}(x)(y)+(L_*^*+R_*^*)(x) \circ I^{-1}(y),z \rangle\\
&=&0,
\end{eqnarray*}
and
\begin{eqnarray*}
\omega_{I}(x\circ_P y,z)+\omega_{I}(y,x\circ_P z-z\circ_P x)
&=&\langle I^{-1}(x\circ_P y),z  \rangle+\langle I^{-1}(y),x\circ_P z-z \circ_P x \rangle\\
&=&\langle I^{-1}\circ L_{\circ}(x)(y)-(L_{\circ}^*-R_{\circ}^*)(x) \circ I^{-1}(y),z \rangle\\
&=&0,
\end{eqnarray*}
which implies that \eqref{quadratic1}-\eqref{quadratic2} hold.

Moreover, by using $r^{\ast}_{-}=r_{+}$ and $I=r_{+}-r_{-}$, we have
\begin{eqnarray*}
\omega_{I}(x,\huaB y)+\omega_{I}(\huaB x,y)+\lambda \omega_{I}(x,y)&=&\lambda\Big(\langle I^{-1}(x), r_{-}I^{-1}(y) \rangle+\langle  I^{-1} r_{-} I^{-1}(x),y \rangle+\langle I^{-1}x,y \rangle \Big)\\
&=& \lambda\langle(-I^{-1} r_{+} I^{-1}+I^{-1} r_{-} I^{-1}+ I^{-1})(x),y \rangle\\
&=& 0,
\end{eqnarray*}
which implies that \eqref{compatibility condition} holds.

Therefore, $(P,\huaB,\omega_{I})$ is a quadratic Rota-Baxter pre-Poisson algebra of weight $\lambda$.
\end{proof}

It is straightforward to check that if $\huaB:P \to P$ is a Rota-Baxter operator of weight $\lambda$ on a pre-Poisson algebra $(P,\ast_P,\circ_P)$, then
\begin{eqnarray}
\widetilde{\huaB}:=-\lambda\Id-\huaB
\end{eqnarray}
is also a Rota-Baxter operator of weight $\lambda$.

\begin{cor}
Let $(P,P_{r}^{\ast})$ be a factorizable pre-Poisson bialgebra with $I=r_{+}-r_{-}$. Then $(P,\widetilde{\huaB},\omega_{I})$ is also a quadratic Rota-Baxter pre-Poisson algebra of weight $\lambda$, where $\widetilde{\huaB}=-\lambda\Id-\huaB=-\lambda r_{+}\circ I^{-1}$ and $\omega_{I}\in \wedge^{2} P^{\ast}$ is defined by
\eqref{omegaI}.
\end{cor}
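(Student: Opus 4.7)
The plan is to derive this corollary as a direct consequence of Theorem \ref{Factorizable-pre-Poisson-algebra} together with the general fact, recorded just before the corollary, that $\widetilde{\huaB}=-\lambda\Id-\huaB$ is a Rota-Baxter operator of weight $\lambda$ whenever $\huaB$ is. Thus the main work is to verify two things: first, the identity $-\lambda\Id-\huaB=-\lambda r_{+}\circ I^{-1}$; and second, the compatibility condition \eqref{compatibility condition} for the pair $(\widetilde{\huaB},\omega_{I})$. The quadratic part (skew-symmetry and the invariant conditions \eqref{quadratic1}--\eqref{quadratic2}) for $\omega_{I}$ has already been established in Theorem \ref{Factorizable-pre-Poisson-algebra} and needs no further comment.

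For the first identity, I would simply use the definition $I=r_{+}-r_{-}$. Composing on the right with $I^{-1}$ yields $r_{+}\circ I^{-1}-r_{-}\circ I^{-1}=\Id$, hence
\[
-\lambda r_{+}\circ I^{-1}=-\lambda\Id-\lambda r_{-}\circ I^{-1}=-\lambda\Id-\huaB=\widetilde{\huaB},
\]
which gives both formulas in the statement of the corollary.

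For the compatibility condition, I would expand using $\widetilde{\huaB}=-\lambda\Id-\huaB$ and the skew-symmetry of $\omega_{I}$:
\[
\omega_{I}(\widetilde{\huaB}x,y)+\omega_{I}(x,\widetilde{\huaB}y)+\lambda\omega_{I}(x,y)=-2\lambda\omega_{I}(x,y)-\bigl(\omega_{I}(\huaB x,y)+\omega_{I}(x,\huaB y)\bigr)+\lambda\omega_{I}(x,y).
\]
Applying the compatibility condition \eqref{compatibility condition} already established for $(\huaB,\omega_{I})$ in Theorem \ref{Factorizable-pre-Poisson-algebra}, namely $\omega_{I}(\huaB x,y)+\omega_{I}(x,\huaB y)=-\lambda\omega_{I}(x,y)$, the right-hand side collapses to $-2\lambda\omega_{I}(x,y)+\lambda\omega_{I}(x,y)+\lambda\omega_{I}(x,y)=0$, as required.

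No step should present an obstacle: the argument is purely a short algebraic consequence of Theorem \ref{Factorizable-pre-Poisson-algebra} and the elementary observation about $\widetilde{\huaB}$. If anything, the only point worth mentioning explicitly is that $\widetilde{\huaB}$ being Rota-Baxter of weight $\lambda$ on $(P,\ast_{P},\circ_{P})$ is a standard computation using $\huaB(x)\ast_{P}\huaB(y)=\huaB(\huaB(x)\ast_{P}y+x\ast_{P}\huaB(y)+\lambda x\ast_{P}y)$ (and analogously for $\circ_{P}$), which could be included briefly for completeness but truly belongs to the remark preceding the corollary.
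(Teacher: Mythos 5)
Your proposal is correct and follows essentially the same route as the paper: the paper's proof likewise reduces everything to the compatibility condition \eqref{compatibility condition} for $\widetilde{\huaB}$, expanding $\widetilde{\huaB}=-\lambda\Id-\huaB$ by bilinearity and invoking the compatibility of $\huaB$ established in Theorem \ref{Factorizable-pre-Poisson-algebra}, while the Rota-Baxter property of $\widetilde{\huaB}$ and the quadratic structure of $\omega_I$ are taken from the preceding remark and theorem exactly as you do. (Minor note: only bilinearity of $\omega_I$, not its skew-symmetry, is used in your expansion.)
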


\begin{proof}
Since $(P,P_{r}^{\ast})$ is a factorizable pre-Poisson bialgebra,  by Theorem \ref{Factorizable-pre-Poisson-algebra},  $\huaB$ satisfies the compatibility  condition \eqref{compatibility condition}.  Thus we have
\begin{eqnarray*}
&&\omega_{I}(x,\widetilde{\huaB}y)+\omega_{I}(\widetilde{\huaB}x,y)+\lambda \omega_{I}(x,y)\\
&=& -\omega_{I}(x,\huaB y)-\lambda\omega_{I}(x,y)-\omega_{I}(\huaB x,y)-\lambda \omega_{I}(x,y)+\lambda \omega_{I}(x,y)\\
&=& -\omega_{I}(x,\huaB y)-\omega_{I}(\huaB x,y)-\lambda\omega_{I}(x,y)\\
&=& 0.
\end{eqnarray*}
This implies that $(P,\widetilde{\huaB},\omega_{I})$ is  a quadratic Rota-Baxter pre-Poisson algebra of weight $\lambda$.
\end{proof}

\begin{cor}\label{pre-Poisson bialgebra isomorphism}
Let $(P,P_{r}^{\ast})$ be a factorizable pre-Poisson bialgebra with $I=r_{+}-r_{-}$ and $\huaB=\lambda r_{-}\circ I^{-1}$  the induced Rota-Baxter operator of weight $\lambda$. Then $((P_{\huaB},\ast_{\huaB},\circ_{\huaB}),(P^{\ast},\ast_{I},\circ_{I}))$ is a pre-Poisson bialgebra, where
\begin{eqnarray}
\xi \ast_{I} \eta:&=& -\lambda I^{-1}\Big((\frac{1}{\lambda} I \xi)\ast_P(\frac{1}{\lambda} I \eta)\Big);\\
\xi \circ_{I} \eta:&=& -\lambda I^{-1}\Big((\frac{1}{\lambda} I \xi)\circ_P(\frac{1}{\lambda} I \eta)\Big),\quad \forall \xi,\eta\in P^{\ast},~\lambda\neq 0.
\end{eqnarray}
Moreover, $\frac{1}{\lambda} I:P^{\ast} \longrightarrow P$ gives a pre-Poisson bialgebra isomorphism from $(P_{r}^{\ast},P)$ to
$((P_{\huaB},\ast_{\huaB},\circ_{\huaB}),$ $(P^{\ast},\ast_{I},\circ_{I}))$.
\end{cor}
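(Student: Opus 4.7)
The plan is to promote the linear bijection $\phi := \frac{1}{\lambda}I \colon P^{\ast} \to P$ (nondegenerate by factorizability, using $\lambda \neq 0$) into a pre-Poisson bialgebra isomorphism from the dual bialgebra $(P_{r}^{\ast}, P)$ of Example \ref{dual pre-Poisson alg} to the candidate $((P_{\huaB}, \ast_{\huaB}, \circ_{\huaB}), (P^{\ast}, \ast_{I}, \circ_{I}))$. By the definition of a pre-Poisson bialgebra homomorphism given just before Theorem \ref{pre-Poisson-bialgs}, it is enough to verify that $\phi$ intertwines the two ``primary'' pre-Poisson algebra structures and that the adjoint $\phi^{\ast}$ intertwines the two ``dual'' structures; all compatibility axioms in the target bialgebra then follow by transport from those in the source.

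The first step is to establish the key identities
\begin{eqnarray*}
x \ast_{\huaB} y &=& \lambda\, I\bigl(I^{-1}x \ast_{r} I^{-1}y\bigr),\\
x \circ_{\huaB} y &=& \lambda\, I\bigl(I^{-1}x \circ_{r} I^{-1}y\bigr),
\end{eqnarray*}
for all $x, y \in P$. The first identity is exactly equation \eqref{RB operator in fac} rewritten via $\huaB = \lambda r_{-} \circ I^{-1}$ and multiplied through by $\lambda$; the second follows from the parallel pre-Lie computation, which is legitimate because Theorem \ref{heart} makes $r_{+}, r_{-} \colon (P^{\ast}, \ast_{r}, \circ_{r}) \to (P, \ast_{P}, \circ_{P})$ simultaneous pre-Poisson algebra homomorphisms. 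Substituting $x = \phi(\xi)$ and $y = \phi(\eta)$ and using bilinearity of $\ast_{r}, \circ_{r}$ yields $\phi(\xi \ast_{r} \eta) = \phi(\xi) \ast_{\huaB} \phi(\eta)$ and the analogous equality for $\circ$, so $\phi$ is an isomorphism of pre-Poisson algebras from $(P^{\ast}, \ast_{r}, \circ_{r})$ onto $(P_{\huaB}, \ast_{\huaB}, \circ_{\huaB})$.

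The second step is to read the definition of $(\ast_{I}, \circ_{I})$ as transport of $(\ast_{P}, \circ_{P})$ along $\psi := -\lambda I^{-1} = -\phi^{-1} \colon P \to P^{\ast}$: a direct expansion of the formulas in the statement gives $\xi \ast_{I} \eta = \psi\bigl(\psi^{-1}(\xi) \ast_{P} \psi^{-1}(\eta)\bigr)$, and similarly for $\circ_{I}$. Hence $(P^{\ast}, \ast_{I}, \circ_{I})$ is automatically a pre-Poisson algebra and $\psi^{-1} = -\phi \colon (P^{\ast}, \ast_{I}, \circ_{I}) \to (P, \ast_{P}, \circ_{P})$ is an isomorphism. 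The identity $r_{+}^{\ast} = r_{-}$ forces $I^{\ast} = -I$, hence $\phi^{\ast} = -\phi$, so $\phi^{\ast}$ coincides with this dual pre-Poisson algebra isomorphism.

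Combining the two steps, $\phi \colon (P_{r}^{\ast}, P) \to ((P_{\huaB}, \ast_{\huaB}, \circ_{\huaB}), (P^{\ast}, \ast_{I}, \circ_{I}))$ is an isomorphism of primary pre-Poisson algebras whose adjoint $\phi^{\ast}$ is an isomorphism of the dual pre-Poisson algebras, which is exactly the characterization of a pre-Poisson bialgebra isomorphism. Therefore the target carries a pre-Poisson bialgebra structure (transported from that of $(P_{r}^{\ast}, P)$) and $\phi$ is the claimed isomorphism. The only subtle point throughout the argument is the sign $\phi^{\ast} = -\phi$: the overall factor $-\lambda$ in the definition of $\ast_{I}, \circ_{I}$ is chosen precisely so that the dual intertwiner is $\phi^{\ast}$ rather than the naive $\phi^{-1}$, which is what makes the pairing between $P$ and $P^{\ast}$ align the two bialgebra compatibilities.
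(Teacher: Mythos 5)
Your proposal is correct and follows essentially the same route as the paper: you verify that $\frac{1}{\lambda}I$ intertwines $(\ast_r,\circ_r)$ with $(\ast_{\huaB},\circ_{\huaB})$ via the identity \eqref{RB operator in fac} and its pre-Lie analogue, observe that $I^{\ast}=-I$ makes $(\frac{1}{\lambda}I)^{\ast}=-\frac{1}{\lambda}I$ the intertwiner of $(\ast_I,\circ_I)$ with $(\ast_P,\circ_P)$, and then transport the bialgebra structure from $(P_r^{\ast},P)$. This matches the paper's argument step for step, with your ``transport along $\psi=-\lambda I^{-1}$'' phrasing being just a repackaging of the paper's direct computation of $(\frac{1}{\lambda}I)^{\ast}(\xi\ast_I\eta)$.
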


\begin{proof}
By Example \ref{dual pre-Poisson alg}, we know that if $(P,P_{r}^{\ast})$ is a pre-Poisson bialgebra, then $(P_{r}^{\ast},P)$ is also a pre-Poisson bialgebra.
First we show that $\frac{1}{\lambda} I:P_{r}^{\ast} \longrightarrow P_{\huaB}$ is a pre-Poisson algebra isomorphism. In fact, for any $\xi,\eta \in A^{\ast},$ taking $x=I\xi\in A$ and $y=I\eta \in A$, by \eqref{RB operator in fac}, we have
\begin{eqnarray*}
\frac{1}{\lambda}I (\xi\ast_{r}\eta)=\frac{1}{\lambda^{2}}( PI\xi\ast_P I\eta+I\xi \ast_P PI\eta   +\lambda I\xi \ast_P I\eta)&=& (\frac{1}{\lambda}I\xi)\ast_{\huaB} (\frac{1}{\lambda}I\eta);\\
\frac{1}{\lambda}I (\xi\circ_{r}\eta)=\frac{1}{\lambda^{2}}( PI\xi \circ_P I\eta+I\xi \circ_P PI\eta   +\lambda I\xi \circ_P I\eta)&=& (\frac{1}{\lambda}I\xi) \circ_{\huaB} (\frac{1}{\lambda}I\eta),\\
\end{eqnarray*}
which implies that $\frac{1}{\lambda} I$ is a pre-Poisson algebra isomorphism.

Since $(\frac{1}{\lambda}I)^{\ast}=-\frac{1}{\lambda}I$, we have
\begin{eqnarray*}
(\frac{1}{\lambda}I)^{\ast} (\xi \ast_{I} \eta)=\Big((-\frac{1}{\lambda} I \xi)\ast_P (-\frac{1}{\lambda} I \xi)\Big)=(\frac{1}{\lambda}I)^{\ast}(\xi)\ast_P (\frac{1}{\lambda}I)^{\ast}(\eta);\\
(\frac{1}{\lambda}I)^{\ast} (\xi \circ_{I} \eta)=\Big((-\frac{1}{\lambda} I \xi)\circ_P (-\frac{1}{\lambda} I \xi)\Big)=(\frac{1}{\lambda}I)^{\ast}(\xi)\circ_P (\frac{1}{\lambda}I)^{\ast}(\eta),\\
\end{eqnarray*}
which means that $(\frac{1}{\lambda}I)^{\ast}=-\frac{1}{\lambda}I:(P^{\ast},\ast_{I},\circ_{I})\longrightarrow (P,\ast_P,\circ_P)$ is also a pre-Poisson algebra isomorphism. Since $(P_{r}^{\ast},P)$ is a pre-Poisson  bialgebra, the pair $((P_{\huaB},\ast_{\huaB},\circ_{\huaB}),$ $(P^{\ast},\ast_{I},\circ_{I}))$ is also a pre-Poisson  bialgebra. Obviously, $\frac{1}{\lambda} I$ is a pre-Poisson  bialgebra isomorphism.
\end{proof}

By Theorems \ref{thm:pre-Poisson-RB} and   \ref{Factorizable-pre-Poisson-algebra}, we have
\begin{cor}
    Let $(P,P_{r}^{\ast})$ be a factorizable pre-Poisson bialgebra with $I=r_{+}-r_{-}$. Then $(P^c,\cdot_{P},\{\cdot,\cdot\}_{P},\huaB,\omega_I)$ is a Rota-Baxter symplectic Poisson algebra of weight $\lambda$, where the linear map $\huaB:P\longrightarrow P$ and $\omega_{I}\in \wedge^{2} P^{\ast}$ are defined by \eqref{huaB} and \eqref{omegaI}, respectively.
\end{cor}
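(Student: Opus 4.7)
The plan is to obtain this corollary as a direct combination of the two cited theorems, with essentially no new computation required. The structure is: factorizable pre-Poisson bialgebra $\rightsquigarrow$ quadratic Rota-Baxter pre-Poisson algebra of weight $\lambda$ $\rightsquigarrow$ Rota-Baxter symplectic Poisson algebra of weight $\lambda$, via the sub-adjacent construction of Theorem \ref{thm:pre-Poisson-RB}.

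First, starting from the hypothesis that $(P,P_r^{\ast})$ is a factorizable pre-Poisson bialgebra with $I = r_{+}-r_{-}$, I would invoke Theorem \ref{Factorizable-pre-Poisson-algebra} directly: this produces a quadratic Rota-Baxter pre-Poisson algebra $(P,\ast_P,\circ_P,\huaB,\omega_I)$ of weight $\lambda$, where $\huaB = \lambda r_{-}\circ I^{-1}$ and $\omega_I(x,y) = \langle I^{-1}x,y\rangle$. Both the Rota-Baxter identity for $\huaB$ on $(P,\ast_P,\circ_P)$, the invariance conditions \eqref{quadratic1}--\eqref{quadratic2} for $\omega_I$, and the compatibility \eqref{compatibility condition} between $\huaB$ and $\omega_I$ are all guaranteed by that theorem.

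Next, I would feed this quadratic Rota-Baxter pre-Poisson algebra into Theorem \ref{thm:pre-Poisson-RB} (the ``forward'' direction). That theorem asserts precisely that the sub-adjacent Poisson structure $(\cdot_P,\{\cdot,\cdot\}_P)$ defined by \eqref{sub-adj-poisson}, together with the same $\huaB$ and the same $\omega_I$, constitutes a Rota-Baxter symplectic Poisson algebra of weight $\lambda$. Since $\omega_I$ and $\huaB$ are transported unchanged, the compatibility \eqref{compatibility condition} is preserved verbatim; the nondegeneracy and skew-symmetry of $\omega_I$ are inherited from the quadratic pre-Poisson data; and the Rota-Baxter identities on the commutative associative product and the Lie bracket follow because $\huaB$ commutes with the passage to sub-adjacent operations.

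There is no genuine obstacle here: the corollary is a formal composition of the two theorems, and the only thing to verify is that the two theorems apply with the same $\huaB$ and the same $\omega_I$, which they do by construction. If one wanted to be fully explicit, one could note that the symplectic conditions \eqref{sym-1}--\eqref{sym-2} on $(P^c,\cdot_P,\{\cdot,\cdot\}_P,\omega_I)$ are already embedded in Theorem \ref{phase-space-Manin-triple} (equivalently in Theorem \ref{thm:pre-Poisson-RB}), so no separate check is needed. Hence the proof is a one-line invocation of Theorems \ref{thm:pre-Poisson-RB} and \ref{Factorizable-pre-Poisson-algebra} applied in sequence.
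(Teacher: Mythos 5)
Your proposal is correct and is exactly the paper's argument: the paper derives this corollary with the single line ``By Theorems \ref{thm:pre-Poisson-RB} and \ref{Factorizable-pre-Poisson-algebra}, we have\dots'', i.e.\ first producing the quadratic Rota-Baxter pre-Poisson algebra $(P,\ast_P,\circ_P,\huaB,\omega_I)$ from the factorizable bialgebra and then passing to the sub-adjacent Poisson algebra. No further commentary is needed.
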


\begin{ex}\label{pre-Poisson double}{\rm
Let $(\frkd,\frkd_r^{\ast})$ be a factorizable pre-Poisson bialgebra given in Theorem \ref{prePoisson-double-fac}.
By Theorem \ref{Factorizable-pre-Poisson-algebra},  $(\frkd,\cdot_{\frkd},\huaB,\omega_I)$ is a quadratic Rota-Baxter pre-Poisson algebra of weight $\lambda$,  where
\begin{eqnarray*}
\huaB(x,\xi)&=&\lambda r_{-}\circ I^{-1} (x,\xi)=-\lambda(x,0),\\
\omega_I(x+\xi,y+\eta)&=& \langle I^{-1}(x+\xi),y+\eta \rangle=\langle -x+\xi,y+\eta \rangle=\xi(y)-\eta(x),
\end{eqnarray*}
for all $x,y\in P, \xi,\eta\in P^*$. Note that
$\omega_I$ is exactly the bilinear form given by {\rm(\ref{natural-skew-sym-form})}.
Furthermore, $(\frkd,[\cdot,\cdot]_{\frkd},\huaB,\omega_I)$ is a Rota-Baxter symplectic Poisson algebra of weight $\lambda$, where the Poisson algebra structure $(\cdot_{\frkd},\{\cdot,\cdot\}_{\frkd})$  on $\frkd$ is given by
\begin{eqnarray*}
(x+\xi)\cdot_{\frkd} (y+\eta)&=&x \cdot_P y-\huaL^*_*(\xi) y-\huaL^*_*(\eta) x+
\xi \cdot_{P^*} \eta-L^{\ast}_* (x)\eta-L^{\ast}_* (y) \xi; \\
\{x+\xi,y+\eta\}_{\frkd}&=&\{x, y\}_P+\huaL^*_{\circ}(\xi) y-\huaL^*_{\circ}(\eta) x+
\{\xi, \eta\}_{P^*}+L^{\ast}_{\circ} (x)\eta-L^{\ast}_{\circ} (y) \xi,
\end{eqnarray*}
} for all $x,y\in P,\xi,\eta\in P^*.$
\end{ex}

At the end of this section, we show that a quadratic Rota-Baxter pre-Poisson algebra of nonzero weight can  give rise to a factorizable pre-Poisson bialgebra.

\begin{thm}\label{thm:QRB-facpre-Poisson}
    Let $(P,\huaB,\omega)$ be a quadratic Rota-Baxter pre-Poisson algebra of weight $\lambda$ $(\lambda\neq 0),$ and
    $ \huaI_{\omega}: P^{\ast}\longrightarrow P $ the induced linear isomorphism given by $\langle\huaI_{\omega}^{-1}x,y \rangle :=\omega(x,y).$ Then $r \in P\otimes P $ defined by
\begin{equation}\label{eq:equiv}
  r_{+}:=\frac{1}{\lambda} (\huaB+\lambda
\Id)\circ \huaI_{\omega}:P^{\ast} \longrightarrow P, \quad
r_{+}(\xi)=r(\xi,\cdot), \quad \forall \xi\in P^{\ast}
\end{equation}
    satisfies the pre-Poisson Yang-Baxter equation $Z(r)=S(r)=0$ and thus gives rise to a factorizable pre-Poisson bialgebra $(P,P_{r}^{\ast})$.
\end{thm}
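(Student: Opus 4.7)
The plan is to reduce Theorem \ref{thm:QRB-facpre-Poisson} to the factorizable Zinbiel and pre-Lie bialgebra theorems of \cite{Wang} and \cite{WBLS}, bridged by Proposition \ref{pre-Poisson-inv}. The first move is to identify $r_{-}$ explicitly. Writing $\xi=\huaI_{\omega}^{-1}(x)$ and $\eta=\huaI_{\omega}^{-1}(y)$, the defining relation $\langle r_{+}(\xi),\eta\rangle=\langle\xi,r_{-}(\eta)\rangle$, combined with the compatibility $\omega(\huaB x,y)+\omega(x,\huaB y)+\lambda\omega(x,y)=0$ and the skew-symmetry of $\omega$, forces $r_{-}=\frac{1}{\lambda}\huaB\circ\huaI_{\omega}$ after one short calculation. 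Consequently $I=r_{+}-r_{-}=\huaI_{\omega}$ is a linear isomorphism, so once $r$ is shown to induce a quasi-triangular pre-Poisson bialgebra, factorizability is automatic.

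Next I would check that the skew-symmetric part $a=\frac{1}{2}\huaI_{\omega}$ of $r$ is $(L,R)$-invariant. By Proposition \ref{invariance2} this amounts to the two operator identities $\huaI_{\omega}\circ L_{\ast}^{\ast}(x)=-(L_{\ast}(x)+R_{\ast}(x))\circ\huaI_{\omega}$ and $\huaI_{\omega}\circ L_{\circ}^{\ast}(x)=(L_{\circ}(x)-R_{\circ}(x))\circ\huaI_{\omega}$; pairing both sides against an arbitrary vector in $P$ and using $\langle\huaI_{\omega}^{-1}z,w\rangle=\omega(z,w)$ translates them directly into the invariance axioms \eqref{quadratic1} and \eqref{quadratic2} of the quadratic pre-Poisson algebra, so they are immediate.

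The main inputs are then applied: the quadratic Rota-Baxter Zinbiel algebra $(P,\ast_{P},\huaB,\omega)$ yields $Z(r)=0$ via the Zinbiel factorization theorem of \cite{Wang}, and symmetrically $(P,\circ_{P},\huaB,\omega)$ yields $S(r)=0$ via the pre-Lie factorization theorem of \cite{WBLS}; moreover the induced Zinbiel and pre-Lie coproducts on $P^{\ast}$ are precisely those given by \eqref{coboundary-condition1} and \eqref{coboundary-condition2}. To conclude I would invoke Theorem \ref{equ-coboundary-pre-Poisson-bialg}: the two ``mixed'' conditions (iii) and (iv) there coincide verbatim with the last two identities of Proposition \ref{pre-Poisson-inv}, while conditions (i) and (ii) decompose into $Z(r)=0$, $S(r)=0$, and the first two identities of Proposition \ref{pre-Poisson-inv}; all four are therefore guaranteed by the previous steps. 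Hence $(P,P_{r}^{\ast})$ is a quasi-triangular pre-Poisson bialgebra, and since $I=\huaI_{\omega}$ is non-degenerate it is factorizable.

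The genuinely substantive content is the identification of $r_{-}$ and the translation of $(L,R)$-invariance into the symplectic invariance of $\omega$; everything else is verification that the ``mixed'' pre-Poisson compatibility conditions are exhausted by the Zinbiel and pre-Lie pieces plus Proposition \ref{pre-Poisson-inv}. The main obstacle, accordingly, is not a hard calculation but careful bookkeeping: one must make sure that the weight-$\lambda$ Rota-Baxter conventions and the sign conventions in \cite{Wang,WBLS} align with the ones used here, and that no residual cross terms beyond those handled by Proposition \ref{pre-Poisson-inv} appear in the pre-Poisson compatibility conditions.
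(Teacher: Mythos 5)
Your proposal is correct, but it routes the central step differently from the paper. You obtain $Z(r)=0$ and $S(r)=0$ by observing that a quadratic Rota-Baxter pre-Poisson algebra of weight $\lambda$ restricts to a quadratic Rota-Baxter Zinbiel algebra and a quadratic Rota-Baxter pre-Lie algebra with the same $\huaB$, $\omega$, and hence the same $r_+=\frac{1}{\lambda}(\huaB+\lambda\Id)\circ\huaI_{\omega}$, and then citing the factorization theorems of \cite{Wang} and \cite{WBLS}; afterwards you assemble the pre-Poisson bialgebra by checking conditions (i)--(iv) of Theorem \ref{equ-coboundary-pre-Poisson-bialg} via Proposition \ref{pre-Poisson-inv}. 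The paper instead argues in-house: after the same identification $r_-=\frac{1}{\lambda}\huaB\circ\huaI_{\omega}$ and the same translation of the quadratic conditions into $(L,R)$-invariance through Proposition \ref{invariance2}, it shows by direct computation that $\frac{1}{\lambda}\huaI_{\omega}$ is a pre-Poisson algebra isomorphism from $(P^{\ast},\ast_{r},\circ_{r})$ onto the descendent algebra $(P,\ast_{\huaB},\circ_{\huaB})$, deduces that $r_{+}$ and $r_{-}$ are pre-Poisson algebra homomorphisms into $(P,\ast_P,\circ_P)$, and invokes Theorem \ref{heart} to get $Z(r)=S(r)=0$. Your route is shorter but leans on the cited theorems having exactly the stated form (same $r_+$, same weight and sign conventions), which you rightly flag as the main risk; the paper's route is self-contained at the pre-Poisson level and, as a by-product, produces the explicit isomorphism $(P^{\ast},\ast_{r},\circ_{r})\cong P_{\huaB}$ that is reused in Corollary \ref{pre-Poisson bialgebra isomorphism}. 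One small remark: with the paper's definition of a quasi-triangular pre-Poisson bialgebra, your extra verification of conditions (i)--(iv) of Theorem \ref{equ-coboundary-pre-Poisson-bialg} is already guaranteed by Proposition \ref{pre-Poisson-inv} once $Z(r)=S(r)=0$ and the $(L,R)$-invariance are established, so it is not logically required, though it does no harm and your bookkeeping there (conditions (iii)--(iv) verbatim, conditions (i)--(ii) after specializing $y=a_i$ and summing) is accurate.
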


\begin{proof}
    Since $\omega$ is skew-symmetric, we have $\huaI_{\omega}=-\huaI_{\omega}^{\ast}$. By the fact that $\omega(x,\huaB y)+\omega(\huaB x,y)+\lambda \omega(x,y)=0$  for all $x,y\in P$, we have
    $$ \langle \huaI_{\omega}^{-1}x,\huaB (y)\rangle+\langle \huaI_{\omega}^{-1}\circ \huaB(x),y \rangle+\lambda\langle \huaI_{\omega}^{-1}x,y \rangle=0, $$
    which implies that $\huaB^{\ast}\circ \huaI_{\omega}^{-1}+\huaI_{\omega}^{-1}\circ \huaB+\lambda \huaI_{\omega}^{-1}=0, $ and then
    $$ \huaI_{\omega}\circ \huaB^{\ast}+\huaB \circ \huaI_{\omega}+\lambda \huaI_{\omega}=0.$$
Thus we have
    $$ r_{-}:=r_{+}^{\ast}=\frac{1}{\lambda}(-\huaI_{\omega}\circ \huaB^{\ast}-\lambda \huaI_{\omega})=\frac{1}{\lambda} \huaB \circ \huaI_{\omega}, $$
    and $\huaI_{\omega}=r_{+}-r_{-}$. Define multiplications $(\ast_r,\circ_r)$ on $P^{\ast}$ by
    \begin{eqnarray*}
    \xi \ast_r \eta&=& -(L_*^*+R_*^*)(r_+ \xi)\eta+R^*_* (r_{-} \eta)\xi;\\
    \xi \circ_r \eta&=& (L_{\circ}^*-R_{\circ}^*)(r_+ \xi)\eta-R^*_{\circ} (r_{-} \eta)\xi.
    \end{eqnarray*}
Now we show that the following equation holds:
\begin{eqnarray}
\label{pre-Poisson alg iso1}   \frac{1}{\lambda} \huaI_{\omega}(\xi \ast_{r} \eta)=(\frac{1}{\lambda}\huaI_{\omega}\xi)\ast_{\huaB} (\frac{1}{\lambda}\huaI_{\omega}\eta);\\
\label{pre-Poisson alg iso2}   \frac{1}{\lambda} \huaI_{\omega}(\xi \circ_{r} \eta)=(\frac{1}{\lambda}\huaI_{\omega}\xi)\circ_{\huaB} (\frac{1}{\lambda}\huaI_{\omega}\eta).
\end{eqnarray}
By the fact that $\omega(x\ast_P y,z)-\omega(y,x \ast_P z+z \ast_P x)=0$ and $\omega(x\circ_P y,z)+\omega(y,x \circ_P z-z \circ_P x)=0$ for all $x,y,z\in P$, we have
\begin{eqnarray*}
\langle \huaI_{\omega}^{-1}\circ L_* (x)y,z \rangle+\langle (L_*^*+R_*^*)(x)\circ \huaI_{\omega}^{-1}(y),z \rangle=0;\\
\langle \huaI_{\omega}^{-1}\circ L_{\circ}(x)y,z \rangle-\langle (L_{\circ}^*-R_{\circ}^*)(x)\circ \huaI_{\omega}^{-1}(y),z \rangle=0,
\end{eqnarray*}
which implies that  $ L_\ast (x)\circ \huaI_{\omega}=-\huaI_{\omega} \circ (L_*^*+R_*^*)(x)$ and
$ L_{\circ} (x)\circ \huaI_{\omega}=\huaI_{\omega} \circ (L_{\circ}^*-R_{\circ}^*)(x).$
Therefore, by Proposition
\ref{invariance2}, the skew-symmetric part $a$ of $r$ is $(L,R)$-invariant.

On the one hand, for all $\xi,\eta\in P^*$, we have
\begin{eqnarray*}
    \huaI_{\omega}(\xi\ast_r \eta)&=&\huaI_{\omega}\Big(-(L_*^*+R_*^*)(r_+ \xi)\eta+R_*^*(r_{-} \eta) \xi \Big)\\
    &=& L_* (r_{+} \xi) \circ \huaI_{\omega}(\eta)+\huaI_{\omega}\circ R_*^{\ast}(r_{-} \eta)(\xi)\\
    &=& L_* (r_{+} \xi) \circ \huaI_{\omega}(\eta)+R_* (r_{-} \eta) \circ \huaI_{\omega}(\xi)\\
    &=&r_+(\xi)\ast_P (r_{+}(\eta)-r_{-}(\eta))+(r_{+}(\xi)-r_{-}(\xi))\ast_P r_-(\eta) \\
    &=& r_{+}(\xi)\ast_P r_{+}(\eta)-r_{-}(\xi)\ast_P r_{-}(\eta).
\end{eqnarray*}
On the other hand, we have
\begin{eqnarray*}
    &&(\huaI_{\omega}\xi)\ast_{\huaB}(\huaI_{\omega}\eta)\\
    &=&\huaB(\huaI_{\omega}\xi)\ast_P(\huaI_{\omega}\eta)+(\huaI_{\omega}\xi)\ast_P \huaB(\huaI_{\omega}\eta)
    +\lambda(\huaI_{\omega}\xi)\ast_P(\huaI_{\omega}\eta)\\
    &=& \lambda (r_{-}\xi)\ast_P(r_{+}\eta-r_{-}\eta)+\lambda(r_{+}\xi-r_{-}\xi)\ast_P (r_{-}\eta)+\lambda(r_{+}\xi-r_{-}\xi)\ast_P(r_{+}\eta-r_{-}\eta)\\
    &=& \lambda (r_{+}\xi)\ast_P (r_{+}\eta)-\lambda (r_{-}\xi)\ast_P (r_{-}\eta),
\end{eqnarray*}
which implies that  \eqref{pre-Poisson alg iso1} holds. Similarly, we can also verify \eqref{pre-Poisson alg iso2} holds. Thus $(P^*,\ast_{r},\circ_r)$ is a pre-Poisson algebra and $\frac{1}{\lambda} \huaI_{\omega}$ is a pre-Poisson algebra isomorphism from $(P^*,\ast_{r},\circ_r)$ to $(P,\ast_{\huaB},\circ_{\huaB})$.

 Finally, by the fact that $\huaB+\lambda\Id,\huaB:(P,\ast_{\huaB},\circ_{\huaB}) \longrightarrow (P,\ast_P,\circ_P)$ are both pre-Poisson algebra homomorphisms, we deduce that
$$   r_{+}:=\frac{1}{\lambda} (\huaB+\lambda \Id)\circ \huaI_{\omega},  \quad r_{-}:=\frac{1}{\lambda} \huaB\circ \huaI_{\omega}:(P^*,\ast_{r},\circ_r) \longrightarrow (P,\ast_P,\circ_P)$$
    are both pre-Poisson algebra homomorphisms. Therefore, by Theorem \ref{heart}, we have $Z(r)=S(r)=0$ and $(P,P_{r}^{\ast})$ is a quasi-triangular pre-Poisson bialgebra. Since $\huaI_{\omega} = r_{+}-r_{-}$  is an isomorphism, the pre-Poisson bialgebra $(P,P_{r}^{\ast})$ is factorizable.
\end{proof}

\begin{cor}\label{new-phase-space}
Let $(P,\cdot_P,\{\cdot,\cdot\}_P,\huaB,\omega)$ be a Rota-Baxter symplectic Poisson algebra of weight $\lambda$. Then $(P\oplus P^*,\cdot,\{\cdot,\cdot\},\omega)$ is a phase space of the Poisson algebra $(P,\cdot_P,\{\cdot,\cdot\}_P)$, where the Poisson structure $(\cdot,\{\cdot,\cdot\})$ on $P\oplus P^*$ given by
\begin{eqnarray*}
(x+\xi)\cdot(y+\eta)&=&x\cdot_P y-\huaL^*_*(\xi)y-\huaL^*_*(\eta)x+\xi\cdot_{P^*} \eta-L^*_*(x)\eta-L^*_*(y)\xi;\\
\{x+\xi,y+\eta\}&=&\{x, y\}_P+\huaL^*_{\circ}(\xi)y-\huaL^*_{\circ}(\eta)x+\{\xi,\eta\}_{P^*} +L^*_{\circ}(x)\eta-L^*_{\circ}(y)\xi.
\end{eqnarray*}
For all $x,y,z\in P$ and $\xi,\eta,\zeta\in P^*$.
\end{cor}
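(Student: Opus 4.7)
The plan is to compose the three main bridges established earlier in the paper: Theorem \ref{thm:pre-Poisson-RB} (which transfers between Rota-Baxter symplectic Poisson algebras and quadratic Rota-Baxter pre-Poisson algebras), Theorem \ref{thm:QRB-facpre-Poisson} (which produces a factorizable pre-Poisson bialgebra from a quadratic Rota-Baxter pre-Poisson algebra of nonzero weight), and Theorem \ref{pre-Poisson-bialgs} (which identifies pre-Poisson bialgebras with phase spaces of the sub-adjacent Poisson algebras). Assume first that $\lambda\neq 0$; the reader can interpret the $\lambda=0$ case via the semidirect-product phase space of Theorem \ref{phasespace-subadj} applied to the compatible pre-Poisson algebra coming from Proposition \ref{sym-Poisson-pre-Poisson}.

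First I would invoke Theorem \ref{thm:pre-Poisson-RB}: the hypothesis that $(P,\cdot_P,\{\cdot,\cdot\}_P,\huaB,\omega)$ is a Rota-Baxter symplectic Poisson algebra of weight $\lambda$ promotes $P$ to a quadratic Rota-Baxter pre-Poisson algebra $(P,\ast_P,\circ_P,\huaB,\omega)$, where $\ast_P$ and $\circ_P$ are the compatible operations defined by \eqref{Poisson-pre1} and \eqref{Poisson-pre2}. This recasting makes available all the bialgebra machinery developed in Sections \ref{sec:pre-Poisson-bialgebras}--\ref{sec:RB-fac}.

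Next I would apply Theorem \ref{thm:QRB-facpre-Poisson} to produce $r\in P\otimes P$ with $r_+=\frac{1}{\lambda}(\huaB+\lambda\Id)\circ \huaI_\omega$ satisfying the pre-Poisson Yang-Baxter equation, so that $(P,P_r^\ast)$ is a factorizable pre-Poisson bialgebra. Then Theorem \ref{pre-Poisson-bialgs} (equivalence of (i) and (iii)) guarantees that the sub-adjacent Poisson algebra of the double pre-Poisson algebra on $P\oplus P^\ast$ is a phase space of $(P^c,\cdot_P,\{\cdot,\cdot\}_P)=(P,\cdot_P,\{\cdot,\cdot\}_P)$ with respect to the natural pairing in \eqref{natural-skew-sym-form}.

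Finally I would verify that the explicit Poisson structure displayed in the corollary is precisely the one obtained by taking the anticommutator and commutator of the double pre-Poisson operations $\ast_\bowtie$ and $\circ_\bowtie$ constructed in the proof of Theorem \ref{pre-Poisson-bialgs}. Writing $(\ast_{P^\ast},\circ_{P^\ast})=(\Delta^\ast,\delta^\ast)$ for the pre-Poisson operations on $P^\ast$ induced by $r$ and unwinding \eqref{sub-adj-poisson}, the mixed $P$-$P^\ast$ contributions assemble into $-\huaL_\ast^\ast(\xi)y-\huaL_\ast^\ast(\eta)x - L_\ast^\ast(x)\eta - L_\ast^\ast(y)\xi$ for the commutative product and into $\huaL_\circ^\ast(\xi)y-\huaL_\circ^\ast(\eta)x + L_\circ^\ast(x)\eta - L_\circ^\ast(y)\xi$ for the Lie bracket, reproducing the stated formulas. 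The main subtlety is the weight constraint: Theorem \ref{thm:QRB-facpre-Poisson} genuinely requires $\lambda\neq 0$, and in the degenerate case $\lambda=0$ one cannot invoke factorizability, so the phase space must instead be produced directly via Theorem \ref{phasespace-subadj} applied to the compatible pre-Poisson structure of Proposition \ref{sym-Poisson-pre-Poisson}.
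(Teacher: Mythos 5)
Your proposal is correct and follows essentially the same chain as the paper: Theorem \ref{thm:pre-Poisson-RB} to pass to a quadratic Rota--Baxter pre-Poisson algebra, then the Manin-triple/bialgebra equivalences to obtain the phase space with the stated formulas. You are in fact slightly more careful than the paper's own proof, which merely asserts that $(P\oplus P^*,P,P^*,\omega)$ is ``obviously'' a Manin triple: your explicit detour through Theorem \ref{thm:QRB-facpre-Poisson} supplies the pre-Poisson structure on $P^*$ needed to make sense of $\cdot_{P^*}$ and $\{\cdot,\cdot\}_{P^*}$, and your observation that this step forces $\lambda\neq 0$ (with the semidirect-product fallback of Theorem \ref{phasespace-subadj} when $\lambda=0$) is a legitimate point the paper glosses over.
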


\begin{proof}
By Theorem \ref{thm:pre-Poisson-RB}, if $(P,\cdot_P,\{\cdot,\cdot\}_P,\huaB,\omega)$ is a Rota-Baxter symplectic Poisson algebra of weight $\lambda$, then $(P,\ast_P,\circ_P,\huaB,\omega)$ is a quadratic Rota-Baxter pre-Poisson algebra of weight $\lambda$, where the compatible pre-Poisson algebra structure $(P,\ast_P,\circ_P)$ is given by \eqref{Poisson-pre1} and \eqref{Poisson-pre2}. It is obvious that $(P\oplus P^*,P,P^*,\omega)$ is a Manin triple of pre-Poisson algebras. By Theorem \ref{phase-space-Manin-triple}, $(P\oplus P^*,\cdot,\{\cdot,\cdot\},\omega)$ is a phase space of the Poisson algebra $(P,\cdot_P,\{\cdot,\cdot\}_P)$, where the Poisson structure $(\cdot,\{\cdot,\cdot\})$ on $P\oplus P^*$ given by
\begin{eqnarray*}
(x+\xi)\cdot(y+\eta)&=&x\cdot_P y-\huaL^*_*(\xi)y-\huaL^*_*(\eta)x+\xi\cdot_{P^*} \eta-L^*_*(x)\eta-L^*_*(y)\xi;\\
\{x+\xi,y+\eta\}&=&\{x, y\}_P+\huaL^*_{\circ}(\xi)y-\huaL^*_{\circ}(\eta)x+\{\xi,\eta\}_{P^*} +L^*_{\circ}(x)\eta-L^*_{\circ}(y)\xi.
\end{eqnarray*}
For all $x,y,z\in P$ and $\xi,\eta,\zeta\in P^*$.
\end{proof}

\begin{rmk}
By Theorem \ref{phasespace-subadj}, a symplectic Poisson algebra $(P,\cdot_P,\{\cdot,\cdot\}_P,\omega)$ can give rise to a phase space $(P\ltimes_{-L^*_*,L^*_{\circ}} P^*,\cdot_{\ltimes},\{\cdot,\cdot\}_{\ltimes},\omega)$ of the Poisson algebra $(P,\cdot_P,\{\cdot,\cdot\}_P)$, which is different from the one in Corollary \ref{new-phase-space} according to the different Poisson algebra structures on $P\oplus P^*$.
\end{rmk}

\vspace{2mm}
\noindent
{\bf Acknowledgements.}    Supported by NSFC (12471060, W2412041).

\end{document}